\theoremstyle{definition}
\newtheorem{theorem}{Theorem}
\newtheorem{assumption}{Assumption}
\newtheorem{proposition}{Proposition}
\newtheorem{definition}{Definition}
\newtheorem{corollary}{Corollary}
\newtheorem{lemma}{Lemma}
\newtheorem{algo}{Estimator}
\DeclareMathOperator*{\argmin}{\arg\!\min}
\DeclareMathOperator*{\argmax}{\arg\!\max}
\newcommand{\blind}{1}
\def\spacingset#1{\renewcommand{\baselinestretch}%
{#1}\small\normalsize} \spacingset{1}
\newcommand{\kibitz}[2]{\ifnum\Comments=1{\color{#1}{#2}}\fi}
\newcommand{\E}{\mathbb{E}}
\newcommand{\R}{\mathbb{R}}
\newcommand{\mcR}{{\mathcal R}}
\newcommand{\mcA}{{\mathcal A}}
\newcommand{\mcH}{{\mathcal H}}
\newcommand{\mcX}{{\mathcal X}}
\newcommand{\mcG}{{\mathcal G}}
\newcommand{\mcF}{{\mathcal F}}
\newcommand{\ba}{\begin{array}}
\newcommand{\ea}{\end{array}}
\newcommand{\bs}{\begin{align}\begin{split}\nonumber}
\newcommand{\bsnumber}{\begin{align}\begin{split}}
\newcommand{\es}{\end{split}\end{align}}
\newcommand{\mcZ}{{\mathcal Z}}
\newcommand{\spanF}{\ensuremath{\text{span}}}
\newcommand{\ldot}[2]{\langle #1, #2 \rangle}
\newcommand{\Var}{\ensuremath{\text{Var}}}
\newcommand{\sign}{\ensuremath{\mathtt{sign}}}
\def\balign#1\ealign{\begin{align}#1\end{align}}
\def\balignat#1\ealign{\begin{alignat}#1\end{alignat}}
\def\bitemize#1\eitemize{\begin{itemize}#1\end{itemize}}
\def\benumerate#1\eenumerate{\begin{enumerate}#1\end{enumerate}}
\newenvironment{talign}
 {\csname align\endcsname}
 {\endalign}
\def\balignt#1\ealignt{\begin{talign}#1\end{talign}}%
\newcommand{\splin}{\ensuremath{\text{splin}}}
\newcommand{\nnet}{\ensuremath{\text{nnet}}}
\newcommand{\rkhs}{\ensuremath{\text{rkhs}}}
\title{\bf Adversarial Estimation of Riesz Representers}
\author[1]{Victor Chernozhukov}
\author[1]{Whitney K. Newey}
\author[2]{Rahul Singh}
\author[3]{Vasilis Syrgkanis}
\affil[1]{Department of Economics, Massachussetts Institute of Technology, Cambridge, MA, USA 02142}
\affil[2]{Society of Fellows and Department of Economics, Harvard University, Cambridge, MA, USA 02138}
\affil[3]{Department of Management Science and Engineering, Stanford University, Stanford, CA, USA 94305}
\date{Original draft: December 2020. This draft: April 2024.}
\begin{document}
\maketitle

\bigskip
\begin{abstract}
Many causal parameters are linear functionals of an underlying regression. 
The Riesz representer is a key component in the asymptotic variance of a semiparametrically estimated linear functional. 
We propose an adversarial framework to estimate the Riesz representer using general function spaces. 
We prove a nonasymptotic mean square rate in terms of an abstract quantity called the critical radius, then specialize it for neural networks, random forests, and reproducing kernel Hilbert spaces as leading cases. 
Our estimators are highly compatible with targeted and debiased machine learning with sample splitting; our guarantees directly verify general conditions for inference that allow mis-specification. 
We also use our guarantees to prove inference without sample splitting, based on stability or complexity.
Our estimators achieve nominal coverage in highly nonlinear simulations where some previous methods break down. 
They shed new light on the heterogeneous effects of matching grants. 
\end{abstract}

\noindent%
{\it Keywords:} Neural network, random forest, reproducing kernel Hilbert space, critical radius, semiparametric efficiency.

\if1\blind
{
\medskip

\noindent%
{\it Code:} \url{https://colab.research.google.com/github/vsyrgkanis/adversarial_reisz/blob/master/Results.ipynb}
} \fi

\vfill
\newpage

\spacingset{1.5} 

\section{Introduction and related work}\label{sec:intro}

Many parameters traditionally studied in statistics and econometrics are functionals, i.e. scalar summaries, of an underlying regression function \cite{hasminskii1979nonparametric,pfanzagl1982lecture,klaassen1987consistent,robinson1988root,powell1989semiparametric,van1991differentiable,bickel1993efficient}.
For example, $g_0(X)=\E(Y|X)$ may be a regression function in a space $\mcG$, and the parameter of interest $\theta(g_0)$ may be the average policy effect of transporting the covariates according to $t(X)$, so that the functional is
$
\theta:g\mapsto \E[g\{t(X)\}-g(X)]$ \cite{stock1989nonparametric}.
Under regularity conditions, there exists a function $a_0$ called the Riesz representer that represents $\theta$ in the sense that
$\theta(g)=\E\{a_0(X)g(X)\}$ for all $g\in \mcG.$\footnote{The representer $a_0$ exists when $\theta$ is a bounded functional, which is necessary for regular estimation.}
For the average policy effect, it is the density ratio $a_0(X)=f_t(X)/f(X)-1$, where $f_t(X)$ is the density of $t(X)$ and $f(X)$ is the density of $X$. More generally, for any bounded linear functional $\theta(g)=\E\{m(Z;g)\}$, where $g:\mcX\rightarrow \R$ and $\mcX\subset \mcZ$, there exists a Riesz representer $a_0$ such the functional $\theta(g)$ can be evaluated by simply taking the inner product between $a_0$ and $g$.

Estimating the Riesz representer of a linear functional is a critical building block in a variety of tasks. First and foremost, $a_0$ appears in the asymptotic variance of any semiparametrically efficient estimator of the parameter $\theta(g_0)$, so to construct an analytic confidence interval, we require an estimator $\hat{a}$ \cite{newey1994asymptotic}. Second, because $a_0$ appears in the asymptotic variance, $\hat{a}$ can be directly incorporated into estimation of the parameter $\theta(g_0)$ to ensure semiparametric efficiency \cite{robins1995analysis,robins1995semiparametric,van2006targeted,zheng2011cross,belloni2012sparse,luedtke2016statistical,belloni2017program,chernozhukov2018double,chernozhukov2022locally}. Third, $a_0$ may admit a structural interpretation in its own right. In asset pricing, $a_0$ is the stochastic discount factor, and $\hat{a}$ is used to price financial derivatives \cite{hansen1997assessing,ait1998nonparametric,bansal1993no,chen2019deep}.

The Riesz representer may be difficult to estimate. Even for the average policy effect, its closed form involves a density ratio. A recent literature explores the possibility of directly estimating the Riesz representer, without estimating its components or even knowing its functional form, since the Riesz representer is directly identified from data \cite{robins2007comment,avagyan2021high,chernozhukov2018global,hirshberg2019augmented,chernozhukov2018learning,smucler2019unifying}, generalizing what is known about balancing weights \cite{hainmueller2012entropy,imai2014covariate,zubizarreta2015stable,chan2016globally,athey2018approximate,wong2018kernel,zhao2019covariate,kallus2020generalized,hirshberg2019kernel}.
This literature proposes estimators $\hat{a}$ in specific sparse linear or RKHS function spaces, and analyzes them on a case-by-case basis. In prior work, the incorporation of directly estimated $\hat{a}$ into the tasks above appears to require either (i) correct specification, which may be implausible, or (ii) approximation by linear and RKHS function spaces only.

This paper asks: is there a general mean square rate for an estimator $\hat{a}$ constructed over a general machine learning function space that is not Donsker, e.g. a neural network? Moreover, can we use this mean square rate to incorporate $\hat{a}$ into the tasks listed above while simultaneously allowing general function approximation and robustness to mis-specification? Finally, do these theoretical innovations shed new light in a highly influential empirical study where previous work is limited to parametric estimation?

\textbf{Contributions.}
Our first contribution is a general Riesz estimator over nonlinear function spaces with fast estimation rates. Specifically, we propose and analyze an adversarial, direct estimator $\hat{a}$ for the Riesz representer of any mean square continuous linear functional, using any function space $\mcA$ that can approximate $a_0$ well. We prove a high probability, finite sample bound on the mean square error $\|\hat{a}-a_0\|_2$ in terms of an abstract quantity called the critical radius $\delta_n$, which quantifies the complexity of $\mcA$ in a certain sense.  Since the critical radius is a well-known quantity in statistical learning theory, we can appeal to critical radii of machine learning function spaces such as neural networks, random forests, and reproducing kernel Hilbert spaces (RKHSs).\footnote{Hereafter, a ``general'' function space is a possibly non-Donsker space that satisfies a critical radius condition. An ``arbitrary'' function space may not satisfy a critical radius condition.} Unlike previous work on Riesz representers, we provide a unifying approach for general function spaces (and their unions), handling new function spaces such as neural networks and random forests. These new function spaces achieve nominal coverage in highly nonlinear simulations where other function spaces fail.

Our second contribution is to incorporate our direct $\hat{a}$ estimator into downstream tasks such as semiparametric estimation and inference, while simultaneously allowing general function approximation and robustness to mis-specification. We demonstrate that our mean square rate directly verifies the  general conditions for inference in targeted and debiased machine learning with sample splitting. We prove inference based on ``stability-rate robustness'' without sample splitting, which may be of independent interest: a sufficient condition for Gaussian approximation is when the \textit{product} of the estimator stability and the estimation rate vanishes quickly enough. Finally, our mean square rate also verifies conditions for inference based on ``complexity-rate robustness'' without sample splitting. While showing the connection, we clarify that a sufficient condition for Gaussian approximation is when the \textit{product} of the complexity and the estimation rate is $o_p(n^{-1/2})$. 

In practice, adversarial estimators with machine learning function spaces involve computational techniques that may introduce computational error. As our third contribution, we analyze the computational error for some key function spaces used in adversarial estimation of Riesz representers.  
For random forests, we analyze oracle training and prove convergence of an iterative procedure. For the RKHS, we derive a closed form without computational error and propose a Nystr\"om approximation with computational error. In doing so, we attempt to bridge theory with practice for empirical research.

Finally, we provide an empirical contribution by extending the influential analysis of \cite{karlan2007does} from parametric estimation to semiparametric estimation. To our knowledge, semiparametric estimation has not been used in this setting before. 
The substantive economic question is: how much more effective is a matching grant in Republican states compared to Democratic states? The flexibility of machine learning may improve model fit, yet it may come at the cost of statistical power. Our approach appears to improve model fit relative to previous parametric and semiparametric approaches, and this benefit appears to outweigh the cost; we obtain more precise estimates overall.

\textbf{Key connections.}
The Riesz representation theorem implies a continuum of unconditional moment restrictions.
Our central insight is to adapt adversarial techniques to the problem of learning the Riesz representer:
we adversarially enforce the unconditional moment restrictions over a set of test functions \cite{goodfellow2014generative,arjovsky2017wasserstein}. The fundamental advantage of the adversarial approach is its unified analysis over general function classes in terms of the critical radius \cite{koltchinskii2000rademacher,bartlett2005local,negahban2012,lecue2017regularization,lecue2018regularization}. 
Since this paper was circulated on arXiv in 2020, this framework has been extended to Riesz representers of more functionals, e.g. proximal treatment effects \cite{kallus2021causal,ghassami2022minimax}.

A key predecessor is \cite{chernozhukov2018global} which only studies sparse linear function approximation of Riesz representers. We allow for general function spaces, which may improve finite sample performance by reducing approximation error. Our work complements \cite{hirshberg2019augmented}, who use an adversarial approach to semiparametric estimation, without sample splitting, that requires correct specification of $g_0$. 
Our fast $L_2$ rate is compatible with not only their ``complexity-rate'' inference results but also targeted and debiased machine learning with sample splitting,  which allow mis-specification \cite{zheng2011cross,chernozhukov2018double}.
%
Finally, \cite{kaji2020adversarial} propose an adversarial estimator for parametric models, whereas we study semiparametric models.

The Riesz representer's unconditional moment restrictions differ from those of a nonparametric instrumental variable regression \cite{newey2003instrumental,ai2003efficient}.  We complement previous works that adapt modern, adversarial techniques to nonparametric instrumental variable regression, e.g. \cite{dikkala2020minimax}. A key difference is that we prove a mean square rate, which is necessary for targeted and debiased machine learning.

Several subsequent works build on our work and propose other estimators for nonlinear function spaces \cite{singh2021debiased,chernozhukov2021automatic,kallus2021causal,ghassami2022minimax}, but this paper was the first to give $L_2$ estimation rates for the Riesz representer with nonlinear function approximation, allowing neural networks and random forests to be used for direct Riesz estimation. In addition, \cite{chen2022debiased} extend our results for ``stability-rate robustness.'' See Sections~\ref{sec:rate} and~\ref{sec:inference} for detailed comparisons.

When initially circulated, this paper appeared to be the first to:
(i) propose direct Riesz representer estimators compatible with sample splitting over general non-Donsker spaces;
(ii) provide unified $L_2$ rates in terms of critical radius theory;
(iii) prove inference via estimator stability.
None of (i), (ii), or (iii) appear to be contained in previous works. 

Section~\ref{sec:algo} defines our adversarial estimator of the Riesz representer over general function spaces.
Section~\ref{sec:rate} presents our main result: $L_2$ rates for the Riesz estimator over general function spaces. Section~\ref{sec:inference} shows semiparametric inference via sample splitting, estimator stability, or complexity. Section~\ref{sec:compute} studies the computation error that arises in practice. Section~\ref{sec:sim} showcases settings where our flexible estimator achieves nominal coverage while some previous methods do not, and where these gains provide new, rigorous empirical evidence. Section~\ref{sec:conclusion} concludes. 
\section{Adversarial estimation over general function spaces}\label{sec:algo}

\textbf{Class of functionals.} We study linear and mean square continuous functionals of the form $\theta:g\mapsto \mathbb{E}\{m(Z;g)\}$, where $g_0(X)=\E(Y|X)$ and $X\subset Z$. Denote the $L_2$ norm and associated inner product by $\|g\|_2:=\sqrt{\E[g(X)^2]}$ and $\ldot{g}{g'}_2 := \E_{X}[g(X)\, g'(X)]$.

\begin{assumption}[Mean square continuity]\label{ass:strong-smooth}
There exists some constant $0\leq M<\infty$ such that $ \forall f\in \mcF$, $\E\left\{m(Z;f)^2\right\} \leq M\, \|f\|^2_2$.\footnote{A sufficient condition is when the statement holds $\forall f\in L_2$. We define $\mcF\subset L_2$ below.}
\end{assumption}

    In Appendix~\ref{sec:rate_proof}, we verify that a variety of functionals are mean square continuous under standard conditions, including the average policy effect, regression decomposition, average treatment effect, average treatment on the treated, and local average treatment effect.

Mean square continuity implies boundedness of the functional $\theta$, since $|\E\{m(Z;g)\}|\leq \sqrt{\E\{m(Z;g)^2\}} $. By the Riesz representation theorem, any bounded linear functional over a Hilbert space has a Riesz representer $a_0$ in that Hilbert space.  Therefore Assumption~\ref{ass:strong-smooth} implies that the Riesz representer estimation problem is well defined. 

\textbf{Estimator definition.} 
To state our estimator, we introduce the following notation. Let $\E_n[\cdot]$ denote the empirical average and $\|\cdot\|_{2,n}$ the empirical $\ell_2$ norm, i.e. 
$\|g\|_{2,n} :=\sqrt{\E_n[g(X)^2]}.$ For any function space $\mcG$, let $\text{star}(\mcG):=\{r\, g: g\in \mcG, r \in [0, 1]\}$ denote the star hull and let $\partial \mcG:= \{g-g': g, g'\in \mcG\}$ denote the space of differences. 

We propose Riesz representer estimators that use a general function space $\mcA$, equipped with some norm $\|\cdot\|_{\mcA}$. In particular, $\mcA$ is for the minimization in our min-max approach. Given the notation above, we define the class $\mcF$ for adversarial maximization:
$
\mcF:=\text{star}(\partial\mcA):=\{r(a-a'): a, a'\in \mcA, r\in [0,1]\},    $
and we assume that the norm $\|\cdot\|_{\mcA}$ extends naturally to the larger space $\mcF$.
We propose the following estimator.

\begin{algo}[Adversarial Riesz representer]\label{algo:reg-estimator}
    For regularization $(\lambda,\mu)$, define
    \begin{equation*}
    \hat{a} = \argmin_{a\in \mcA} \max_{f\in \mcF} \E_n\{m(Z; f) - a(X)\cdot f(X)\} - \|f\|_{2,n}^2 - \lambda \|f\|_{\mcA}^2 + \mu \|a\|_{\mcA}^2.
\end{equation*}
\end{algo}

\begin{corollary}[Population limit]\label{cor:limit}
Consider the population limit of our criterion where $n\to \infty$ and $\lambda,\mu \to 0$:
$
\max_{f\in \mcF} \E\left[m(Z; f) - a(X)\cdot f(X)\right] - \|f\|_{2}^2.
$
This limit equals $\frac{1}{4} \|a-a_0\|_2^2$.
\end{corollary}

Thus our empirical criterion converges to the mean square error criterion in the population limit, even though an analyst does not have access to unbiased samples from $a_0(X)$.

Our Riesz representer estimator $\hat{a}\in\mcA$ is a function, which may be evaluated at new locations besides the training data. Therefore it may interpolate, which is important for stochastic discount factor analysis in finance; see Appendix~\ref{sec:asset}. Moreover, it may be evaluated on a held out sample, which is important for semiparametric inference with sample splitting that allows for mis-specification; see Section~\ref{sec:inference}. By contrast, a balancing weight estimator defined as a vector $\tilde{a}\in\mathbb{R}^n$ does not allow for interpolation or sample splitting. This is our main departure from previous work on adversarial balancing weights, described in Section~\ref{sec:intro}.

Crucially, the space $\mcA$ does not have to be sparse linear or an RKHS. This is our main departure from previous work on direct Riesz estimation, described in Section~\ref{sec:intro}.\footnote{By direct Riesz estimation, we mean estimation of a function $\hat{a}\in\mcA$ that can interpolate.}

The vanishing norm-based regularization terms in Estimator~\ref{algo:reg-estimator} may be avoided if an analyst knows a bound on $\|a_0\|_{\mcA}$. In such case, one can impose a hard norm constraint on the hypothesis space and optimize over norm-constrained subspaces. By contrast, regularization allows the estimator to adjust to $\|a_0\|_{\mcA}$, without knowledge of it.

Our analysis allows for mis-specification, i.e. $a_0\notin \mcA$. In such case, the estimation error incurs an extra bias of $\epsilon_n:=\|a_*-a_0\|_{2}$ where $a_*:=\argmin_{a\in \mcA} \|a-a_0\|_2$ is the best-in-class approximation. Section~\ref{sec:inference} interprets $\mcA$ as an $L_2$ approximating sequence of function spaces.

\section{Nonasymptotic mean square rate via critical radius}\label{sec:rate}

Our main result is a fast, finite sample $L_2$ rate for the adversarial Riesz representer in terms of the critical radius. The critical radius of a function class is a widely used quantity in statistical learning theory, and it is known for many machine learning function classes. Thus our main result allows us to appeal to critical radii for a family of adversarial Riesz representer estimators over general function classes. We provide new results for function classes previously unused in direct Riesz representer estimation, e.g. neural networks.

\textbf{Critical radius background.} A Rademacher random variable takes values in $\{-1, 1\}$. Let $\varepsilon_{i}$ be independent Rademacher random variables drawn equiprobably. Then the local Rademacher complexity of the function space $\mcF$ over a neighborhood of radius $\delta$ is defined as ${\cal R}(\delta; \mcF):= \E\left\{\sup_{f\in \mcF: \|f\|_2\leq \delta} \frac{1}{n} \sum_{i=1}^n \varepsilon_i f(X_i)\right\}$. 
As an important stepping stone, we prove bounds on $\|\hat{a}-a_0\|_2$ in terms of local Rademacher complexities.

These bounds can be optimized into fast rates by an appropriate choice of the radius $\delta$, called the critical radius. \textcolor{black}{Formally}, the critical radius of a function class $\mcF$ with range in $[-b, b]$ is defined as any solution $\delta_n$ to the inequality $  {\cal R}(\delta; \mcF)\leq \delta^2/b$.\footnote{We focus on bounded functions for simplicity. Future work may adapt our analysis to unbounded functions using moment conditions.} There is a sense in which $\delta_n$ balances bias and variance in the bounds.

The critical radius has been analyzed and derived for a variety of function spaces of interest, such as neural networks, reproducing kernel Hilbert spaces, high-dimensional linear functions, and VC-subgraph classes. The following characterization of the critical radius opens the door to such derivations \cite[Corollary~14.3 and Proposition~14.25]{wainwright2019high}:
the critical radius of any function class $\mcF$, uniformly bounded in $[-b,b]$, is of the same order as any solution to the inequality:
\begin{equation}\label{eqn:metric-entropy-critical}
     \int_{\frac{\delta^2}{2b}}^{\delta} \sqrt{\log\left[N\{\varepsilon; B_n(\delta; \mcF);\ell_2\}\right]} d\varepsilon \leq  \frac{\sqrt{n}\delta^2}{64b}.
\end{equation}
In this expression, $B_n(\delta; \mcF)=\{f\in \mcF: \|f\|_{2,n}\leq \delta\}$ is the ball of radius $\delta$ and $N(\varepsilon; \mcF;\ell_2)$ is the empirical $\ell_2$-covering number at approximation level $\varepsilon$, i.e. the size of the smallest $\varepsilon$-cover of $\mcF$, with respect to the empirical $\ell_2$ metric. Critical radius analysis will handle more function spaces than Donsker analysis; see Appendix~\ref{sec:inference_proof}.

\begin{assumption}[Critical radius for estimation]\label{ass:critical}
    Define the balls of functions $
    \mcF_B:=\{f\in \mcF: \|f\|_{\mcA}^2\leq B\}$ and $
    m\circ \mcF_B:=\{m(\cdot; f): f\in \mcF_B\}
$. Assume that there exists some constant $B\geq 0$ such that the functions in $\mcF_B$ and $m\circ \mcF_B$ have uniformly bounded ranges in $[-b, b]$. Further assume that, for this $B$, $\delta_n$ upper bounds the critical radii of $\mcF_{B}$ and $m\circ \mcF_B$.
\end{assumption}

\textbf{Main result.} Our nonasymptotic main result holds with probability $1-\zeta$. To lighten notation, we summarize the critical radii, approximation error, and low probability event:
$
\bar{\delta}:=\delta_n + \epsilon_n + c_0 \sqrt{\frac{\log(c_1/\zeta)}{n}},
$
where $c_0$ and $c_1$ are universal constants. 

\begin{theorem}[Mean square rate]\label{thm:reg-main-error}
Suppose Assumptions~\ref{ass:strong-smooth} and~\ref{ass:critical} hold. Suppose that the regularization in Estimator~\ref{algo:reg-estimator} satisfies $\mu \geq 6\lambda \geq 12\bar{\delta}^2/B$. Then with probability $1-\zeta$,
$
    \|\hat{a} - a_0\|_2 = O\left(M^2 \bar{\delta} + \mu \|a_*\|_{\mcA}^2/\bar{\delta}\right).
$
Furthermore, if $\mu\leq C \bar{\delta}^2/B$, for some constant $C$, the bound simplifies to $O\left[\bar{\delta}\,\max\left\{M^2, \frac{\|a_*\|_{\mcA}^2}{B}\right\}\right]$.
\end{theorem}

\begin{corollary}[Weaker metric rate]\label{cor:weaker_metric}
    Consider the weaker metric $\|\cdot\|_{\mcF}$ defined as $\|a\|_{\mcF}^2 = \sup_{f\in \mcF}\, \ldot{a}{f}_2 - \frac{1}{4}\|f\|_2^2 \leq \|a\|_2^2$. Then under the conditions of Theorem~\ref{thm:reg-main-error}, 
    $
    \|\hat{a}-a_0\|_{\mcF} =O\left[\bar{\delta}' \max\left\{M^2, \|a_*\|_{\mcA}^2/B\right\}\right]
    $
    where the definition of $\bar{\delta}'$ replaces $\epsilon_n$ with $\epsilon'_n = \inf_{a\in \mcA}\|a-a_0\|_{\mcF}$.
\end{corollary}

\begin{corollary}[Mean square rate without norm regularization]\label{cor:main-error}
    Suppose the conditions of Theorem~\ref{thm:reg-main-error} hold, and that the function classes $\mcF$ and $\mcG$ are already norm constrained with uniformly bounded ranges. Then taking $\lambda=\mu=0$ in Estimator~\ref{algo:reg-estimator},
    $
     \|\hat{a} - a_0\|_2 = O\left(M^2 \bar{\delta}''\right)
    $
    where we define $\bar{\delta}''$ by replacing $\delta_n$ with $\delta_n''$, which bounds the critical radii of  $\mcF$ and $m\circ \mcF$.
\end{corollary}

    In Appendix~\ref{sec:sparse}, we consider another version of Estimator~\ref{algo:reg-estimator} without the $\|f\|^2_{2,n}$ term. We prove a fast rate on $\|\hat{a}-a_0\|_2$ as long as $\mcF$ can be decomposed into the union of symmetric function spaces, i.e. $\mcF=\cup_{i=1}^d \mcF^i$. 


We bound the population norm $\|\hat{a}-a_0\|_{2}$ in terms of critical radii, for function spaces beyond the sparse linear case and the RKHS. The population norm may be viewed as a generalization error, and it is necessary for downstream analysis where the function $\hat{a}$ interpolates or is evaluated on a held-out sample. 
We complement previous work on $L_2$ rates by providing a unified analysis across function spaces, including new ones. See Appendix~\ref{sec:sparse} for formal comparisons in the sparse linear special case, where strong results are known.

We study the Riesz representer, whereas \cite{dikkala2020minimax} study the nonparametric instrumental variable regression. Theorem~\ref{thm:reg-main-error} and \cite[Theorem 1]{dikkala2020minimax} use adversarial techniques, yet there is a crucial difference in the nature of the result: we bound the mean square error, whereas \cite[Theorem 1]{dikkala2020minimax} bounds a \textit{projected} mean square error. The mean square error rate is essential to verify the conditions of targeted and debiased machine learning; a projected mean square error rate is weaker and insufficient.

\begin{corollary}[Union of hypothesis spaces]\label{cor:union}
    Suppose that $\mcF=\cup_{i=1}^d \mcF^i$ and that the critical radius of each $\mcF^i$ is $\delta_n^i$ in the sense of Assumption~\ref{ass:critical}. 
 Suppose that for some  $\mcF^i$, $\delta^i_n \gtrsim \sqrt{\frac{\log(d)}{n}}$. 
 Then the critical radius of $\mcF$ is $\delta_n=O\left\{\max_{i}\delta_n^i+\sqrt{\frac{\log(d)}{n}}\right\}$, and the conclusions of the above results continue to hold.
\end{corollary}

    By Corollary~\ref{cor:union}, our results allow an analyst to estimate the Riesz representer with a union of flexible function spaces, which is an important practical advantage. This is a specific strength of the critical radius approach and a contribution to the direct Riesz estimation literature, which appears to have been studied one function space at a time.\footnote{Estimator~\ref{algo:reg-estimator} takes $\mcF$ as given, allowing for unions. Future work may study adaptive selection of $\mcF$.}

\textbf{Special cases: Neural network, random forest, RKHS.}
As a leading example, suppose that the function class $\mcA$ can be expressed as a rectified linear unit (ReLU) activation neural network with depth $L$ and width $W$, denoted as $\mcA_{\nnet(L,W)}$. Functions in $\mcF$ can be expressed as neural networks with depth $L+1$ and width $2W$. 

\begin{corollary}[Neural network Riesz representer rate]\label{cor:nn}
   Take $\mcA=\mcA_{\nnet(L,W)}$. Suppose that $m\circ \mcF$ is representable as a neural network with depth $O(L)$ and width $O(W)$. Finally, suppose that the covariates are such that functions in $\mcF$ and $m\circ \mcF$ are uniformly bounded in $[-b,b]$. Then with probability $1-\zeta$,
   $
        \|\hat{a}-a_0\|_{2} = O\big\{\min_{a\in A_{\nnet(L,W)}} \|a-a_0\|_2 + \sqrt{\frac{L\, W\, \log(W)\,\log(b)\, \log(n)}{n}} + \sqrt{\frac{\log(1/\zeta)}{n}}\big\}.
   $
\end{corollary}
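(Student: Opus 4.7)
The plan is to apply Corollary~\ref{cor:main-error} with $\mcA = \mcA_{\nnet(L,W)}$, so the entire content of the proof reduces to bounding the critical radii of $\mcF = \text{star}(\partial \mcA_{\nnet(L,W)})$ and of $m\circ \mcF$ by $\delta_n = O\bigl(\sqrt{LW\log(W)\log(b)\log(n)/n}\bigr)$. Once that is done, the rest is direct substitution: the bias term $\epsilon_n = \min_{a\in \mcA_{\nnet(L,W)}}\|a-a_0\|_2$ appears already in the generic corollary, and the $\sqrt{\log(1/\zeta)/n}$ term is the probabilistic slack.

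First I would show that $\mcF$ and $m\circ \mcF$ each sit inside a single RELU network class of depth $O(L)$ and width $O(W)$. A difference $a-a'$ of two RELU nets of depth $L$ and width $W$ can be realized as a RELU net of depth $L+1$ and width $2W$, by running the two subnetworks in parallel in a widened first layer and combining them through a signed output layer; the star-hull scalar $r\in [0,1]$ folds into the output weights without changing the architecture. The analogous statement for $m\circ \mcF$ is postulated in the corollary's hypotheses, and both classes have output range in $[-b,b]$ by assumption.

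Next I would bound the metric entropy of such a RELU class. Theorem~6 of \cite{bartlett2019nearly} gives pseudo-dimension of order $LW\log W$, and combining this with the $L_1$-covering-number bound for VC-subgraph classes from \cite{haussler1995sphere} yields $\log N_n(\epsilon; \mcF) = O\bigl(LW\log(W)\log(b/\epsilon)\bigr)$. Substituting this into the metric-entropy criterion Equation~\eqref{eqn:metric-entropy-critical} and integrating, any $\delta$ of order $\sqrt{LW\log(W)\log(b)\log(n)/n}$ satisfies the critical-radius inequality; the same estimate transfers verbatim to $m\circ \mcF$ since it lies in a RELU class of the same order.

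The main obstacle, if any, is keeping the logarithmic dependence clean through the metric-entropy integral and matching the exact form of the pseudo-dimension bound of \cite{bartlett2019nearly} to our composed class; a fully worked-out version of this critical-radius calculation already appears in the proof of Example~3 of \cite{foster2019orthogonal}, which can simply be cited. With $\delta_n$ in hand, and noting that mean-squared continuity is absorbed into the $O(\cdot)$ constant, Corollary~\ref{cor:main-error} immediately yields the stated bound.
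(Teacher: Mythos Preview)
Your proposal is correct and follows essentially the same route as the paper: embed $\mcF$ in a RELU class of depth $L+1$ and width $2W$, invoke the pseudo-dimension bound of \cite{bartlett2019nearly} together with Haussler's covering-number estimate to get $\delta_n = O\bigl(\sqrt{LW\log(W)\log(b)\log(n)/n}\bigr)$, cite Example~3 of \cite{foster2019orthogonal} for the worked calculation, and plug into Corollary~\ref{cor:main-error}. The paper additionally cites Theorem~14.1 of \cite{anthony2009neural} in the chain, but otherwise the argument is identical.
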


The second term is the critical radius. By the $L_1$ covering number for VC classes \cite{haussler1995sphere} as well as the bounds of \cite[Theorem 14.1]{anthony2009neural} and \cite[Theorem~6]{bartlett2019nearly}, the critical radii of $\mcF$ and $m\circ \mcF$ are $\delta_n=O\left\{\sqrt{\frac{L\, W\, \log(W)\,\log(b)\, \log(n)}{n}}\right\}.$
See \cite[Proof of Example~3]{foster2019orthogonal} for a detailed derivation.

    If $a_0$ is representable as a ReLU neural network, then the first term vanishes and we achieve an almost parametric rate. If $a_0$ is representable as a nonparametric Holder function, then one may appeal to approximation results for ReLU activation neural networks \cite{yarotsky2017error,yarotsky2018optimal}. Such results typically require that the depth and the width of the neural network grow as some function of the approximation error $\epsilon_n$, leading to errors of the form $O\left[\epsilon_n + \sqrt{\frac{L(\epsilon_n)\, W(\epsilon_n)\, \log\{W(\epsilon_n)\}\,\log(b)\, \log(n)}{n}} + \sqrt{\frac{\log(1/\zeta)}{n}}\right]$. Optimally balancing $\epsilon_n$ leads to almost tight nonparametric rates.

    Corollary~\ref{cor:nn} for the Riesz representer is the same order as \cite[Theorem 2]{farrell2018DeepNeural} for nonparametric regression. In this sense, it appears relatively sharp.

Next, consider the oracle trained random forest estimator described in Section~\ref{sec:compute}. Denote by $\mcA_{\text{base}(d)}$ the base space, with VC dimension $d$, in which each tree of the forest is estimated. Denote by $\mcA_{\text{rf}(d)}=\{\sum_{j}w_j \tilde{a}_j: \tilde{a}_j\in \mcA_{\text{base}(d)}\}$ the linear span of the base space.

\begin{corollary}[Random forest Riesz representer rate]\label{cor:rf}
     Take $\mcA=\mcA_{\text{rf}(d)}$. \textcolor{black}{Suppose that}
     $m\circ \mcF\in \mcA_{\text{rf}(d)}$. Finally, suppose that the covariates are such that functions in $\mcF$ and $m\circ \mcF$ are uniformly bounded in $[-b,b]$. Then under after $T$ iterations of oracle training and under the regularity conditions described in Proposition~\ref{prop:oracle},  with probability $1-\zeta$,
   $
        \|\hat{a}-a_0\|_{2} = O\big\{\min_{a\in A_{\text{rf}(d)}} \|a-a_0\|_2 +\frac{\log(T)}{T}+b\sqrt{\frac{Td\log(n)}{n}} + \sqrt{\frac{\log(1/\zeta)}{n}}\big\}.
   $
\end{corollary}

    The second term is the computational error from Proposition~\ref{prop:oracle} below. The third term is the critical radius, which follows from the complexity of oracle training and from analysis of VC spaces \cite{shalev2014understanding}. We defer further discussion to Section~\ref{sec:compute}.

    Suppose that the base estimator is a binary decision tree with small depth. This example satisfies the requirement of $\mcA_{\text{base}(d)}$ \cite{Mansour2000}, and we use it in practice. 
%
    If $T=O\{(n/d)^{1/3}\}$, the bound simplifies to $ \|\hat{a}-a_0\|_{2} = O\left\{b\frac{\log(n)d^{1/3}}{n^{1/3}} + \sqrt{\frac{\log(1/\zeta)}{n}}\right\}.$

Denote by $\mcA_{\rkhs(k)}$ the RKHS with kernel $k$, so that $\|\cdot\|_{\mcA_{\rkhs(k)}}$ is the RKHS norm. Define the empirical kernel matrix $K\in\R^{n\times n}$ where $K_{ij}=k(x_i, x_j)/n$. Let $(\eta_j)_{j=1}^n$ be the eigenvalues of $K$. For a possibly different kernel $\tilde{k}$, we define analogous objects with tildes. We arrive at the following result using \cite[Corollary~13.18]{wainwright2019high}.

\begin{corollary}[RKHS Riesz representer rate]\label{cor:rkhs}
    Take $\mcA=\mcA_{\rkhs(k)}$. \textcolor{black}{Suppose that} 
    $m\circ \mcF\in \mcA_{\rkhs(\tilde{k})}$. Suppose that there exists some $B\geq 0$ such that functions in $\mcF_B$ and $m\circ \mcF_B$ are uniformly bounded in $[-b, b]$. Let $\delta_n$ be any solution to the inequalities $B\sqrt{\frac{2}{n}}\sqrt{\sum_{j=1}^\infty \max\{\eta_j, \delta^2\}}\leq \delta^2$ and $   B\sqrt{\frac{2}{n}}\sqrt{\sum_{j=1}^\infty \max\{\tilde{\eta}_j, \delta^2\}}\leq \delta^2$. Then with probability $1-\zeta$,
    $
    \|\hat{a}-a_0\|_{2} =O\left[\min_{a\in A_{\rkhs(k)}} \|a-a_0\|_2 +\|a_0\|_{\mcA}\left\{\delta_n + \sqrt{\frac{\log(1/\zeta)}{n}}\right\}\right].
    $
\end{corollary}

    Our estimator does not need to know the RKHS norm $\|a_0\|_{\mcA}$. Instead it automatically adjusts to the unknown RKHS norm. 
%
    The bound $\delta_n$ is based on empirical eigenvalues. These empirical quantities can be used as a data-adaptive diagnostic.

    For particular kernels, a more explicit bound can be derived as a function of the eigendecay. For example, the Gaussian kernel has an exponential eigendecay. \cite[Example~13.21]{wainwright2019high}  derives $\delta_n=O\left\{b\sqrt{\frac{\log(n)}{n}}\right\}$, thus leading to almost parametric rates: $\|\hat{a}-a_0\|_2= O\left[\min_{a\in A_{\rkhs(k)}} \|a-a_0\|_2 +\|a_0\|_{\mcA} \left\{b\sqrt{\frac{\log(n)}{n}}+ \sqrt{\frac{\log(1/\zeta)}{n}}\right\}\right]$.

    See Appendix~\ref{sec:sparse} for sparse linear function spaces and further comparisons. 

\section{Semiparametric inference}\label{sec:inference}

So far, we have analyzed a machine learning estimator $\hat{a}$ for $a_0$, the Riesz representer to the mean square continuous functional $\theta:g\mapsto \E\{m(Z;g)\}$. A well known use of $\hat{a}$ is to construct a consistent, asymptotically normal, and semiparametrically efficient estimator for a parameter $\theta_0:=\theta(g_0)\in\R$. For example, when the parameter $\theta_0$ is the average policy effect, we have $g_0(X)=\E(Y|X)$, $\theta(g)=\E[g\{t(X)\}-g(X)]$, and $a_0(X)=f_t(X)/f(X)-1$. 

In this section, we use our main result to prove inference for three estimators of $\theta_0$: (i) targeted machine learning, (ii) debiased machine learning, and (iii) the doubly robust estimator without sample splitting. We directly verify known conditions for (i) and (ii) via sample splitting \cite{bickel1982adaptive,schick1986asymptotically,klaassen1987consistent}. For (iii), we prove new inference results via estimator stability, and clarify inference results via estimator complexity. 

\textbf{Estimator definition.} In what follows, let $ m_{a}(Z; g) := m(Z; g) + a(X) \{Y - g(X)\}$.

\begin{algo}[Targeted machine learning \cite{zheng2011cross,chernozhukov2018learning}]\label{algo:tml}
Partition $n$ observations into $K=\Theta(1)$ folds $P_1, \ldots, P_K$. For each fold, estimate $\hat{a}_k, \hat{g}_k$ based on the observations outside of fold $P_k$. Finally construct the estimate $\tilde{\theta} = \frac{1}{n} \sum_{k=1}^K \sum_{i\in P_k} m(Z_i; \tilde{g}_k)$ where $\tilde{g}_k(x)=\hat{g}_k(x)+\frac{\sum_{i\in P_k} \hat{a}_k(X_i)\{Y_i-\hat{\gamma}_k(X_i)\}}{\sum_{i\in P_k} \hat{\alpha}_k(X_i)^2}\hat{a}_k(x)$.
\end{algo}

\begin{algo}[Debiased machine learning \cite{levit1976efficiency,hasminskii1979nonparametric,chernozhukov2018double}]\label{algo:dml}
Partition $n$ observations into $K=\Theta(1)$ folds $P_1, \ldots, P_K$. For each fold, estimate $\hat{a}_k, \hat{g}_k$ based on the observations outside of fold $P_k$. Finally construct the estimate $\check{\theta} = \frac{1}{n} \sum_{k=1}^K \sum_{i\in P_k} m_{\hat{a}_k}(Z_i; \hat{g}_k)$.
\end{algo}

\begin{algo}[Doubly robust estimator \cite{robins1995analysis,robins1995semiparametric,chernozhukov2022locally}]\label{algo:debias-nocross}
    Estimate $(\hat{g},\hat{a})$ using all observations. Set $\hat{\theta}= \E_n\left\{m_{\hat{a}}(Z; \hat{g})\right\}$.
\end{algo}

\textbf{Normality via sample splitting.}
We use a weak and well known condition for nuisance estimators $\hat{g}_k$ and $\hat{a}_k$: the mixed bias $\E[\{\hat{a}_k(X)-a_0(X)\}\, \{\hat{g}_k(X) - g_0(X)\}]$ vanishes quickly.

\begin{assumption}[Mixed bias condition]\label{ass:main-cond}
    Suppose that $\forall k \in [K]$: $\sqrt{n}\, \E[\{\hat{a}_k(X)-a_0(X)\}\, \{\hat{g}_k(X) - g_0(X)\}] \rightarrow_p 0$.\footnote{Without sample splitting, $K=1$ and this condition remains well defined.}
\end{assumption}

    By Cauchy-Schwarz inequality, Assumption~\ref{ass:main-cond} is implied by $\sqrt{n}\|\hat{a}_k-a_0\|_2 \|\hat{g}_k - g_0\|_2 \to_p 0$. The latter is the celebrated double rate robustness condition, also called the product rate condition, whereby either $\hat{g}_k$ or $\hat{a}_k$ may have a relatively slow estimation rate, as long as the other has a sufficiently fast estimation rate. 

    The mixed bias condition is weaker than double rate robustness: $\hat{a}_k$ only needs to approximately satisfy Riesz representation for test functions of the form $f=\hat{g}_k-g_0$. Hence if $\|\hat{g}_k-g_0\|_2\leq r_n$, then it suffices for $\hat{a}_k$ to be a local Riesz representer around $\hat{g}_k$ rather than a global Riesz representer for all of $\mcG$. In Appendix~\ref{sec:inference_proof}, we prove that Riesz estimation may become much simpler if the only aim is to satisfy Assumption~\ref{ass:main-cond}.

Assumption~\ref{ass:main-cond} leaves limited room for mis-specification. In Appendix~\ref{sec:inference_proof}, we allow for inconsistent nuisance estimation: the probability limit of $\hat{g}_k$ may not be $g_0$, or the probability limit of $\hat{a}_k$ may not be $a_0$, as long as the other nuisance is correctly specified and converges at the parametric rate \cite{Benkeser2017}. Below, we focus on the thought experiment where, for a fixed $n$, the best in class approximations are $(g_*,a_*)$ which may not coincide with $(g_0,a_0)$. In the limit, the function spaces become rich enough to include $(g_0,a_0)$.

\begin{corollary}[Normality via sample splitting]\label{cor:debias}
    Suppose Assumptions~\ref{ass:strong-smooth} and~\ref{ass:main-cond} hold. Further assume
  (i) boundedness: $Y$, $g(X)$, and $a(X)$ are bounded almost surely, for all $g\in \mcG$ and $a\in \mcA$; 
        (ii) individual rates:  $\|\hat{a}_k-a_*\|_2\stackrel{L^2}{\to} 0$  and $\|\hat{g}-g_*\|_2 \stackrel{L^2}{\to} 0$, where $g_*$ or $a_*$ may not necessarily equal $g_0$ or $a_0$.
    Then and $ \sqrt{n}\sigma_*^{-1}\left(\tilde{\theta} - \theta_0\right) \to_d N\left(0, 1\right)$ and $ \sqrt{n}\sigma_*^{-1}\left(\check{\theta} - \theta_0\right) \to_d N\left(0, 1\right)$, where $\sigma_*^2 := \Var\{m_{a_*}(Z; g_*)\}$ may be a sequence indexed by the sample size..
\end{corollary}

    Boundedness in Corollary~\ref{cor:debias} can be relaxed to bounded fourth moments of $Y, g(X), a(X)$, as long as we strengthen the individual rates  to $\|\hat{a}_k-a_*\|_4\to_p 0$ and $ \|\hat{g}_k-g_*\|_4 \to_p 0$.

    A special case takes $g_*=g_0$ and $a_*=a_0$. More generally, we consider the possibility that $\|g_*-g_0\|_2\leq \epsilon_n$ and $\|a_*-a_0\|_2\leq \epsilon_n$, where $r_n \ll \epsilon_n \ll 1$. For example, consider the thought experiment where $(\mcG,\mcA)$ are sequences of function spaces that approximate the nuisances $(g_0,a_0)$ increasingly well as the sample size increases. Then by Cauchy-Schwarz and triangle inequalities, a sufficient condition for Assumption~\ref{ass:main-cond} is that $\sqrt{n}(r_n^2+\epsilon_n^2)\rightarrow0$. In other words, for a fixed sample size, $(\mcG,\mcA)$ may not include $(g_0,a_0)$; it suffices that they do so in the limit, and that the product of their approximation errors $\epsilon_n^2$ vanishes quickly enough. In this thought experiment, $\sigma_*^2$ is a sequence indexed by the sample size as well.

\textbf{Normality via estimator stability.}
Next, we turn to Estimator~\ref{algo:debias-nocross}, which does not split the sample. Sample splitting may come at a finite sample cost since it reduces the effective sample size, as shown in simulations in Appendix~\ref{sec:sim_detail}. Our theoretical contribution in this section is to prove semiparametric inference of machine learning estimators without sample splitting, the Donsker condition, or even the critical radius bound. Of independent interest, we characterize ``stability-rate robustness'' as a sufficient condition for inference.

\begin{assumption}[Estimator stability for inference]\label{ass:stability}
    Let $\hat{h}:=(\hat{a}, \hat{g})$ and let $\hat{h}^{-i}$ be the estimated function if sample $i$ were removed from the training set. Assume $\hat{h}$ is symmetric across samples and satisfies $\E_Z\{\|\hat{h}(Z) - \hat{h}^{-i}(Z)\|_{\max}^2\} \leq \beta_n$.
\end{assumption}

    \cite{Kale11cross-validationand} propose Assumption~\ref{ass:stability} in order to derive improved bounds on cross validation. It is formally called $\beta_n$ mean square stability, which is weaker than the well studied uniform stability \cite{Bousquet2002StabilityAG}. See \cite{elisseeff2003leave,Celisse2016,pmlr-v98-abou-moustafa19a} for further discussion.

\begin{theorem}[Normality via estimator stability]\label{thm:debias-nocross-stability}
     Suppose Assumptions~\ref{ass:strong-smooth},~\ref{ass:main-cond}, and~\ref{ass:stability} hold. Further assume
  (i) boundedness: $Y$, $g(X)$, and $a(X)$ are bounded almost surely, for all $g\in \mcG$ and $a\in \mcA$; 
        (ii) individual rates:  $\E(\|\hat{a}-a_*\|^2_2)=o_p(r^2_n)$  and $\E(\|\hat{g}-g_*\|^2_2) = o_p(r^2_n)$, where $g_*$ or $a_*$ may not necessarily equal $g_0$ or $a_0$;
        (iii) joint rates: $r^2_{n-1}+n\beta_{n-1}r_{n-2}\to 0$.
    Then $ \sqrt{n}\sigma_*^{-1}\left(\hat{\theta} - \theta_0\right) \to_d N\left(0, 1\right)$ where $\sigma_*^2 := \Var\{m_{a_*}(Z; g_*)\}$.
\end{theorem}

Sub-bagging means using as an estimator the average of several base estimators, where each base estimator is calculated from a subsample of size $s<n$. The sub-bagged estimator is stable with $\beta_n=\frac{s}{n}$ \cite{elisseeff2003leave}. If the base estimator's bias decays as some function $\textsc{bias}(s)$, then  sub-bagged estimators typically achieve $r_n = \sqrt{\frac{s}{n}} + \textsc{bias}(s)$  \cite{Athey2016,Khosravi2019,Syrgkanis2020}. For our results, it suffices that the \textit{product} of stability and the rate vanishes quickly,  i.e. $n\beta_n r_n = \sqrt{\frac{s^3}{n}} + s\, \textsc{bias}(s) \to 0$. If $s=o(n^{1/3})$ and $\textsc{bias}(s)=o(1/s)$, our joint rate condition holds.

Consider a high dimensional setting with $p\gg n$. Suppose that only $r \ll n$ variables are $\mu$ strictly relevant, i.e. each decreases explained variance by least $\mu>0$. \cite{Syrgkanis2020} show that the bias of a deep Breiman tree trained on $s$ observations decays as $\exp(-s)$ in this setting. A deep Breiman forest, where each tree is trained on $s=O\left(\frac{2^r \log(p)}{\mu}\right)=o(n^{1/3})$ samples drawn without replacement, achieves $r_n = O\left(\sqrt{\frac{s 2^r}{n}}\right)$. Thus sub-bagged deep Breiman random forests satisfy the conditions of Theorem~\ref{thm:debias-nocross-stability} in a sparse, high-dimensional setting. In subsequent work, \cite{chen2022debiased} accommodate more types of random forests by refining our ``stability-rate robustness''.

\textbf{Normality via critical radius.} Finally, we study Estimator~\ref{algo:debias-nocross} and clarify a ``complexity-rate robustness'' sufficient condition for inference. Consider the following critical radius assumption for inference, slightly abusing notation by recycling the symbols $\delta_n$ and $\bar{\delta}$.

\begin{assumption}[Critical radius for inference]\label{ass:critical2}
   Assume that, with high probability, $\hat{g} \in \hat{\mcG}\subseteq \mcG$ and $\hat{a} \in \hat{\mcA}\subseteq \mcA$. Moreover, assume that there exists some constant $B\geq 0$ such that the functions in $(\hat{\mcG}-g_*)_{B}$, $\{m\circ (\hat{\mcG}-g_*)\}_B$, and $(\hat{\mcA}-a_*)_B$ have uniformly bounded ranges in $[-b, b]$, and such that with high probability $\|\hat{g}-g_*\|_{\hat{\mcG}} \leq B$ and $\|\hat{a}-a_*\|_{\hat{\mcA}} \leq B$. Further assume that, for this $B$, $\delta_n$ upper bounds the critical radii of $(\hat{\mcG}-g_*)_{B}$, $\{m\circ (\hat{\mcG}-g_*)\}_B$, and $(\hat{\mcA}-a_*)_B$. Assume $\delta_n$ is lower bounded by $\sqrt{\frac{\log\log(n)}{n}}$, and that $|a_*(X)|$ and $|Y-g_*(X)|$ are bounded almost surely.\footnote{The lower bound is a weak regularity condition for concentration in nonlinear settings via Bernstein style arguments \cite{wainwright2019high,foster2019orthogonal}.}  
\end{assumption}

    In the special case that $\hat{\mcG}=\mcG$ and $\hat{\mcA}=\mcA$, Assumption~\ref{ass:critical2} simplifies to a bound on the critical radii of $\mcG_{B}$, $m\circ \mcG_{B}$, and $\mcA_{B}$. More generally, we allow the possibility that only the critical radii of \textit{subsets} of function spaces, where the estimators are known to belong, are well behaved. This nuance is helpful for sparse linear settings, where $\hat{\mcG}$ is a restricted cone that is much simpler than $\mcG$.
%
As before, to lighten notation, we define the following summary of the critical radii: $\bar{\delta}=\delta_n + c_0 \sqrt{\frac{\log(c_1\, n)}{n}}$, where $c_0$ and $c_1$ are universal constants.

\begin{theorem}[Normality via critical radius]\label{thm:debias-nocross}
    Suppose Assumptions~\ref{ass:strong-smooth},~\ref{ass:main-cond}, and~\ref{ass:critical2} hold. Further assume
  (i) boundedness: $Y$, $g(X)$, and $a(X)$ are bounded almost surely, for all $g\in \mcG$ and $a\in \mcA$; 
        (ii) individual rates:  $\|\hat{a}-a_*\|_2=o_p(r_n)$  and $\|\hat{g}-g_*\|_2 = o_p(r_n)$, where $g_*$ or $a_*$ may not necessarily equal $g_0$ or $a_0$;
        (iii) joint rates: $\sqrt{n}\left(\bar{\delta}\, r_n + \bar{\delta}^2\right)\to 0$.
    Then $ \sqrt{n}\sigma_*^{-1}\left(\hat{\theta} - \theta_0\right) \to_d N\left(0, 1\right)$ where $\sigma_*^2 := \Var\{m_{a_*}(Z; g_*)\}$.
\end{theorem}

 Without sample splitting, inferential theory often requires the Donsker condition or slowly increasing entropy. 
 Theorem~\ref{thm:debias-nocross} replaces such conditions with $\sqrt{n}\left(\bar{\delta}\, r_n + \bar{\delta}^2\right)\to 0$, a permissive  complexity bound
     in terms of the critical radius that allows for machine learning. It provides a rather sharp characterization of an important trade-off.
    In particular, we interpret $\sqrt{n}\left(\bar{\delta}\, r_n + \bar{\delta}^2\right)\to 0$ as ``complexity-rate robustness''.
    For general function spaces, $\bar{\delta}$ may vanish slowly, as long as $r_n$ vanishes quickly enough to compensate. This condition excludes certain function spaces, e.g. those for which the integral in~\eqref{eqn:metric-entropy-critical} diverges as $\delta \downarrow 0$.

Theorem~\ref{thm:debias-nocross} refines and extends \cite[eq. 19]{hirshberg2019augmented}. ``Complexity-rate robustness'' is a simple heuristic, which appears not to have been explicitly stated before. It is compatible with any $\hat{a}$ estimator satisfying its conditions, rather than a specific choice of balancing weights defined as a vector in $\mathbb{R}^n$. Moreover, it tolerates some mis-specification.

Appendix~\ref{sec:sparse} compares complexity-rate robustness with double rate robustness in the sparse setting. Complexity-rate robustness says that if both nuisances are moderately sparse, then sample splitting can be eliminated, improving the effective sample size. Double rate robustness says that one nuisance may be quite dense while the other is quite sparse, if we use sample splitting. Appendix~\ref{sec:sparse} also shows how complexity-rate robustness recovers known sufficient conditions in the lasso literature; in this sense, it appears relatively sharp.
\section{Analysis of computational error}\label{sec:compute}

After studying $\hat{a}$ in Section~\ref{sec:rate} and $(\tilde{\theta},\check{\theta},\hat{\theta})$ in Section~\ref{sec:inference}, we attempt to bridge theory with practice by analyzing the computational error for some key function spaces used in adversarial estimation. For random forests, we prove convergence
of an iterative procedure. For the RKHS, we derive a closed form without computational
error. For neural networks, we describe existing results in Appendix~\ref{sec:compute_proof}. Appendix~\ref{sec:compute_proof} also discusses how to choose the regularization hyperparameter values $(\lambda,\mu)$ in accordance with Theorem~\ref{thm:reg-main-error}.
This section and Appendix~\ref{sec:compute_proof} aim to provide practical guidance for empirical researchers.

\textbf{Oracle training for random forest.}
Consider Corollary~\ref{cor:rf}, which uses random forest function spaces. We analyze an optimization procedure called oracle training to handle this case. It may be viewed as a particular criterion for fitting the random forest. More generally, it is an iterative optimization procedure based on zero sum game theory for when $\mcA$ is a non-differentiable function space, hence gradient based methods do not apply.

In this exposition, we study a variation of Estimator~\ref{algo:reg-estimator} with $\lambda=\mu=0$, i.e. without norm-based regularization. Define $\ell(a,f):=\E_n\left[m(Z;f) - a(X)\cdot f(X) - f(X)^2\right]$. We view $\ell(a,f)$ as the payoff of a zero sum game where the players are $a$ and $f$. The game is linear in $a$ and concave in $f$, so it can be solved when $f$ plays a no-regret algorithm at each period $t\in \{1,..., T\}$ and $a$ plays the best response to each choice of $f$.

\begin{proposition}[Oracle training converges]\label{prop:oracle}
Suppose that $g\mapsto \E_n[m(Z; g)]$ has operator norm $1\leq M_n<\infty$, and that $\mcF$ is convex. Suppose that at each period $t\in \{1, \ldots, T\}$, the players follow $f_t = \argmax_{f\in \mcF} \ell(\bar{a}_{t-1}, f)$ and $a_t =\argmin_{a\in \mcA} \ell(a, f_t)$, where $\bar{a}_{t}:= \frac{1}{t} \sum_{\tau=1}^t a_{\tau}$. Then for $T=\Theta\left(\frac{M_n\, \log(1/\epsilon)}{\epsilon}\right)$, $\bar{a}_T$ is an $\epsilon$-approximate solution to Estimator~\ref{algo:reg-estimator} with $\lambda=\mu=0$.
\end{proposition}

    Since $\mcF$ is convex, a no-regret strategy for $f$ is follow-the-leader. At each period, the player maximizes the empirical past payoff $\ell(\bar{a}_{t-1}, f)$. This loss may be viewed as a modification of the Riesz loss. In particular, construct a new functional that is the original functional minus $hf$ then estimate its Riesz representer.

 For any fixed $f_t$, the best response for $a$ is to minimize $\ell(a,f_t)$. Since $\ell(a,f)$ is linear in $a$, this is the same as maximizing $\E_n[a(X) \cdot f_t(X)]$. In other words, $a$ wants to match the sign of $f$. In summary, the best response for $a$ is equivalent to a weighted classification oracle, where the label is $\sign\{f(X_i)\}$ and the weight is $ |f(X_i)|$.

    Since $\ell(a,f)$ is linear in $a$, each $a_t$ is supported on only $t$ elements $\tilde{a}_1,...,\tilde{a}_t$ in the base space. In Corollary~\ref{cor:rf}, we assume that each element of the base space has VC dimension at most $d$. Hence each $a_t$ has VC dimension at most $dt$ and therefore $\bar{a}_T$ has VC dimension at most $dT$ \cite{shalev2014understanding}. Thus the entropy integral~\eqref{eqn:metric-entropy-critical} is of order $\sqrt{\frac{Td \log(n)}{n}}$. Finally, since the $f$ player problem reduces to a modification of the Riesz problem, this bound applies to both of the induced function spaces in oracle training.

\textbf{Closed form for RKHS.}
Consider the setting of Corollary~\ref{cor:rkhs}, which uses RKHSs. We prove that Estimator~\ref{algo:reg-estimator} has a closed form solution without any computational error, and derive its formula.  Our results extend the classic representation arguments of \cite{kimeldorf1971some,scholkopf2001generalized}. We use backward induction, first analyzing the best response of an adversarial maximizer, which we denote by $\hat{f}_a$, as a function of $a$. Then we derive the minimizer $\hat{a}$ that anticipates this best response.

Formally, let $\mathcal{F}=\mathcal{H}$ and $\mcA=\mathcal{H}$, where $\mathcal{H}$ is the RKHS with kernel $k$. In what follows, we denote the usual empirical kernel matrix by $K^{(1)}\in\R^{n\times n}$, with entries given by $K^{(1)}_{ij}=k(X_i,X_j)$, and the usual evaluation vector by $K_{xX}^{(1)}\in\R^{1\times n}$, with  entries given by $[K_{xX}^{(1)}]_{j}=k(x,X_j)$. To express our estimator, we introduce additional kernel matrices $K^{(2)},K^{(3)},K^{(4)}\in\R^{n\times n}$ and additional evaluation vectors $K^{(2)}_{xX},K^{(3)}_{xX},K^{(4)}_{xX}\in\mathbb{R}^{1\times n}$. For readability, we reserve details on how to compute these additional matrices and vectors for Appendix~\ref{sec:compute_proof}. At a high level, these additional objects apply the functional $\theta:g\mapsto \mathbb{E}[m(g;Z)]$ to the kernel $k$ and data $(X_i)_{i=1}^n$ in various ways. For any symmetric matrix $A$, let $A^{-}$ denote its pseudo-inverse. If $A$ is invertible then $A^{-}=A^{-1}$.

\begin{proposition}[Closed form of maximizer]\label{prop:closed1}
For a potential minimizer $a$, the adversarial maximizer $\hat{f}_a$ has a closed form solution with a coefficient vector $\hat{\gamma}_a\in \R^{2n}$. More formally,
$\hat{f}_a(x)=\begin{bmatrix}K^{(1)}_{xX} & K^{(2)}_{xX} \end{bmatrix}\hat{\gamma}_a$ and  $m(x,\hat{f}_a)=\begin{bmatrix}K^{(3)}_{xX} & K^{(4)}_{xX} \end{bmatrix} \hat{\gamma}_a$.
The coefficent vector is explicitly given by $\hat{\gamma}_a=\frac{1}{2}\Delta^{-}\left[V -U \mathbf{a}\right]$ where
\begin{align*}
U :=~& \begin{bmatrix}K^{(1)} \\ K^{(3)} \end{bmatrix} \in \mathbb{R}^{2n \times n}, &
\Delta:=~& U U^{\top} + n\lambda \begin{bmatrix} K^{(1)} & K^{(2)} \\ K^{(3)}  & K^{(4)} \end{bmatrix} \in\mathbb{R}^{2n\times 2n}, &
V := \begin{bmatrix}K^{(2)} \\ K^{(4)} \end{bmatrix} \mathbf{1}_{n} \in \mathbb{R}^{2n};
\end{align*}
$\mathbf{a}\in\mathbb{R}^n$ is defined such that $\mathbf{a}_i=a(x_i)$, and $\mathbf{1}_{n}\in\R^n$ is the vector of ones. 
\end{proposition}

\begin{proposition}[Closed form of minimizer]\label{prop:closed2}
The minimizer $\hat{a}$ has a closed form solution with a coefficient vector $\hat{\beta}\in\R^n$. More formally, $\hat{a}(x)=K^{(1)}_{xX} \hat{\beta}$. The coefficient vector is explicity given by $\hat{\beta}=\left\{A^{\top}  \Delta^{-} A +4n\mu\cdot K^{(1)}\right\}^-A^{\top}\Delta^{-}V$ where $A:= U K^{(1)}$.
\end{proposition}

    Combining Propositions~\ref{prop:closed1} and~\ref{prop:closed2}, it is possible to compute $\hat{\mathbf{a}}$, and hence $\hat{f}_{\hat{a}}(x)$ and $m(x,\hat{f}_{\hat{a}})$. Therefore it is possible to compute the optimized loss in Estimator~\ref{algo:reg-estimator}. While Theorem~\ref{thm:reg-main-error} provides theoretical guidance on how to choose $(\lambda,\mu)$, the optimized loss provides a practical way to choose $(\lambda,\mu)$.

    Kernel balancing weights may be viewed as Riesz representer estimators for the average treatment effect functional $\theta:g\mapsto \mathbb{E}[g(1,W)-g(0,W)]$ evaluated on training data, where $X=(D,W)$, $D$ is the treatment, and $W$ is the covariate. Various estimators have been proposed, e.g. \cite{wong2018kernel,zhao2019covariate,kallus2020generalized,hirshberg2019kernel}, which are vectors in $\mathbb{R}^n$. 
    Our results situate kernel balancing weights within a unified framework for semiparametric inference across general function spaces. The loss of Estimator~\ref{algo:reg-estimator}, the closed form of Proposition~\ref{prop:closed2}, and the norm of our guarantee in Theorem~\ref{thm:reg-main-error} depart from and complement these works.  

    The closed form expressions above involve inverting kernel matrices that scale with $n$. To reduce the computational burden, we derive a Nystr\"om approximation in Appendix~\ref{sec:compute_proof}.

\section{Simulated and real data analysis}\label{sec:sim}

\textbf{Adversarial estimation with general function spaces may improve coverage.} We demonstrate how our proposal, Estimator~\ref{algo:reg-estimator}, compares favorably with alternative estimators in an average treatment effect (ATE) coverage simulation.
Let $X=(D,W)$, where $D$ is the treatment and $W$ are the covariates.
In highly nonlinear simulations with $n=1000$ and $dim(W)=10$, our estimators may achieve nominal coverage where some previous methods break down. In high dimensional simulations with $n=100$ and $dim(W)=100$, our estimators may have lower bias and shorter confidence intervals. These gains seem to accrue from, simultaneously, using flexible function spaces for Riesz estimation and directly estimating the Riesz representer. These simulation designs are challenging, and only particular variations of our estimator work well. Appendix~\ref{sec:sim_detail} gives details.

We implement five variations of Estimator~\ref{algo:reg-estimator}, denoted by $\hat{a}$ in Section~\ref{sec:algo}. The five variations of $\hat{a}$ are (i) sparse linear, (ii) RKHS, (iii) RKHS with Nystr\"om approximation, (iv) random forest, or (v) neural network. Note that (i) echoes lasso Riesz representers, and (ii-iii) extend kernel balancing weights to allow for sample splitting. However, (iv) and (v) are altogether new function spaces. For comparison, we implement two propensity score estimators for the Riesz representer: (vi) logistic and (vii) random forest.

We use these Riesz estimators $\hat{a}$ together with a  boosted regression estimator $\hat{g}$ within the ATE Estimators~\ref{algo:dml} and~\ref{algo:debias-nocross}, denoted by $\check{\theta}$ and $\hat{\theta}$ in Section~\ref{sec:inference}. The former uses sample splitting while the latter does not. We report the coverage, bias, and interval length.
We leave table entries blank when previous work does not provide theoretical justification.

\begin{table}[h]
    \centering
    \begin{subtable}{0.45\textwidth}
        \centering
        \scalebox{0.7}{
    \begin{tabular}{|c|cc|ccccc|}
    \hline 
    Est. & \multicolumn{2}{|c|}{Prop. score} &  \multicolumn{5}{|c|}{Adversarial} \\
 Space   & logistic & R.F. & sparse & \underline{RKHS} & Nystr\"om & \underline{R.F.} & N.N. \\  
    \hline 
        coverage & 83 & 76 &  74 & 95 & 83 & 91 & 69  \\
        bias & -12 & -1 &  -7 & -4 & -3 & 0 & -8  \\
        length & 54 & 29 &  33 & 53 & 32 & 39 & 35 \\
        \hline 
    \end{tabular}
    }
    \caption{Sample splitting}
    \label{tab:nonlinear}
        \end{subtable}
    \hfill 
      \begin{subtable}{0.45\textwidth}
        \centering
         \scalebox{0.7}{
    \begin{tabular}{|c|cc|ccccc|}
    \hline 
    Est. & \multicolumn{2}{|c|}{Prop. score} & \multicolumn{5}{|c|}{Adversarial} \\
 Space   & logistic & R.F. &  sparse & \underline{RKHS} & Nystr\"om & \underline{R.F.} & N.N. \\  
    \hline 
        coverage & 79 & - &  73 & 88 & 79 & 90 & 72  \\
        bias & -11 & - &  -7 & -4 & -4 & -2 & -8  \\
        length & 47 & - & 30 & 36 & 30 & 37 & 33 \\
        \hline 
    \end{tabular}
    }
    \caption{No sample splitting}
    \label{tab:nonlinear_no}
    \end{subtable}
\vspace{-10pt}
    \caption{Nonlinear design $\{n=1000,dim(W)=10\}$. Values are multiplied by $10^2$.}
    \vspace{-10pt}
\end{table}

Tables~\ref{tab:nonlinear} and~\ref{tab:nonlinear_no} present results from 100 simulations of the highly nonlinear design, with and without sample splitting, respectively. We find that our adversarial estimators (ii) and (iv) achieve near nominal coverage with sample splitting (95\% and 91\%) and slightly undercover without sample splitting (88\% and 90\%). For readability, we underline these estimators. None of the propensity score methods (vi-vii) achieve nominal coverage, with or without sample splitting, and they undercover more severely (at best 83\% and 76\%).

\begin{table}
    \centering
    \begin{subtable}{0.45\textwidth}
        \centering
        \scalebox{0.7}{
    \begin{tabular}{|c|cc|ccccc|}
    \hline 
    Est. & \multicolumn{2}{|c|}{Prop. score} &  \multicolumn{5}{|c|}{Adversarial} \\
 Space   & logistic & R.F. & \underline{sparse} & RKHS & Nystr\"om & R.F. & N.N. \\  
   \hline 
        coverage & 92 & 91 &  93 & 89 & 88 & 84 & 3  \\
        bias & 44 & 16 &  8 & 6 & 6 & -7 & -35  \\
        length & 214 & 93 &  88 & 72 & 73 & 126 & 6 \\
        \hline 
    \end{tabular}
    }
    \caption{Sample splitting}
    \label{tab:high}
        \end{subtable}
    \hfill 
      \begin{subtable}{0.45\textwidth}
        \centering
         \scalebox{0.7}{
    \begin{tabular}{|c|cc|ccccc|}
    \hline 
    Est. & \multicolumn{2}{|c|}{Prop. score} & \multicolumn{5}{|c|}{Adversarial} \\
 Space   & logistic & R.F. &  \underline{sparse} & RKHS & Nystr\"om & R.F. & N.N. \\  
     \hline 
        coverage & 79 & - &  88 & 88 & 87 & 91 & 3  \\
        bias & 2 & - &  1 & 8 & 9 & 26 & -33  \\
        length & 64 & - &  75 & 76 & 78 & 184 & 3 \\
        \hline 
    \end{tabular}
    }
    \caption{No sample splitting}
    \label{tab:high_no}
    \end{subtable}
    \vspace{-10pt}
    \caption{High dimensional design $\{n=100,dim(W)=100\}$. Values are multiplied by $10^2$.}
    \vspace{-10pt}
\end{table}

Tables~\ref{tab:high} and~\ref{tab:high_no} present analogous results from the high dimensional design. We find that our adversarial estimator (i) achieves close to nominal coverage with sample splitting (93\%) and slightly undercovers without sample splitting (88\%). For readability, we underline this estimator. The propensity score estimators (vi-vii) also achieve close to nominal coverage (92\%, 91\%). Comparing the instances of near nominal coverage, we see that our estimator (i) has half of the bias (0.08 versus 0.44 and 0.16) and shorter confidence intervals (0.88 versus 2.14 and 0.93) compared to the propensity score estimators (vi-vii).

Appendix~\ref{sec:sim_detail} presents additional simulation results for a more basic design with $n\in\{100,200,500,1000,2000\}$ and $dim(W)=10$. We find that every variation of our estimator achieves nominal coverage, with or without sample splitting. Moreover, in a basic setting, eliminating sample splitting may improve precision---a simple point with practical consequences for applied statistics, which underscores the importance of Theorems~\ref{thm:debias-nocross-stability} and~\ref{thm:debias-nocross}.

In summary, in simple designs, every variation of our estimator works well, and variations without sample splitting have better precision; in the highly nonlinear design, some variations with sample splitting achieve nominal coverage where some previous methods break down; in the high dimensional design, a variation with sample splitting achieves nominal coverage with smaller bias and length than some previous methods. 

The different simulation designs illustrate two virtues of our framework. First, our main result in Section~\ref{sec:rate} applies to variations (i-v) of Estimator~\ref{algo:reg-estimator} in a unified manner. Second, Estimator~\ref{algo:reg-estimator} is compatible with Estimators~\ref{algo:tml},~\ref{algo:dml}, and~\ref{algo:debias-nocross}, which have different finite sample performance in different settings; sample splitting may help or hinder performance.

\textbf{Heterogeneous effects by political environment.}
Finally, we extend the highly influential analysis of \cite{karlan2007does} from parametric estimation to semiparametric estimation. To our knowledge, semiparametric estimation has not been used in this setting before; we provide an empirical contribution. We implement our estimators (i-v) and propensity score estimators (vi-vii) on real data. Compared to the parametric results of \cite{karlan2007does}, which do not allow general nonlinearities, our flexible approach relaxes functional form restrictions and improves precision. Compared to semiparametric results obtained with some previous methods, our approach may improve precision by allowing for several machine learning function classes and reducing approximation error.

\begin{figure}[h]
\begin{centering}
     \begin{subfigure}[b]{0.48\textwidth}
         \centering
         \includegraphics[width=0.9\textwidth]{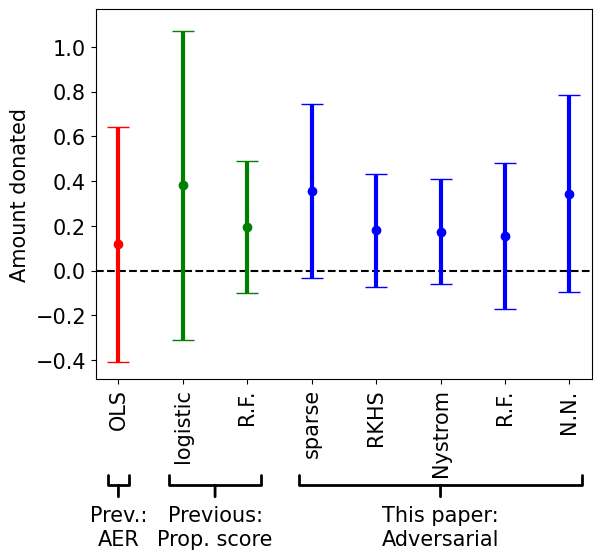}
         \caption{Sample splitting 
         }
     \end{subfigure}
     \hfill
     \begin{subfigure}[b]{0.48\textwidth}
         \centering
         \includegraphics[width=0.9\textwidth]{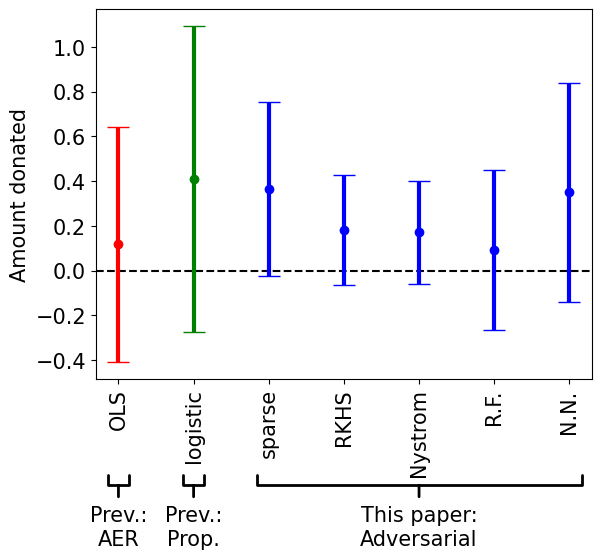}
         \caption{No sample splitting 
         }
     \end{subfigure}
\par
\vspace{-10pt}
\caption{\label{fig:dollars}
{
{Heterogeneous effects on dollars donated \{$n=25859$, $dim(W)=15$\}}. 
}
}
\vspace{-10pt}
\end{centering}
\end{figure}

\begin{figure}[h]
\begin{centering}
     \begin{subfigure}[b]{0.48\textwidth}
         \centering
         \includegraphics[width=0.9\textwidth]{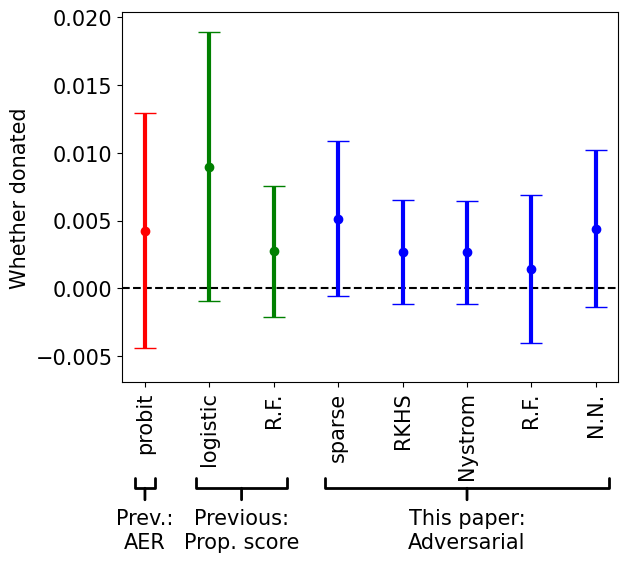}
         \caption{Sample splitting 
         }
     \end{subfigure}
     \hfill
     \begin{subfigure}[b]{0.48\textwidth}
         \centering
         \includegraphics[width=0.9\textwidth]{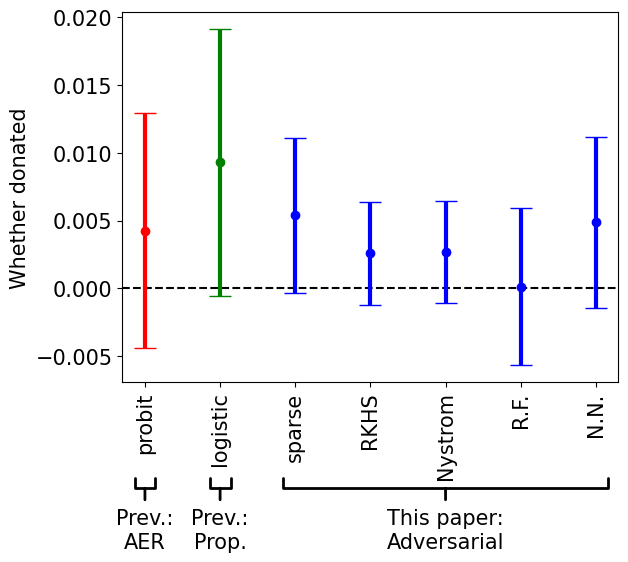}
         \caption{No sample splitting 
         }
     \end{subfigure}
\par
\vspace{-10pt}
\caption{\label{fig:whether}
{
{Heterogeneous effects on whether donated \{$n=25859$, $dim(W)=15$\}}. 
}
}
\vspace{-10pt}
\end{centering}
\end{figure}

We study the heterogeneous effects, by political environment, of a matching grant on charitable giving in a large scale natural field experiment. We follow the variable definitions of \cite[Table 6]{karlan2007does}. In Figure~\ref{fig:dollars}, the outcome $Y\in\mathbb{R}$ is dollars donated, while in Figure~\ref{fig:whether}, $Y\in\{0,1\}$ indicates whether the household donated. The treatment $D\in\{0,1\}$ indicates whether the household received a 1:1 matching grant as part of a direct mail solicitation. The covariates include political environment, previous contributions, and demographics. Altogether, $dim(W)=15$ and $n=25859$ in our sample; see Appendix~\ref{sec:sim_detail}.

A central finding of \cite{karlan2007does} is that ``the matching grant treatment was ineffective in [Democratic] states, yet quite effective in [Republican] states. The nonlinearity is striking...'', which motivates us to formalize a semiparametric estimand. The authors arrive at this conclusion with nonlinear but parametric estimation, focusing on the coefficient of the interaction between the binary treatment and a binary covariate $W_1$ indicating whether the household is in a state that voted for George W. Bush in the 2004 presidential election.

We generalize the interaction coefficient. Denote the regression $g_0\{D,W_1,W_{2:dim(W)}\}=\mathbb{E}\{Y|D,W_1,W_{2:dim(W)}\}$. We consider the parameter $\theta(g_0)$ where $\theta(g)=\mathbb{E}([g\{1,1,W_{2:dim(W)}\}-g\{0,1,W_{2:dim(W)}\}]-[g\{1,0,W_{2:dim(W)}\}-g\{0,0,W_{2:dim(W)}\}])$. Intuitively, this parameter asks: how much more effective was the matching grant in Republican states compared to Democratic states? Its Riesz representer $a_0\{D,W_1,W_{2:dim(W)}\}$ involves products and differences of the inverse propensity scores $[\mathbb{P}\{D=1|W_1,W_{2:dim(W)}\}]^{-1}$ and  $[\mathbb{P}\{W_1|W_{2:dim(W)}\}]^{-1}$, motivating our direct adversarial estimation approach. See Appendix~\ref{sec:sim_detail} for derivations.

Figures~\ref{fig:dollars} and~\ref{fig:whether} visualize point estimates and 95\% confidence intervals for this semiparametric estimand. The former considers effects on dollars donated while the latter considers effects on whether households donated at all. Each figure presents results with and without sample splitting, using propensity score estimators (vi-vii) in green and our proposed adversarial estimators (i-v) in blue. As before, we leave entries blank when previous work does not provide theoretical justification. For comparison, we also present the parametric point estimate and confidence interval of \cite[Table 6, Column 9]{karlan2007does} in red.\footnote{The estimates in red slightly differ from \cite{karlan2007does} since we drop observations receiving 2:1 and 3:1 matching grants and report the functional rather than the probit coefficient; see Appendix~\ref{sec:sim_detail}.}

Our estimates are stable across different estimation procedures and show that heterogeneity in the effect is positive, confirming earlier findings. The results are qualitatively consistent with \cite{karlan2007does}, who use a simpler parametric specification motivated by economic reasoning. Compared to the original parametric findings in red, which are not statistically significant, our findings in blue relax parametric assumptions and improve precision: our preferred confidence interval is more than 50\% shorter in Figures~\ref{fig:dollars} and~\ref{fig:whether}. Compared to the semiparametric methods in green, our preferred confidence interval is more than 20\% shorter in Figures~\ref{fig:dollars} and~\ref{fig:whether}. Our approach appears to improve model fit relative to some previous parametric and semiparametric approaches, and has a unified justification. 
\section{Discussion}\label{sec:conclusion}

This paper uses critical radius theory over general function spaces to provide unified $L_2$ rates for nonparametric estimation of Riesz representers. The results in this paper depart from previous work by allowing for approximation of the Riesz representer by neural networks and random forests. Our results are compatible with targeted and debiased machine learning inference results with sample splitting that allow for mis-specification, as well as ``stability-rate'' and ``complexity-rate''  inference results without sample splitting. Simulations demonstrate that our flexible method may achieve nominal coverage when less flexible methods break down. Our method provides rigorous empirical evidence on heterogeneous effects of matching grants by political environment.


\bibliographystyle{hapalike_mod}

\spacingset{1}
{

}
\spacingset{1.5}


\appendix

\section{Proof of main results}\label{sec:main}

\subsection{Adversarial estimation and fast rate}

For convenience, throughout this appendix we  use the notation
\begin{align*}
    \Psi(a, f) =~& \E[m(Z; f) - a(X)\cdot f(X)],\quad 
    \Psi_n(a, f) =~ \frac{1}{n}\sum_{i=1}^{n} \left(m(Z_i; f) - a(X_i)\cdot f(X_i)\right)\\
    \Psi_n^\lambda(a, f) =~& \Psi_n(a, f) - \|f\|_{2,n}^2 - \lambda \|f\|_{\mcA}^2,\quad 
    \Psi^{\lambda}(a, f) =~ \Psi(a, f) - \frac{1}{4} \|f\|_2^2 - \frac{\lambda}{2}\|f\|_{\mcA}^2.
\end{align*}
Thus
$
    \hat{a} := \argmin_{a\in \mcA} \sup_{f\in \mcF} \Psi_n^{\lambda}(a, f) + \mu \|a\|_{\mcA}^2.
$
We also lighten notation, setting $\delta=\bar{\delta}$.

\begin{proof}[Proof of Theorem~\ref{thm:reg-main-error}]
    We proceed in steps.
    \begin{enumerate}
        \item Relating empirical and population regularization.
By \cite[Theorem 14.1]{wainwright2019high}, with probability $1-\zeta$,
$
    \forall f\in \mcF_{B}: \left|\|f\|_{n,2}^2 - \|f\|_2^2 \right| \leq \frac{1}{2} \|f\|_2^2 + \delta^2 
$
for our choice of $\delta:=\delta_n + c_0 \sqrt{\frac{\log(c_1/\zeta)}{n}}$, where $\delta_n$ upper bounds the critical radius of $\mcF_{B}$ and $c_0, c_1$ are universal constants. Moreover, for any $f$, with $\|f\|_{\mcA}^2\geq B$, we can consider the function $f \sqrt{B}/\|f\|_{\mcA}$, which also belongs to $\mcF_{B}$, since $\mcF$ is star-convex. Thus we can apply the above lemma to this re-scaled function and multiply both sides by $\|f\|_{\mcA}^2/B$:
\begin{equation*}
    \forall f\in \mcF \text{ such that } \|f\|_{\mcA}^2 \geq B: \left|\|f\|_{n,2}^2 - \|f\|_2^2 \right| \leq \frac{1}{2} \|f\|_2^2 + \delta^2\frac{\|f\|_{\mcA}^2}{B}.
\end{equation*}
Thus overall, we have
\begin{equation}\label{eqn:reg-pop-emp}
    \forall f\in \mcF: \left|\|f\|_{n,2}^2 - \|f\|_2^2 \right| \leq \frac{1}{2} \|f\|_2^2 + \delta^2 \max\left\{1, \frac{\|f\|_{\mcA}^2}{B}\right\}.
\end{equation}
Hence with probability $1-\zeta$,
\begin{align*}
    \forall f\in \mcF: \lambda \|f\|_{\mcA}^2 + \|f\|_{2,n}^2 \geq~& \lambda \|f\|_{\mcA}^2 + \frac{1}{2}\|f\|_{2}^2 - \delta^2 \max\left\{1, \frac{\|f\|_{\mcA}^2}{B}\right\}\\
    \geq~& \left(\lambda - \frac{\delta^2}{B}\right)\|f\|_{\mcA}^2 + \frac{1}{2}\|f\|_{2}^2 - \delta^2.
\end{align*}
Assuming that $\lambda \geq \frac{2\delta^2}{B}$, the latter is at least
\begin{equation*}
    \forall f\in \mcF: \lambda \|f\|_{\mcA}^2 + \|f\|_{2,n}^2 \geq \frac{\lambda}{2} \|f\|_{\mcA}^2 + \frac{1}{2}\|f\|_{2}^2 - \delta^2.
\end{equation*}

        \item Upper bounding centered empirical sup-loss.
        We now argue that
$$\sup_{f \in \mcF} (\Psi_n(\hat{a}, f) - \Psi_n(a_*, f)) = \sup_{f \in \mcF} \E_n[(a_*(X)-\hat{a}(X))\, f(X)]$$ 
is small. By the definition of $\hat{a}$,
\begin{equation}\label{eqn:reg-optimality}
\sup_{f\in \mcF} \Psi_n^{\lambda}(\hat{a}, f) \leq \sup_{f\in \mcF} \Psi_n^{\lambda}(a_*, f) + \mu \left(\|a_*\|_{\mcA}^2 - \|\hat{a}\|_{\mcA}^2\right).
\end{equation}

By \cite[Lemma~14]{foster2019orthogonal}, the fact that $m(Z; f) - a_*(X) f(X)$ is $(1+b)$-Lipschitz with respect to the vector $(m(Z;f), f(z))$ (since $a_*(X)\in [-b, b]$) and by our choice of $\delta:=\delta_n + c_0 \sqrt{\frac{\log(c_1/\zeta)}{n}}$, where $\delta_n$ is an upper bound on the critical radii of $\mcF_{B}$ and $m \circ \mcF_B$, with probability $1-\zeta$
\begin{align*}
 \forall f\in \mcF_{B}: \left|\Psi_n(a_*, f) - \Psi(a_*, f)\right| &=O\left(\delta \left(\|f\|_2 + \sqrt{\E[m(Z;f)^2]}\right) + \delta^2\right) \\
 &=O\left(\delta\, M\, \|f\|_2 + \delta^2\right).
\end{align*}
We have invoked Assumption~\ref{ass:strong-smooth}. Thus, if $\|f\|_{\mcA} \geq \sqrt{B}$, we can apply the latter inequality for the function $f \sqrt{B}/\|f\|_{\mcA}$, which falls in $\mcF_{B}$, and then multiply both sides by $\|f\|_{\mcA}/\sqrt{B}$ (invoking the linearity of the operator $\Psi_n(a, f)$ with respect to $f$):
\begin{align}\label{eqn:reg-concentration}
    \forall f \in \mcF: \left|\Psi_n(a_*, f) - \Psi(a_*, f)\right| = O\left(\delta\, M\, \|f\|_2 + \delta^2 \max\left\{1, \frac{\|f\|_{\mcA}}{\sqrt{B}}\right\}\right).
\end{align}

By~\eqref{eqn:reg-optimality} and \eqref{eqn:reg-concentration}, we have that with probability $1-2\zeta$, for some universal constant $C$,
\begin{align*}
   &\sup_{f\in \mcF} \Psi_n^{\lambda}(a_*, f) 
    = \sup_{f\in \mcF} \left(\Psi_n(a_*, f) - \|f\|_{2,n}^2 - \lambda \|f\|_{\mcA}^2\right)\\
    \leq~& \sup_{f\in \mcF} \left(\Psi(a_*, f) + C \delta^2 + \frac{C\delta^2}{\sqrt{B}}\|f\|_{\mcA} + C M \delta \|f\|_2 - \|f\|_{2,n}^2 - \lambda \|f\|_{\mcA}^2\right)\\
    \leq~& \sup_{f\in \mcF} \left(\Psi(a_*, f) + C \delta^2 + \frac{C\delta^2}{\sqrt{B}}\|f\|_{\mcA}  + C M\delta \|f\|_2 - \frac{1}{2} \|f\|_2^2 - \frac{\lambda}{2} \|f\|_{\mcA}^2 + \delta^2\right)\\
    \leq~& \sup_{f\in \mcF} \Psi^{\lambda/2}(a_*, f) + O\left(\delta^2\right)
    + \sup_{f\in \mcF} \left( \frac{C\delta^2}{\sqrt{B}}\|f\|_{\mcA} - \frac{\lambda}{4}\|f\|_{\mcA}^2\right) + \sup_{f\in \mcF} \left(C M\delta \|f\|_2 - \frac{1}{4}\|f\|_{2}^2\right).
\end{align*}
Moreover, observe that for any norm $\|\cdot\|$ and any constants $a,b>0$,
$
    \sup_{f\in \mcF} \left(a\|f\| - b \|f\|^2\right) \leq \frac{a^2}{4b}.
$
Thus if we assume that $\lambda\geq 2\delta^2/B$, we have
\begin{align*}
\sup_{f\in \mcF} \left( \frac{C\delta^2}{\sqrt{B}}\|f\|_{\mcA} - \frac{\lambda}{4}\|f\|_{\mcA}^2\right) \leq~& \frac{C^2 \delta^4}{B \lambda} \leq \frac{C^2}{2} \delta^2,\quad 
    \sup_{f\in \mcF} \left(C M\delta \|f\|_2 - \frac{1}{4}\|f\|_{2}^2\right) \leq~& C^2 M^2 \delta^2.
\end{align*}
Hence we have
$
    \sup_{f\in \mcF} \Psi_n^{\lambda}(a_*, f)\leq \sup_{f\in \mcF} \Psi^{\lambda/2}(a_*, f) + O\left(M^2\, \delta^2\right).
$
Moreover
\begin{align*}
\sup_{f\in \mcF} \Psi_n^{\lambda}(\hat{a}, f) =~& \sup_{f \in \mcF} \left(\Psi_n(\hat{a}, f) - \Psi_n(a_*, f) + \Psi_n(a_*, f) - \|f\|_{2,n}^2 - \lambda \|f\|_{\mcA}^2\right)\\ 
\geq~& \sup_{f \in \mcF} \left(\Psi_n(\hat{a}, f) - \Psi_n(a_*, f) - 2 \|f\|_{2,n}^2 - 2\, \lambda \|f\|_{\mcA}^2\right)  \\
&\quad + \inf_{f\in \mcF} \left(\Psi_n(a_*, f) + \|f\|_{2,n}^2 + \lambda \|f\|_{\mcA}^2\right).
\end{align*}
Since $\Psi_n(a, f)$ is a linear operator of $f$ and $\mcF$ is a symmetric class, we have
\begin{align*}
    &\inf_{f\in \mcF} \left(\Psi_n(a_*, f) + \|f\|_{2,n}^2 + \lambda \|f\|_{\mcA}^2\right) =~ \inf_{f\in \mcF} \left(\Psi_n(a_*, -f) + \|f\|_{2,n}^2 + \lambda \|f\|_{\mcA}^2\right)\\
    &=~ \inf_{f\in \mcF} \left(- \Psi_n(a_*, f) + \|f\|_{2,n}^2 + \lambda \|f\|_{\mcA}^2\right)
    =~ - \sup_{f\in \mcF} \left(\Psi_n(a_*, f) - \|f\|_{2,n}^2 - \lambda \|f\|_{\mcA}^2\right) \\
    &=~ -\sup_{f\in \mcF} \Psi_n^{\lambda}(a_*, f).
\end{align*}
Combining this expression with~\eqref{eqn:reg-optimality} yields
\begin{align*}
    &\sup_{f \in \mcF} \left(\Psi_n(\hat{a}, f) - \Psi_n(a_*, f) - \|f\|_{2,n}^2 - \lambda \|f\|_{\mcA}^2\right) 
    \leq 2\, \sup_{f\in \mcF} \Psi_n^{\lambda}(a_*, f) + \mu \left(\|a_*\|_{\mcA}^2 - \|\hat{a}\|_{\mcA}^2\right)\\
    \leq~& 2\,\sup_{f\in \mcF} \Psi^{\lambda/2}(a_*, f) + \mu \left(\|a_*\|_{\mcA}^2 - \|\hat{a}\|_{\mcA}^2\right) + O\left(M^2\, \delta^2\right).
\end{align*}

        \item Lower bounding centered empirical sup-loss.
First observe that
$
    \Psi_n(a, f) - \Psi_n(a_*, f) = \E_n[(a_*(X)-a(X)) f(X)].
$
Let $\Delta=a_* - \hat{a}$. Suppose that $\|\Delta\|_2\geq \delta$ and let $r = \frac{\delta}{2\|\Delta\|_2}\in [0, 1/2]$. Since $\Delta\in \mcF$ and $\mcF$ is star-convex, we also have that $r \Delta\in \mcF$. Thus
\begin{align*}
    &\sup_{f \in \mcF} \left(\Psi_n(\hat{a}, f) - \Psi_n(a_*, f) - \|f\|_{2,n}^2 - \lambda \|f\|_{\mcA}^2\right) \\
    \geq~& 
    \Psi_n(\hat{a}, r\Delta) - \Psi_n(a_*, r\Delta) - r^2 \|\Delta\|_{2,n}^2 - \lambda r^2 \|\Delta\|_{\mcA}^2\\
    =~& r \E_n\left[(a_*(X) - \hat{a}(X))^2\right] - r^2 \|\Delta\|_{2,n}^2 - \lambda r^2 \|\Delta\|_{\mcA}^2\\
    =~& r \|\Delta\|_{2,n}^2 - r^2 \|\Delta\|_{2,n}^2 - \lambda r^2 \|\Delta\|_{\mcA}^2
    \geq~ r \|\Delta\|_{2,n}^2 - r^2 \|\Delta\|_{2,n}^2 - \lambda \|\Delta\|_{\mcA}^2.
\end{align*}

Moreover, since $\delta_n$ upper bounds the critical radius of $\mcF_{B}$ and by~\eqref{eqn:reg-pop-emp}
\begin{align*}
    r^2 \|\Delta\|_{2,n}^2
    \leq~& r^2 \left(2\|\Delta\|_2^2 + \delta^2 + \delta^2 \frac{\|\Delta\|_{\mcA}^2}{B}\right)
    \leq 2\delta^2 + \delta^2 \frac{\|\Delta\|_{\mcA}^2}{B} \leq 2\delta^2 + \lambda \|\Delta\|_{\mcA}^2.
\end{align*}
Thus
\begin{align*}
    \sup_{f \in \mcF} \left(\Psi_n(\hat{a}, f) - \Psi_n(a_*, f) - \|f\|_{2,n}^2 - \lambda \|f\|_{\mcA}^2\right) \geq~& r \|\Delta\|_{2,n}^2 - 2\delta^2 - 2\lambda \|\Delta\|_{\mcA}^2.
\end{align*}
Furthermore, since $\delta_n$ upper bounds the critical radius of $\mcF_{B}$ and by~\eqref{eqn:reg-pop-emp},
\begin{align*}
    \|\Delta\|_{2,n}^2 \geq \frac{1}{2} \|\Delta\|_2^2 - \frac{\delta^2}{2B}\|\Delta\|_{\mcA}^2 - \delta^2 \geq \frac{1}{2} \|\Delta\|_2^2 - \lambda\|\Delta\|_{\mcA}^2 - \delta^2.
\end{align*}
Thus we have
\begin{align*}
    \sup_{f \in \mcF} \left(\Psi_n(\hat{a}, f) - \Psi_n(a_*, f) - \|f\|_{2,n}^2 - \lambda \|f\|_{\mcA}^2\right) \geq~& \frac{r}{2} \|\Delta\|_{2}^2 - 3\delta^2 - 3\lambda \|\Delta\|_{\mcA}^2\\
    \geq~& \frac{\delta}{4} \|\Delta\|_2  - 3\delta^2 - 3\lambda \|\Delta\|_{\mcA}^2.
\end{align*}

        \item Combining upper and lower bound.
Combining the upper and lower bound on the centered population sup-loss we get that with probability $1-3\zeta$, either $\|\Delta\|_2\leq \delta$ or
\begin{align*}
    \frac{\delta}{4} \|\Delta\|_2 \leq~& O\left(M^2\, \delta^2\right) + 2\,\sup_{f\in \mcF} \Psi^{\lambda/2}(a_*, f) + 3\lambda \|\Delta\|_{\mcA}^2 + \mu \left(\|a_*\|_{\mcA}^2 - \|\hat{a}\|_{\mcA}^2\right).
\end{align*}
We now control the last part. Since $\mu \geq 6\lambda$,
\begin{align*}
    3\lambda \|\Delta\|_{\mcA}^2 + \mu \left(\|a_*\|_{\mcA}^2 - \|\hat{a}\|_{\mcA}^2\right) \leq~&  6\lambda \left(\|a_*\|_{\mcA}^2 + \|\hat{a}\|_{\mcA}^2\right) + \mu \left(\|a_*\|_{\mcA}^2 - \|\hat{a}\|_{\mcA}^2\right) \leq 2 \mu \|a_*\|_{\mcA}^2.
\end{align*}
We then conclude that
\begin{equation*}
    \frac{\delta}{4} \|\Delta\|_2 = O\left(M^2\, \delta^2\right) + 2\,\sup_{f\in \mcF} \Psi^{\lambda/2}(a_*, f) + 2 \mu \|a_*\|_{\mcA}^2.
\end{equation*}
Dividing by $\delta/4$ gives
\begin{equation*}
     \|\Delta\|_2 = O\left(M^2\, \delta\right) + \frac{8}{\delta}\,\sup_{f\in \mcF} \Psi^{\lambda/2}(a_*, f) + 8 \frac{\mu}{\delta} \|a_*\|_{\mcA}^2.
\end{equation*}
Thus either $\|\Delta\|_2\leq \delta$ or the latter inequality holds. In both cases, the latter holds.
        
        \item Upper bounding population sup-loss at minimum.
      By Riesz representation,
\begin{align*}
    \sup_{f\in \mcF} \Psi^{\lambda/2}(a_*, f) =~& \sup_{f\in \mcF} \E[ (a_0(X)-a_*(X))\, f(z) ] - \frac{1}{4} \|f\|_2^2 - \frac{\lambda}{4}\|f\|_{\mcA}^2\\
    \leq~& \sup_{f\in \mcF} \E[ (a_0(X) - a_*(X))\, f(z) ] - \frac{1}{4} \|f\|_{2}^2
    = \|a_0 - a_*\|_2^2.
\end{align*}

        \item Concluding.
In summary,
\begin{equation*}
    \|\hat{a}-a_*\|_2 = O\left(M^2\, \delta\right) + \frac{8}{\delta}\, \|a_*-a_0\|_2^2 + 8 \frac{\mu}{\delta} \|a_*\|_{\mcA}^2.
\end{equation*}
By the triangle inequality,
\begin{equation*}
    \|\hat{a}-a_0\|_2 = O\left(M^2\, \delta\right) + \frac{8}{\delta}\, \|a_*-a_0\|_2^2 + \|a_* - a_0\|_2 + 8 \frac{\mu}{\delta} \|a_*\|_{\mcA}^2.
\end{equation*}

Choosing $a_* = \argmin_{a\in \mcA} \|a-a_0\|_2$ and using the fact that $\delta \geq \epsilon_n$ gives
\begin{equation*}
    \|\hat{a}-a_0\|_2 = O\left(M^2 \delta + \|a_*-a_0\|_2 + \frac{\mu}{\delta} \|a_*\|_{\mcA}^2\right) = O\left(M^2 \delta + \frac{\mu}{\delta} \|a_*\|_{\mcA}^2\right). \qedhere
\end{equation*}
    \end{enumerate}
\end{proof}

\subsection{Semiparametric inference}

\begin{proof}[Proof of Theorem~\ref{thm:debias-nocross}]
    We proceed in steps.

    \begin{enumerate}
        \item Decomposition. Observe that $\theta_0 = \E[m_{a}(Z; g_0)]$ for all $a$. Moreover,
\begin{align*}
    \hat{\theta}-\theta_0 =~& \E_n[m_{\hat{a}}(Z; \hat{g})] - \E[m_{\hat{a}}(Z; \hat{g})] + \E[m_{\hat{a}}(Z; \hat{g})] - \E[m_{\hat{a}}(Z; g_0)]\\
    =~& \E_n[m_{\hat{a}}(Z; \hat{g})] - \E[m_{\hat{a}}(Z; \hat{g})]  + \E[(a_0(X) - \hat{a}(X))\, (\hat{g}(X) - g_0(X))].
\end{align*}
By Assumption~\ref{ass:main-cond}
we have that
\begin{align*}
    \sqrt{n}\left(\hat{\theta}-\theta_0\right) =~& \sqrt{n} \underbrace{\E_n[m_{\hat{a}}(Z; \hat{g})] - \E[m_{\hat{a}}(Z; \hat{g})]}_{A} + o_p(1).
\end{align*}

If $\|\hat{a}-a_*\|_2 \to_p 0$ and $\|\hat{g}-g_*\|_2\to_p 0$ then we can further decompose $A$ as
\begin{align*}
    A =~& \E_n[m_{a_*}(Z; g_*)] - \E[m_{a_*}(Z; g_*)] + \{\E_n\left[m_{\hat{a}}(Z; \hat{g}) - m_{a_*}(Z; g_*)\right] - \E[m_{\hat{a}}(Z; \hat{g}) - m_{a_*}(Z; g_*)]\}.
\end{align*}
The latter two terms in $A$ form an empirical process.
\item Critical radius theory. We derive a concentration inequality similar to~\eqref{eqn:reg-concentration} for the empirical process.  Let $\delta_{n,\zeta}=\delta_n + c_0 \sqrt{\frac{\log(c_1/\zeta)}{n}}$. 
Recall \cite[Lemma 14]{foster2019orthogonal}, which holds for any loss $\ell$ that is Lipschitz in $f(x)$, and when $\|f\|_{\infty}\leq 1$: fix $f'\in\mcF$, then
with probability $1-\zeta$, $\forall f\in\mcF$
$$
 |(\mathbb{E}_n-\mathbb{E})[\{\ell\{f(x),z\}-\ell\{f'(x),z\}]|=O(\delta_{n,\zeta} \|f-f'\|_2+\delta_{n,\zeta}^2).
$$
In order to place critical radius assumptions on the centered function spaces, we take $f'(x)=(0,0,0)$ and we take $f(x)=\{m(z;g-g_*),g(x)-g_*(x),a(x)-a_*(x)\}$. Notice that the loss is Lipschitz since
\begin{align*}
    &m_{a}(Z; g) - m_{a_*}(Z; g_*) 
    =m(Z;g)+a(X)\{Y-g(X)\}-[m(Z;g_*)+a_*(X)\{Y-g_*(X)\}] \\
    &=m(Z;g-g_*) +a(X)\{Y-g_*(X)+g_*(X)-g(X)\}-a_*(X)\{Y-g_*(X)\} \\
    &=m(Z;g-g_*)+\{a(X)-a_*(X)\})\{Y-g_*(X)\}+a(X)\{g_*(X)-g(X)\}\\
    &=m(Z;g-g_*)+\{a(X)-a_*(X)\})\{Y-g_*(X)\}+\{a(X)-a_*(X)\}\{g_*(X)-g(X)\}\\
    &\quad +a_*(X)\{g_*(X)-g(X)\}
\end{align*}
so its derivative is, in absolute value, $(1,|a(X)-a_*(X)|+|a_*(X)|,|Y-g_*(X)|+|g(X)-g_*(X)|)$.\footnote{It is also nonlinear in $f$, so we place the regularity condition of a lower bound on $\delta_n$ in Assumption~\ref{ass:critical2}.} Here, $|a_*(X)|,|Y-g_*(X)|$ are bounded by hypothesis, while $(\hat{\mcG}-g_*)_{B}$, $(\hat{\mcA}-a_*)_B$ have uniformly bounded ranges in $[-b, b]$.
Hence for all $g-g_* \in (\hat{\mcG}-g_*)_{B}$ and for all $(a-a_*)\in (\hat{\mcA}-a_*)_B$, with probability $1-\zeta$, invoking Assumption~\ref{ass:strong-smooth},
\begin{align*}
&\left|\E_n\left[m_{a}(Z; g) - m_{a_*}(Z; g_*)\right] - \E[m_{a}(Z; a) - m_{a_*}(Z; g_*)]\right|\\
&=O\left[\delta_{n,\zeta}\left\{ \|m\circ(g-g_*)\|_2 + \|g-g_*\|_2 + \|a-a_*\|_2 \right\}+\delta_{n,\zeta}^2\right] \\
&=O\left[\delta_{n,\zeta}\left\{ M\|g-g_*\|_2 + \|a-a_*\|_2 \right\}+\delta_{n,\zeta}^2\right].
\end{align*}
Next consider $\|g-g_*\|_{\hat{\mcG}} \geq \sqrt{B}$ and $\|a-a_*\|_{\hat{\mcA}}\geq \sqrt{B}$. We apply the previous result for $(g-g_*)\sqrt{B}/\|g-g_*\|_{\hat{\mcG}}$ and $(a-a_*)\sqrt{B}/\|a-a_*\|_{\hat{\mcA}}$, then multiply both sides by $\|g-g_*\|_{\hat{\mcG}}/\sqrt{B}$ and $\|a-a_*\|_{\hat{\mcA}}/\sqrt{B}$. Hence for all $g-g_*$ and for all $a-a_*$, with probability $1-\zeta$,
\begin{align*}
&\left|\E_n\left[m_{a}(Z; g) - m_{a_*}(Z; g_*)\right] - \E[m_{a}(Z; a) - m_{a_*}(Z; g_*)]\right|\\
&=O\left[\delta_{n,\zeta}\left\{ M\|g-g_*\|_2\|a-a_*\|_{\hat{\mcA}} + \|a-a_*\|_2\|g-g_*\|_{\hat{\mcG}} \right\}+\delta_{n,\zeta}^2\|g-g_*\|_{\hat{\mcG}}\|a-a_*\|_{\hat{\mcA}}\right].
\end{align*}

\item Bounding the empirical process. 
Applying this concentration inequality, with probability $1-\zeta$
\begin{multline*}
     \left|\E_n\left[m_{\hat{a}}(Z; \hat{g}) - m_{a_*}(Z; g_*)\right] - \E[m_{\hat{a}}(Z; \hat{g}) - m_{a_*}(Z; g_*)]\right| \\
     \leq O\left(\delta_{n,\zeta} M \left(\|\hat{a}-a_*\|_2\, \|\hat{g}-g_*\|_{\hat{\mcG}} + \|\hat{g}-g_*\|_2\, \|\hat{a}-a_*\|_{\hat{\mcA}}\right) + \delta_{n,\zeta}^2\, \|\hat{g}-g_*\|_{\hat{\mcG}} \, \|\hat{a}-a_*\|_{\hat{\mcA}}\right).
\end{multline*}
Let $\bar{\delta} = \delta_n + c_0 \sqrt{\frac{c_1 n}{n}}$. Then 
\begin{multline*}
    \left|\E_n\left[m_{\hat{a}}(Z; \hat{g}) - m_{a_*}(Z; g_*)\right] - \E[m_{\hat{a}}(Z; \hat{g}) - m_{a_*}(Z; g_*)]\right| \\
     = O_p\left(\bar{\delta} M \left(\|\hat{a}-a_*\|_2\, \|\hat{g}-g_*\|_{\hat{\mcG}} + \|\hat{g}-g_*\|_2\, \|\hat{a}-a_*\|_{\hat{\mcA}}\right) + \bar{\delta}^2\, \|\hat{g}-g_*\|_{\hat{\mcG}} \, \|\hat{a}-a_*\|_{\hat{\mcA}}\right).
\end{multline*}
If $\|\hat{a} - a_*\|_2, \|\hat{g}-g_*\|_2= O_p(r_n)$ and $\|\hat{a}-a_*\|_{\hat{\mcA}}, \|\hat{g}-g_*\|_{\hat{\mcG}} = O_p(1)$, then
\begin{equation*}
    \left|\E_n\left[m_{\hat{a}}(Z; \hat{g}) - m_{a_*}(Z; g_*)\right] - \E[m_{\hat{a}}(Z; \hat{g}) - m_{a_*}(Z; g_*)]\right| 
     = O_p\left(M\, \bar{\delta} r_n  + \bar{\delta}^2\right).
\end{equation*}
Thus as long as $\sqrt{n}\left(\bar{\delta} r_n + \bar{\delta}^2\right) \to 0$, we have that
\begin{equation*}
    \sqrt{n}\left|\E_n\left[m_{\hat{a}}(Z; \hat{g}) - m_{a_*}(Z; g_*)\right] - \E[m_{\hat{a}}(Z; \hat{g}) - m_{a_*}(Z; g_*)]\right| 
     = o_p(1).
\end{equation*}

\item Collecting results. We conclude that
$
    \sqrt{n}\left(\hat{\theta} - \theta_0\right) = \sqrt{n} \left(\E_n[m_{a_*}(Z; g_*)] - \E[m_{a_*}(Z; g_*)]\right) + o_p(1).
$
By the central limit theorem, the final expression is asymptotically normal with asymptotic variance $\sigma_*^2 =\Var(m_{a_*}(Z; g_*))$. \qedhere
    \end{enumerate}
\end{proof}

\begin{proof}[Proof of Theorem~\ref{thm:debias-nocross-stability}]
    Define the notation
    $
    h:=(g,a)$ and $V(Z; h):=m_{a}(Z; g) - m_{a_*}(Z; g_*) - \E[m_{a}(Z; g) - m_{a_*}(Z; g_*)].
    $
    We argue that
$
    \sqrt{n}\, \E_n\left[V(Z; \hat{h})\right] = o_p(1).
$
Then remainder of the proof is identical to the proof of Theorem~\ref{thm:debias-nocross}. For the desired property to hold, it suffices to show that $n\, \E\left[\E_n\left\{V(Z; \hat{h})\right\}^2\right] \to 0$.
First we rewrite the differences 
\begin{align*}
    V(Z; h) - V(Z;h') 
    =~& m(Z; g-g') + (a(X) - a'(X))\, Y - a(X) g(X) + a'(X) g'(X)\\
    & - \left(\ldot{a_0}{g-g'}_2 - \ldot{a}{g}_2 + \ldot{a'}{g'}_2 + \ldot{a-a'}{g_0}_2\right).
\end{align*}
By Assumption~\ref{ass:strong-smooth} and boundedness,
$
\E\left[\left(V(Z; h) - V(Z; h')\right)^2\right] \leq c_0 \E\left[\|h(X) - h'(X)\|_{\infty}^2\right]
$
for some constant $c_0$. Moreover, since, for every $x, y$: $x^2 \leq y^2 + |x|\, |x-y| + |y|\, |x-y|$,
\begin{align*}
    &\E\left[\E_n[V(Z; \hat{h})]^2\right] 
    = \frac{1}{n^2} \sum_{i, j}\E\left[V(Z_i; \hat{h})  V(Z_j;\hat{h})\right]\\
    \leq~& \frac{1}{n^2}\sum_{i, j} \left(\E\left[V(Z_i;\hat{h}^{-i,j})  V(Z_j;\hat{h}^{-i,j})\right] + 2\, \E\left[\left|V(Z_i;\hat{h}^{-i,j})\right|\, \left|V(Z_j;\hat{h}^{-i,j}) - V(Z_j; \hat{h})\right|\right]\right)\\
    \leq~& \frac{1}{n^2}\sum_{i, j} \left(\E\left[V(Z_i;\hat{h}^{-i,j})  V(Z_j;\hat{h}^{-i,j})\right] + 2\, \sqrt{\E\left[V(Z_i;\hat{h}^{-i,j})^2\right]}\, \sqrt{\E\left[\left(V(Z_j;\hat{h}^{-i,j}) - V(Z_j; \hat{h})\right)^2\right]}\right)\\
    \leq~& \frac{1}{n^2}\sum_{i, j} \left(\E\left[V(Z_i;\hat{h}^{-i,j})  V(Z_j;\hat{h}^{-i,j})\right] + 2\, c_0\,  \sqrt{\E\left[V(Z_i;\hat{h}^{-i,j})^2\right]}\, \sqrt{\E\left[\|\hat{h}^{-i,j}(X_j) - \hat{h}(X_j)\|_{\infty}^2\right]}\right)\\
    \leq~& \frac{1}{n^2}\sum_{i, j} \left(\E\left[V(Z_i;\hat{h}^{-i,j})  V(Z_j;\hat{h}^{-i,j})\right] + 8\, c_0\, \beta_{n-1} \sqrt{\E\left[V(Z_i;\hat{h}^{-i,j})^2\right]}\right).
\end{align*}
For every $i\neq j$ we have
\begin{align*}
    \E[V(Z_i; \hat{h}^{-i,j})  V(Z_j; \hat{h}^{-i,j})] =~& \E\left[\E\left[V(Z_i; \hat{h}^{-i})  V(Z_j; \hat{h}^{-j}) \mid \hat{h}^{-i,j}\right]\right]\\ 
    =~& \E\left[\E\left[V(Z_i;\hat{h}^{-i,j}) \mid \hat{h}^{-i,j}\right]  \E\left[V(Z_j; \hat{h}^{-i,j}) \mid \hat{h}^{-i,j}\right]\right] = 0;\\
\sqrt{\E[V(Z; \hat{h}^{-i,j})^2]} \leq~& O\left(\sqrt{\E\left[\|\hat{a}^{-i,j} - a_*\|_2^2 + \|\hat{g}^{-i,j} - g_*\|_2^2\right]}\right) = O(r_{n-2})\\
\E[V(Z; \hat{h}^{-i})^2] \leq~&  O\left(\E\left[\|\hat{a}^{-i} - a_*\|_2^2 + \|\hat{g}^{-i} - g_*\|_2^2\right]\right) = O(r_{n-1}^2).
\end{align*}
Thus we get that
$
    n\, \E\left[\E_n[V(Z; \hat{h})]^2\right] = \frac{1}{n} \sum_{i=1}^n \E[V(Z_i; \hat{h}^{-i})^2]  + O\left(\beta_{n-1} r_{n-2}\right)= O\left(r_{n-1}^2 + n\,\beta_{n-1} r_{n-2}\right).
$
In summary, it suffices to assume that
$
r_{n-1}^2 + n\,\beta_{n-1} r_{n-2}\to 0
$.
\end{proof}
\section{Application: Asset pricing}\label{sec:asset}

\subsection{Fundamental asset pricing equation}

A variety of financial models deliver the same fundamental asset pricing equation.  We briefly summarize how this equation involves a Riesz representer called the stochastic discount factor (SDF), which plays a central role in the theory of asset pricing. To do so, we introduce some additional notation. Define $\mathbb{E}_{t,i}(\cdot)=\mathbb{E}(\cdot|I_t,I_{t,i})$ where $I_t$ are macroeconomic conditioning variables that are not asset specific, e.g. inflation rates and market return, and $I_{t,i}$ are asset-specific characteristics, e.g. the size or book-to-market ratio of firm $i$ at time $t$.
Let $N$ be the number of firms.

The no-arbitrage condition is equivalent to the existence of strictly positive SDF $M_{t+1}$ such that for any asset,
$
\mathbb{E}_{t,i}(M_{t+1}Q_{t+1,i})=P_{t,i},
$
where $P_{t,i}$ is the price of asset $i$ at time $t$, $Q_{t+1,i}$ is payoff of asset $i$ at time $t+1$, and $M_{t+1}$ is the market-wide SDF at time $t+1$. This equation can also be parametrized in terms of returns. If an investor pays one dollar for an asset $i$ today, the gross rate of return $R_{t+1,i}$ is how many dollars the investor receives tomorrow; formally, the price is $P_{t,i}=1$ and the payoff is $Q_{t+1,i}=R_{t+1,i}$. If an investor borrows a dollar today at the interest rate $R_{t+1}^f$ and buys an asset $i$ that gives the gross rate of return $R_{t+1,i}$ tomorrow, then, from the perspective of the investor who paid nothing out-of-pocket, the price is $P_{t,i}=0$ while the payoff is the excess rate of return $R_{t+1,i}^e:=R_{t+1,i}-R_{t+1}^f$, leading to equation $
\mathbb{E}_{t,i}(M_{t+1}R^e_{t+1,i})=0
$ for any excess return $R^e_{t+1,i}$. 

Another helpful expression involves the tangency portfolio and its portfolio weights. If the fundamental asset pricing equation holds, then $M_{t+1}=1-\sum_{j=1}^N w_0(I_t,I_{t,j})R^e_{t+1,j}$. Here, $w_0(I_t,I_{t,j})$ are called the portfolio weights in the tangency portfolio $\sum_{j=1}^N w_0(I_t,I_{t,j})R^e_{t+1,j}$, and we have assumed that they are summarized by the mapping $w_0$. By substitution and the law of iterated expectations, we conclude that, for any function $g_{t,i}$,
\begin{equation}\label{eq:asset} 
\mathbb{E}\{g_{t,i}(I_t,I_{t,i})R^e_{t+1,i}\}=\mathbb{E}\left\{g_{t,i}(I_t,I_{t,i})R^e_{t+1,i}\sum_{j=1}^N w_0(I_t,I_{t,j})R^e_{t+1,j}\right\}.
\end{equation}

\subsection{Matching symbols}

We clarify the sense in which the SDF estimation problem generalizes the problem we study in the main text. Observe that the right hand side of~\eqref{eq:asset} pools information across $j\in\{1,...,N\}$. Fix $t\in \{1,...,T-1\}$. Stacking~\eqref{eq:asset} across $i\in \{1,...,N\}$ gives a system of $N$ equations. We view the left hand side as a $\R^{N}$-valued functional of the $\R^{N}$-valued function $g_t=(g_{t,1},...,g_{t,N})$. In particular, the $i$-th component of $\theta(g_t)=\mathbb{E}\{m(Z_t;g_t)\}$ is $\mathbb{E}\{m_i(Z_t;g_t)\}=\mathbb{E}\{g_{t,i}(I_t,I_{t,i})R^e_{t+1,i}\}$, where $Z_t=(I_t,I_{t,1},...,I_{t,N},R^e_{t+1,j})$. We view the right hand side as an elementwise product between $g_t$ and $a_0$. In particular, $a_0$ is an $\R^{N}$-valued function as well whose $i$-th component is $R^e_{t+1,i}\sum_{j=1}^N w_0(I_t,I_{t,j})R^e_{t+1,j}$. In summary, the SDF problem is a generalization to vector-valued functions and their vector-valued functionals.

Next we define the key function spaces. We view the portfolio weights $w_0$ as a natural way to parametrize $a_0$, and in fact as a way to define $\mcA=\mcA_{\text{sdf}}$ in our estimation procedure. Since each $g_{t,i}$ only takes as arguments $(I_t,I_{t,i})$, one may consider $\mcG=\mcG_{\text{sdf}}$ as the space of $\R^N$-valued functions whose $i$-th components only depend on $(I_t,I_{t,i})$. As before, we may define $\mcF$ from $\mcG$ accordingly. Since $a_0$ is pinned down by $w_0$ and the latter has a structural interpretation, we may report $\hat{w}\in \mathcal{W}_{\text{sdf}}$ instead of $\hat{a}\in \mcA_{\text{sdf}}$.

For an estimator to be well-defined as the optimizer of a criterion, that criterion must be a scalar-valued function. As we see above, the key objects in the SDF problem are vector-valued. Therefore to derive an estimator, we must posit a way to aggregate the vector objective in $\R^N$ into a scalar objective in $\R$. A natural choice is the empirical average $\frac{1}{N}\sum_{i=1}^n (\cdot)$. Implementing this choice, we arrive at our SDF estimator.

\begin{algo}[Adversarial SDF]\label{algo:sdf-estimator}
    For regularization $(\lambda,\mu)$, define
    \begin{equation*}
    \hat{a} = \argmin_{a\in \mathcal{A}_{\text{sdf}}} \max_{f\in \mcF} \E_t\E_i \{m_i(Z_t; f_t) - a_i(Z_t)\cdot f_i(Z_t)-f_i(Z_t)^2\} - \lambda \|f\|_{\mcA}^2 + \mu \|a\|_{\mcA}^2
\end{equation*}
where $\E_t(\cdot)=\frac{1}{T-1}\sum_{t=1}^{T-1}(\cdot)$ and $\E_i(\cdot)=\frac{1}{N}\sum_{i=1}^{N}(\cdot)$.
\end{algo}

\subsection{Detailed comparisons}

Our contribution most directly relates to previous works that propose adversarial \cite{hansen1997assessing}, neural network \cite{bansal1993no,chen2009land}, and adversarial neural network \cite{chen2019deep} approaches to estimate the SDF. Previous work either provides formal guarantees or allows for flexible, non-Donsker function spaces. We complement these prior works by proposing an adversarial estimator that simultaneously achieves both. More generally, our formal guarantees provide theoretical justification for the impressive empirical performance of machine learning in asset pricing. See e.g. \cite{chen2019deep} for a recent review of machine learning in asset pricing and \cite{christensen2017nonparametric} for a recent review of SDF estimation under Donsker assumptions.

    Our objective for the SDF is
    $$\min_{a\in \mathcal{A}_{\text{sdf}}} \max_{f\in \mcF} \E_t\E_i \{m_i(Z_t; f_t) - a_i(Z_t)\cdot f_i(Z_t)-f_i(Z_t)^2\} - \lambda \|f\|_{\mcA}^2 + \mu \|a\|_{\mcA}^2.$$
   For comparison, \cite{chen2019deep} essentially propose the following objective for the SDF:
   $$\min_{a\in \mathcal{A}_{\text{sdf}}} \max_{f\in \mcF} \E_i [\E_t\{m_i(Z_t; f_t) - a_i(Z_t)\cdot f_i(Z_t)\}]^2.$$
   The main difference is that we provide theoretical guarantees. 
   In the main text, we discuss how our results apply to variants of our estimator without regularization, using additional assumptions.

\section{Sparse linear case}\label{sec:sparse}

\subsection{Complexity-rate robustness versus double rate robustness}

How does ``complexity-rate robustness'' compare to ``double rate robustness'' in a concrete example?    
    Consider estimating $\hat{g}$ and $\hat{a}$ with $\ell_1$ constrained linear functions in $p$ dimensions. Suppose that $(g_*,a_*)$ are sparse linear functions with at most $s\ll n$ nonzero coefficients. Under a restricted eigenvalue condition on the covariates, a lasso estimator $\hat{g}$ and a sparse linear adversarial estimator $\hat{a}$ satisfy  $\bar{\delta}=O\left\{\sqrt{\frac{s\log(p)}{n}}\log(n)\right\}$ and  $r_n=O\left\{\sqrt{\frac{s\log(p)}{n}}\right\}$. Thus $s=o\left[\sqrt{n}/\{\log(p)\log(n)\}\right]$ suffices for Theorem~\ref{thm:debias-nocross}. See formal statements below.
    
    For comparison, for the sample splitting estimator, a sufficient condition is double rate robustness, which amounts to double sparsity robustness: $\sqrt{s_g\, s_a} = o\left\{\sqrt{n}/\log(p)\right\}$, where $(s_g,s_a)$ are the numbers of nonzero coefficients of $(g_*,a_*)$.\footnote{More generally, for the sample splitting estimator, the well known sufficient condition is $\sqrt{n}(r_n^ar_n^g+\epsilon_n^a\epsilon_n^g)\rightarrow0$, where $\|\hat{a}-a_*\|_2=o_p(r^a_n)$,  $\|a_*-a_0\|_2\leq \epsilon^a_n$, $\|\hat{g}-g_*\|_2=o_p(r^g_n)$, and $\|g_*-g_0\|_2\leq \epsilon^g_n$.} 
    
    Theorem~\ref{thm:debias-nocross}'s requirement is slightly stronger. It disallows a setting in which one nuisance is quite dense while the other is quite sparse. Still, if both nuisances are moderately sparse, then sample splitting can be eliminated, improving the effective sample size.

     Theorem~\ref{thm:debias-nocross}'s sufficient condition for the sparse linear setting, $s=o\left[\sqrt{n}/\{\log(p)\log(n)\}\right]$, recovers the sufficient conditions of \cite[eq. 5.5]{belloni2014pivotal} and \cite[Condition 3(iv)]{belloni2014uniform} up to logarithmic factors. In this sense it appears relatively sharp. We clarify a general complexity-rate robustness condition, implicit in \cite[eq. 19]{hirshberg2019augmented}, that applies to a broad range of settings beyond the sparse linear case.

On the one hand, eliminating sample splitting may increase the effective sample size. On the other, it may increase ``own observation'' bias \cite{newey2018cross}. Section~\ref{sec:sim} and Appendix~\ref{sec:sim_detail} show that the former phenomenon may outweigh the latter in some cases. Future research may formalize this trade-off in finite samples.

\subsection{Fast rate}

We now demonstrate how our general results specialize to sparse linear function spaces. We consider exactly sparse functions, then approximately sparse functions. For the former, define the class of $s$-sparse linear function classes in $p$ dimensions with bounded coefficients, i.e., 
$
\mcA_{\splin} :=\{x \to \ldot{\theta}{x}: \|\theta\|_{0} \leq s, \|\theta\|_{\infty}\leq b\}. 
$
Observe that $\mcF$ is also the class of $s$-sparse linear functions, with bounded coefficients in $[-2b,2b]$.

\begin{corollary}[Exactly sparse linear Riesz representer rate]\label{cor:sparse-linear}
Take $\mcA=\mcA_{\splin}$. Suppose that the covariate has a bounded $\ell_1$-norm almost surely. 
      Then the estimator presented in Corollary~\ref{cor:main-error} satisfies, with probability $1-\zeta$,
    \begin{equation*}
        \|\hat{a}-a_0\|_{2} \leq O\left\{\min_{a\in A_{\splin}} \|a-a_0\|_2 + \sqrt{\frac{s\log(p\, b)\,\log(n)}{n}} + \sqrt{\frac{\log(1/\zeta)}{n}}\right\}.
    \end{equation*}
\end{corollary}

Towards an additional result, we introduce the following notation:
\begin{equation*}
\spanF_{\kappa}(\mcF) := \left\{\sum_{i=1}^p w_i f_i: f_i\in \mcF, \|w\|_1\leq \kappa, p\leq\infty\right\}.
\end{equation*}

\begin{proposition}[Mean square rate without $\ell_{2}$ regularization via symmetry]\label{prop:reg-main-error-2}
Consider a set of test functions $\mcF=\cup_{i=1}^d \mcF^i$, that decomposes as a union of $d$ symmetric test function spaces $\mcF^i$. Suppose that $\mcA$ is star-convex. Consider the adversarial estimator
\begin{equation*}
    \check{a} = 
    \argmin_{a\in \mcA} \,\,\,\,\sup_{f\in \mcF}\,\, \E_n\{m(Z; f) - a(X)\cdot f(X)\}+ \lambda \|a\|_{\mcA}.
\end{equation*}
Let $m\circ \mcF^i = \{m(\cdot; f): f\in \mcF^i\}$ and
$$
\delta_{n,\zeta}:=2\max_{i} \left\{\mcR(\mcF^i) + \mcR(m\circ \mcF^i)\right\} + c_0 \sqrt{\frac{\log(c_1\, d/\zeta)}{n}},
$$
for some universal constants $c_0, c_1$ and $B_{n,\lambda,\zeta}:= \left(\|a_0\|_{\mcA} + \delta_{n,\zeta}/\lambda\right)^2$. Then with probability $1-\zeta$, we have $\|\check{a}\|_{\mcA} \leq \|a_0\|_{\mcA} + \delta_{n,\zeta}/\lambda$.
Moreover, suppose $\lambda\geq \delta_{n,\zeta}$ and
\begin{equation*}
\textstyle{\forall a\in \mcA_{B_{n,\lambda,\zeta}} \text{ with } \|a-a_0\|_2\geq \delta_{n,\zeta}: \frac{a - a_0}{\|a-a_0\|_2} \in \spanF_{\kappa}(\mcF)}.
\end{equation*}
Then with probability $1-\zeta$,
\begin{equation*}
    \|\check{a}-a_0\|_2 \leq \kappa \left[ 2\left(\|a_0\|_{\mcA}+1\right\} \mcR(\mcA_1) + \delta_{n,\zeta} + \lambda \left\{\|a_0\|_{\mcA}-\|\hat{a}\|_{\mcA}\right\}\right].
\end{equation*}
\end{proposition}

Proposition~\ref{prop:reg-main-error-2}, allows us to prove a similar guarantee that relaxes the hard sparsity constraint in $\mcA$. Instead, we consider $\ell_1$-bounded high dimensional linear function classes and we impose a restricted eigenvalue condition which is typical for such relaxations.

\begin{corollary}[Approximately sparse linear Riesz representer rate]\label{cor:sparse-linear-reg-ell1}
Suppose that $a_0(x)=\ldot{\theta_0}{x}$ with $\|\theta_0\|_0\leq s$, $\|\theta_0\|_1\leq B$, $\|\theta_0\|_{\infty}\leq 1$ and $\|x\|_{\infty} \leq 1$.  Let $\delta_{n,\zeta}=c_0 \sqrt{\frac{\log(c_1 p/\zeta)}{n}}$ for appropriate universal constants $c_0, c_1$. Suppose that the covariance matrix $V=\E[XX']$ satisfies the restricted eigenvalue condition:
\begin{equation*}
    \forall \nu\in \R^p \text{ such that } \|\nu_{S^c}\|_1 \leq \|\nu_S\|_1 + \delta_{n,\zeta}/\lambda: \nu^\top  V\nu \geq \gamma \|\nu\|_2^2.
\end{equation*}
Let $\mcA = \{x\to \ldot{\theta}{x}: \theta \in \R^p\}$, $\|\ldot{\theta}{\cdot}\|_{\mcA}=\|\theta\|_1$, and $\mcF=\{x \to  \xi x_i: i\in [p], \xi\in \{-1, 1\}\}$. Then the estimator in Proposition~\ref{prop:reg-main-error-2}, with $\delta_{n,\zeta}\leq \lambda\leq \frac{\gamma}{8s}$, produces an estimate $\hat{a}(\cdot)=\ldot{\hat{\theta}}{\cdot}$ which satisfies, with probability $1-\zeta$, that $\|\hat{\theta}\|_1 \leq \|\theta_0\|_1 + \delta_{n,\zeta}/\lambda$ and that
\begin{equation*}
  \|\hat{a}-a_0\|_2 \leq O\left[\max\left\{1, \frac{1}{\lambda}\frac{\gamma}{s}\right\} \sqrt{\frac{s}{\gamma}} \left\{(\|\theta_0\|_1 + 1)\sqrt{\frac{\log(p)}{n}} + \sqrt{\frac{\log(p/\zeta)}{n}}\right\}\right].
\end{equation*}
\end{corollary}

If the unrestricted minimum eigenvalue of $V$ is at least $\gamma$, then the restricted eigenvalue condition always holds. We only require a condition on the population covariance matrix $V$ and not on the empirical covariance matrix.

    The Dantzig selector objective of \cite{chernozhukov2018global} is essentially
    $$
\tilde{a}'=\argmin_{a\in\mcA} \|a\|_{\mcA} \text{ such that } \|\mathbb{E}_n \{m(Z;f)-a(X)\cdot f(X)\}\|_{\infty}\leq \lambda 
    $$
where $f:\mcX\rightarrow\mathbb{R}^p$ is a pre-specified dictionary of basis functions.
    The lasso objective of \cite{chernozhukov2018learning} is essentially
    $$
    \tilde{a}''=\argmin_{a\in\mcA} \mathbb{E}_n\left\{-2m(Z;a)+a(X)^2\right\}+2\lambda\|a\|_{\mcA}.
    $$
    The constraint in the former may be viewed as the first order condition of the latter when $a(X)=\theta'f(X)$. By contrast, our objective in Proposition~\ref{prop:reg-main-error-2} is
    $$
    \check{a} = 
    \argmin_{a\in \mcA} \,\,\,\,\sup_{f\in \mcF}\,\, \E_n\{m(Z; f) - a(X)\cdot f(X)\}+ \lambda \|a\|_{\mcA},
    $$
    which may be viewed as directly regularizing the lasso first order condition. A key point of departure from previous work in sparse linear settings is the adversarial choice of $f\in \mcF$.

\subsection{Semiparametric inference}

We provide formal results 
for the sparse linear case. We  appeal to the generality of Assumption~\ref{ass:critical2}, whereby it suffices to study the complexity of $(\hat{\mcG}-g_*)_{B}$, $\{m\circ (\hat{\mcG}-g_*)\}_B$, and $(\hat{\mcA}-a_*)_B$ whose elements are bounded in $[-b,b]$.

We introduce the following general notation, which we will subsequently specialize for each of the function spaces of interest. Let $\mcA = \{\theta'\phi(X)\}$ with $\phi(X)\in \R^p$ and $\|\phi(X)\|_{\infty}\leq 1$ almost surely. Throughout, we assume $\E[\phi(X)\phi(X)']\succeq \gamma I$. Let $\|\theta'\phi(X)\|_{\mcA} = \|\theta\|_1$. Let $a_*=\theta_*'\phi(X)$ and suppose that $a_*$ is $s$-sparse. Let $\nu = \theta-\theta_*$. Let $T$ denote the set of coordinate indices of the support of $\theta_*$, and let $T^c$ denote its complement. 

\begin{definition}[Restricted cone]
    For any $\mu,\kappa\geq0$, we define the restricted cone
    $$
     (\mcA-a_*)_B = \{\nu'\phi(X): \|\nu\|_1^2 \leq B, \|\nu_{T^c}\|_1 \leq \mu \|\nu_T\|_1 + \kappa\}.
    $$
    Its elements are bounded in $[-b,b]$ with $b=\sqrt{B}$.
\end{definition}

\begin{corollary}[Sparse linear hypothesis spaces are restricted cones]\label{cor:cone}
    Consider the sparse linear hypothesis space of the form 
    $$
     (\tilde{\mcA}-a_*)_{\tilde{B}} = \{\nu'\phi(X): \|\theta\|_1^2 \leq\tilde{B}\},\quad \tilde{B}=(\kappa+\|\theta_*\|_1)^2,\quad \kappa\rightarrow0.
     $$
This space is a special case of the restricted cone with $B=(\kappa+2\|\theta_*\|_1)^2$, $\mu=1$, and $\kappa\rightarrow0$.
\end{corollary}

\begin{proposition}[Critical radius of restricted cone]\label{prop:cone}
    The critical radius of $ (\mcA-a_*)_B$ is 
    $$
\delta_n=c_1  \cdot \log(n) \log(B) B^{1/2} \cdot \max\left\{ \mu \sqrt{\frac{s\log(p)}{n\gamma}}, \kappa^{1/2} \left(\frac{\log(p)}{n}\right)^{1/4}\right\}.
$$    
    where $c_1$ is a universal constant.
\end{proposition}

\begin{corollary}[Critical radius of $(\hat{\mcG}-g_*)_B$]\label{cor:g_cr}
    For $\hat{g}$, consider the constrained lasso with a hard constraint of $B$ that is sufficiently large. Let $g_*$ be the minimizer of the population square loss over the high dimensional linear function space. Then $\hat{\mcG}$ is the restricted cone with $\mu$ a small constant and  $\kappa=0$. If $g_*$ is $s$ sparse and $\E[\phi(X)\phi(X)']\succeq \gamma I$ then the critical radius of $(\hat{\mcG}-g_*)_B$ is 
    $
\delta_n=c_1  \cdot \log(n) \log(B) B^{1/2} \cdot  \mu \sqrt{\frac{s\log(p)}{n\gamma}}.
$
\end{corollary}

\begin{corollary}[Critical radius of $\{m\circ(\hat{\mcG}-g_*)\}_B$]\label{cor:m_cr}
Suppose the conditions of Corollary~\ref{cor:g_cr} hold, replacing $\E[\phi(X)\phi(X)']\succeq \gamma I$ with $\E[\psi(X)\psi(X)']\succeq \gamma I$ where $\psi(X)=m\circ \phi(X)$. Then the critical radius of $\{m\circ(\hat{\mcG}-g_*)\}_B$ is
 $
\delta_n=c_1  \cdot \log(n) \log(B) B^{1/2} \cdot  \mu \sqrt{\frac{s\log(p)}{n\gamma}}.
$
\end{corollary}

\begin{corollary}[Critical radius of $(\hat{\mcA}-a_*)_B$]\label{cor:a_cr}
    For $\hat{a}$, consider the sparse linear adversarial estimator of Corollary~\ref{cor:sparse-linear-reg-ell1} with $\lambda=\gamma/8s$. Then $\hat{\mcA}$ is the restricted cone with $\mu=1$ and $\kappa=c_2 \frac{s}{\gamma} \sqrt{\frac{\log(p/\zeta)}{n}}$ where $c_2$ is a universal constant. If $a_*$ is $s$ sparse, $s=o\left\{\sqrt{\frac{n}{\log(p)}}\right\}$, and $\E[\phi(X)\phi(X)']\succeq \gamma I$ then the critical radius of $(\hat{\mcA}-a_*)_B$ is
    $\delta_n=c_3  \cdot \log(n) \log(\|\theta_*\|_1) \|\theta_*\|_1 \cdot \sqrt{\frac{s\log(p/\zeta)}{n\gamma}}$ where $c_3$ is a universal constant.
\end{corollary}

    The rates for $\hat{g}$ and $\hat{a}$ are $r_n=O\left\{\sqrt{\frac{s\log(p)}{n}}\right\}$. Corollaries~\ref{cor:g_cr},~\ref{cor:m_cr}, and~\ref{cor:a_cr} imply $\bar{\delta}=O\left\{\sqrt{\frac{s\log(p)}{n}}\log(n)\right\}$. Hence our complexity-rate robustness condition $\sqrt{n}\bar{\delta}\, r_n\to 0$ holds when $s=o\{\frac{\sqrt{n}}{\log(p)\log(n)}\}$.

\subsection{Computational analysis}

In the case of sparse linear functions, the estimator in Proposition~\ref{prop:reg-main-error-2} optimizes
\begin{equation*}\label{eqn:minimax-ell1}
    \min_{\theta\in \R^p: \|\theta\|_1\leq B}\,\, \max_{i\in [2p]} \E_n\left[m(Z; f_i) - f_i(X)\, \ldot{\theta}{X} \right] + \lambda \|\theta\|_1,
\end{equation*}
where $f_i(X) = X_i$ for $i\in \{1, \ldots, p\}$ and $f_i(X)=-X_i$ for $i\in \{p+1, \ldots, 2p\}$.
One optimization approach is sub-gradient descent, which yields an $\epsilon$-approximate solution after $O\left(p/\epsilon^2\right)$ steps. In what follows, we present and analyze an alternative optimization approach that improves the number of steps to $O\left\{\log(p)/\epsilon\right\}$. 

In this approach, we view the problem as a zero sum game with players $\theta$ and $i$, and use simultaneous gradient descent. The minimizer uses a strategy called Optimistic-Follow-the-Regularized-Leader (OFTRL) with an entropic regularizer. The maximizer uses a strategy called Optimistic Hedge (OH) over probability distributions on the finite set of test functions, analogous to \cite[Proposition 13]{dikkala2020minimax}.

To present the optimization routine, we rewrite the problem so that the maximizer optimizes over distributions in the $2p$-dimensional simplex, i.e.
\begin{equation*}
    \min_{\theta\in \R^p: \|\theta\|_1\leq B}\,\, \max_{w\in \R_{\geq 0}^{2p}:\|w\|_1=1} \E_n\left[m(Z; \ldot{w}{f}) - \ldot{w}{f}(X)\, \ldot{\theta}{X} \right] + \lambda \|\theta\|_1,
\end{equation*}
 where $f=(f_1, \ldots, f_{2p})$ denotes a $2p$ vector-valued function. To avoid the non-smoothness of the $\ell_1$ penalty, we introduce the augmented vector $V=(X; -X)$. We further rewrite the problem so that the minimizer optimizes over the positive orthant of a $2p$-dimensional vector $\rho=(\rho^+; \rho^-)$, with an $\ell_1$ bounded norm. Matching symbols, $\theta=\rho^+ - \rho^-$, and
\begin{equation*}
    \min_{\rho\in \R_{\geq 0}^{2p}: \|\rho\|_1\leq B}\,\, \max_{w\in \R_{\geq 0}^{2p}:\|w\|_1=1} \E_n\left[m(Z; \ldot{w}{f}) - \ldot{w}{V}\, \ldot{\rho}{V} \right] + \lambda \sum_{i=1}^{2p} \rho_i
\end{equation*}
where $\ldot{w}{f}(X) = \ldot{w}{V}$. Finally, for a matrix $A$, define the notation  $\|A\|_{\infty}=\max_{i, j} |A_{ij}|$.

\begin{proposition}[Sparse linear optimization converges]\label{prop:sparse-optimization-ell1}
Consider the procedure that for $t=1, \ldots, T$ sets
\begin{align*}
    \tilde{\rho}_{i, t+1} =~& \tilde{\rho}_{i, t} e^{- 2\frac{\eta}{B}\, \left\{- \E_n(V_i\, \ldot{V}{w_t}) + \lambda\right\} + \frac{\eta}{B}\, \left\{- \E_n(V_i\, \ldot{V}{w_{t-1}}) + \lambda\right\}} &
    \rho_{t+1} =~& \tilde{\rho}_{t+1}\, \min\left\{1, \frac{B}{\|\tilde{\rho}_{t+1}\|_1}\right\}\\
    \tilde{w}_{i, t+1} =~& w_{i, t} e^{2\, \eta\, \E_n\{m(Z; f_i) - V_i \ldot{V}{\rho_t}\} - \eta\, \E_n\{m(Z; f_i) - V_i \ldot{V}{\rho_{t-1}}\}} & w_{t+1} =~& \frac{\tilde{w}_{t+1}}{\|\tilde{w}_{t+1}\|_1}
\end{align*}
with $\tilde{\rho}_{i,-1}=\tilde{\rho}_{i,0}=1/e$ and $\tilde{w}_{i,-1}=\tilde{w}_{i,0}=1/(2p)$. Return $\bar{\rho}=\frac{1}{T} \sum_{t=1}^T \rho_t$. Then for 
\begin{equation*}
\eta=\frac{1}{4\|\E_n(VV^\top)\|_{\infty}},\quad T=16\|\E_n(VV^\top)\|_{\infty} \frac{4B^2 \log(B\vee 1) + (B+1) \log(2p)}{\epsilon},
\end{equation*}
where $\eta$ is the step size and $T$ is the number of steps, the parameter $\bar{\theta}=\bar{\rho}^{+} - \bar{\rho}^-$ is an $\epsilon$-approximate solution for the estimator in Proposition~\ref{prop:reg-main-error-2}.
\end{proposition}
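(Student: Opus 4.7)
The plan is to cast \eqref{eqn:minimax-ell1} as a convex-concave (in fact, bilinear-plus-linear) zero-sum game between $\rho$ and $w$, recognize the prescribed updates as simultaneous Optimistic Follow-the-Regularized-Leader (OFTRL) with entropic potentials, and invoke the ``regret bounded by variation in utilities'' property to obtain an $O(1/T)$ rate for the averaged iterate. This parallels the template of Proposition~13 of \cite{dikkala2020minimax}.

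First I would isolate the per-round linear payoffs: the $w$-player receives the reward vector $\ell^w_{i,t} = \E_n[m(Z; f_i) - V_i \ldot{V}{\rho_t}]$ and the $\rho$-player receives the loss vector $\ell^\rho_{i,t} = -\E_n[V_i \ldot{V}{w_t}] + \lambda$. With these identifications, the $w$-update is Optimistic Hedge on the $2p$-simplex with step size $\eta$ and regularizer $\tfrac{1}{\eta}\sum_i w_i\log w_i$, while the $\rho$-update is OFTRL on the nonnegative orthant with scaled regularizer $\tfrac{B}{\eta}\sum_i\rho_i(\log\rho_i-1)$; the truncation $\rho_{t+1}=\tilde{\rho}_{t+1}\min\{1,B/\|\tilde{\rho}_{t+1}\|_1\}$ is precisely the Bregman projection onto $\{\rho\geq 0:\|\rho\|_1\leq B\}$ under that potential.

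Next I would apply the standard OFTRL regret bound. Since the negative entropy is $1$-strongly convex with respect to $\ell_1$, for any $w^\star$ in the simplex,
\begin{equation}
\sum_{t=1}^T \ldot{\ell^w_t}{w^\star - w_t} \leq \frac{\log(2p)}{\eta} + \eta\sum_{t=1}^T\|\ell^w_t-\ell^w_{t-1}\|_\infty^2 - \frac{1}{4\eta}\sum_{t=1}^T\|w_t-w_{t-1}\|_1^2,
\end{equation}
and analogously the $\rho$-player admits a bound with leading term $O\bigl(B\log(B\vee 1)+B\log(2p)\bigr)/\eta$ and stability $\tfrac{1}{4\eta B}\sum_t\|\rho_t-\rho_{t-1}\|_1^2$ subtracted. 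The payoff variations satisfy the bilinear Lipschitz estimate $\|\ell^w_t-\ell^w_{t-1}\|_\infty \leq \|\E_n[VV^\top]\|_\infty\,\|\rho_t-\rho_{t-1}\|_1$ and symmetrically $\|\ell^\rho_t-\ell^\rho_{t-1}\|_\infty \leq \|\E_n[VV^\top]\|_\infty\,\|w_t-w_{t-1}\|_1$. Summing the two regret inequalities with $\eta=1/(4\|\E_n[VV^\top]\|_\infty)$ causes the positive variation terms to be absorbed by the negative stability terms across the two players, leaving the joint regret bounded by the $O\bigl(B^2\log(B\vee 1)+(B+1)\log(2p)\bigr)/\eta$ expression appearing in the statement.

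Finally, I would close with the standard no-regret-to-minimax reduction: for a convex-concave objective $L(\rho,w)$, the average iterates obey
\begin{equation}
\max_w L(\bar{\rho},w) - \min_\rho L(\rho,\bar{w}) \leq \frac{R_\rho+R_w}{T},
\end{equation}
so choosing $T$ at the threshold stated in the proposition forces this gap below $\epsilon$ and hence $\bar{\theta}=\bar{\rho}^+-\bar{\rho}^-$ is an $\epsilon$-approximate solution to \eqref{eqn:minimax-ell1}. The principal obstacle I expect is the bookkeeping for the $\rho$-player: the feasible set is not the standard simplex but the $B$-scaled $\ell_1$-ball intersected with the nonnegative orthant, so the regularizer must be rescaled by $B$ and the dual step by $1/B$; verifying that (i) the closed-form OFTRL minimizer produces exactly the written exponential-weight update, (ii) the truncation is the Bregman projection under the scaled entropy, and (iii) the resulting strong-convexity/stability constants line up so that the RVU cancellation still balances the cross-variation terms, is the delicate piece.
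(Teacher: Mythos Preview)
Your proposal is correct and follows essentially the same approach as the paper: the paper likewise identifies the updates as simultaneous OFTRL with entropic regularizers (scaling the $\rho$-player's entropy by $B$ to obtain $1$-strong convexity in $\ell_1$ on the $B$-ball), verifies the bilinear Lipschitz constant $L=\|\E_n[VV^\top]\|_\infty$ in the $\ell_\infty/\ell_1$ dual pair, and then invokes the RVU/fast-convergence result for OFTRL in zero-sum games (packaged there as a general proposition citing \cite{syrgkanis2015fast,Rakhlin2013}) to get the $O(R_*/(\eta T))$ duality gap. The only cosmetic difference is that the paper abstracts the cancellation step into a black-box ``approximate equilibrium from OFTRL'' proposition rather than writing out the RVU telescoping explicitly as you do; your bookkeeping concerns about the $\rho$-player's projection and rescaling are exactly the points the paper addresses, and your leading term for the $\rho$-regret should read $O(B^2\log(B\vee 1)+B\log(2p))$ rather than $O(B\log(B\vee 1)+B\log(2p))$, consistent with the final constant you quote.
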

\section{Fast rate details}\label{sec:rate_proof}

\subsection{Additional results: Mean square continuity and critical radius}

\begin{proposition}[Mean square continuity]\label{prop:continuity}
    Under simple conditions, the average treatment effect, average policy effect, policy effect from transporting covariates, cross effect, regression decomposition, average treatment on the treated, and local average treatment effect are mean square continuous.
\end{proposition}

\begin{lemma}[Comparing critical radii]\label{lemma:bounded}
    Consider the space $\mcF$, and define $\mcF^b=\{f/b:f\in\mcF\}$. Let $\delta_n$ be the critical radius of $\mcF^b$. Then the critical radius of $\mcF$ is bounded by $b\delta_n$.
\end{lemma}


\subsection{Proof of corollaries}

\begin{proof}[Proof of Corollary~\ref{cor:limit}]
By Riesz representation and the first order condition,
    \begin{align*}
    \max_{f\in \mcF} \E\left[m(Z; f) - a(X)\cdot f(X)\right] - \|f\|_{2}^2 =~& \max_{f\in \mcF} \E\left[\{a_0(X) - a(X)\}\cdot f(X) -  f(X)^2\right]\\
    =~& \frac{1}{4}\E\left[\{a_0(X) - a(X)\}^2\right] = \frac{1}{4} \|a-a_0\|_2^2.
\end{align*}
\end{proof}

\begin{proof}[Proof of Corollary~\ref{cor:weaker_metric}]
    The metric $\|\cdot\|_{\mcF}$ satisfies the triangle inequality:
\begin{equation*}
    \|a + b\|_{\mcF} \leq \sqrt{\sup_{f\in \mcF} \ldot{a}{f} - \frac{1}{4} \|f\|_2^2 + \sup_{f\in \mcF} \ldot{b}{f} - \frac{1}{4} \|f\|_2^2} \leq \|a\|_{\mcF} + \|b\|_{\mcF}.
\end{equation*}
It is also positive definite i.e. $\|0\|_{\mcF}=0$, but not necessarily homogeneous, i.e. $\|\lambda a\|_{\mcF}= |\lambda| \|a\|_{\mcF}$ for $\lambda\in \R$. Observe that $\|\cdot\|_{\mcF}$ satisfies
\begin{equation*}
    \|a\|_{\mcF}^2 = \inf_{f\in \mcF} \frac{1}{4} \|f\|_2^2 - \ldot{a}{f}_2 + \|a\|_2^2 - \|a\|_2^2 = \inf_{f\in \mcF} \left\|a - f/2\right\|_2^2 - \|a\|_2^2 \leq \inf_{f \in \mcF} \|a - f\|_2^2 - \|a\|_2^2
\end{equation*}
where in the last inequality we use the fact that $\mcF$ is star-convex. Thus it is at most the projection of $a$ on $\mcF$. Hence, it suffices that $\mcA$ approximates $a_0$ in this weak sense: for some $a_*\in \mcA$ the projection of $a_*-a_0$ on $\mcF$ is at most $\epsilon_n$. Any component of $a_0$ that is orthogonal to $\mcF$ can be ignored, since if we decompose $a_0 = a_0^{\perp} + a_0^{\parallel}$, then $\sup_{f\in \mcF} \ldot{a_0^{\perp}}{f}_2=0$ and hence $\|a_0 - a_*\|_{\mcF} = \|a_0^{\parallel} - a_*\|_{\mcF}$.
\end{proof}

\begin{proof}[Proof of Corollary~\ref{cor:main-error}]
The assumption of norm constraints implies $\|f\|_{\mcA}\leq U$ for all $f\in \mcF$, and hence $\|a\|_{\mcA} \leq U$ for all $a\in \mcA$. The result follows from Theorem~\ref{thm:reg-main-error}, taking $B\to \infty$, $\mcF_B \to \mcF$, and $m\circ \mcF_B\to m\circ \mcF$.
\end{proof}

\begin{proof}[Proof of Corollary~\ref{cor:union}]
   Recall~\eqref{eqn:metric-entropy-critical}:
$
    \frac{64}{\sqrt{n}} \int_{\frac{\delta^2}{2b}}^{\delta} \sqrt{\log\left[N_n\{\epsilon; B_n(\delta; \mcF)\}\right]} d\epsilon \leq \frac{\delta^2}{b}.
$
By definition of $B_n(\delta; \mcF)$ as the ball of radius $\delta$,
\begin{align*}
    B_n(\delta; \mcF) 
    &= \{f\in \mcF: \|f\|_{2,n}\leq \delta\} 
    = \cup_{i=1}^d \{f\in \mcF^i:\|f\|_{2,n}\leq \delta\}
    =\cup_{i=1}^d B_n(\delta; \mcF^i).
\end{align*}
Note that $N_n\{\epsilon; B^1\cup B^2\}\leq N_n\{\epsilon; B^1\}+N_n\{\epsilon; B^2\}$ due to its definition as the covering number at approximation level $\epsilon$, i.e. the size of the smallest $\epsilon$-cover. Therefore
\begin{align*}
    \log\left[N_n\{\epsilon; B_n(\delta; \mcF) \}\right]
    &=\log\left[N_n\{\epsilon; \cup_{i=1}^d B_n(\delta; \mcF^i)\}\right] \\
    &\leq \log\left[\sum_{i=1}^d N_n\{\epsilon; B_n(\delta; \mcF^i)\}\right] \\
    &\leq \log(d) + \max_{i} \log(N_n\{\epsilon; B_n(\delta; \mcF^i)\}).
\end{align*}
Consider the integral of the former term. Integrating from $0$ to $\delta$ amounts to muliplying by $\delta$. Therefore the former term contributes an additive quantity of the order
$
    \frac{64\delta }{\sqrt{n}}\sqrt{\log(d)} \leq \frac{\delta^2}{b}.
$
If $\delta \geq   2\cdot 64 b\sqrt{\frac{\log(d)}{n}}$, then the inital term is at most $\delta^2/(2b)$ . If $\delta\geq 2 \max_i \delta_n^i$, then the integral of the latter term will also be at most $\delta^2/(2b)$. Taking the sum delivers the result noting that Assumption~\ref{ass:critical} takes $b=1$.
\end{proof}

\subsection{Proof of additional results}

\begin{proof}[Proof of Proposition~\ref{prop:continuity}]
 Recall the definition of mean square continuity: $\exists M\geq 0$ such that
$
\forall f\in \mcF: \sqrt{\E\left[m(Z;f)^2\right]} \leq M\, \|f\|_2.
$
We verify mean square continuity for several important functionals. 
\begin{enumerate}
    \item Average treatment effect (ATE): $\theta_0=\mathbb{E}[g_0(1,W)-g_0(0,W)]$.
    To lighten notation, let $\pi_0(w):=\mathbb{P}(D=1|W=w)$ be the propensity score.
    Assume $\pi_0(w)\in \left(\frac{1}{M},1-\frac{1}{M}\right)$ for $M\in(1,\infty)$. Then
    \begin{align*}
        \E[g(1, W) - g(0, W)]^2&\leq~ 2\E[g(1,W)^2 + g(0, W)^2]\\
&\leq~ 2M \E\left[\pi_0(W)\, g(1,W)^2 + [1-\pi_0(W)]\, g(0, W)^2\right] \\
&= 2M \mathbb{E}[g(X)]^2.
    \end{align*}
    
    \item Average policy effect: $\theta_0=\int g_0(x)d\mu(x)$ where $\mu(x)=F_1(x)-F_0(x)$.
    Denote the densities corresponding to distributions $(F,F_1,F_0)$ by $(f,f_1,f_0)$. Assume $\frac{f_1(x)}{f(x)}\leq \sqrt{M}$ and $\frac{f_0(x)}{f(x)}\leq \sqrt{M}$ for $M\in[0,\infty)$. In this example, $m(Z;g)=m(g)$. Then
    \begin{align*}
        \mathbb{E}[m(Z;g)]^2&=\{m(g)\}^2 
        =\left\{\int g(x)d\mu(x)\right\}^2 
        =\left\{\mathbb{E}\left[ g(X)\left\{\frac{f_1(X)}{f(X)}-\frac{f_0(X)}{f(X)}\right\} \right]\right\}^2 \\
        &\leq \left\{2 \sqrt{M} \mathbb{E}|g(X)|\right\}^2 
        \leq 4M \mathbb{E}[g(X)]^2.
    \end{align*}
    
    \item Policy effect from transporting covariates: $\theta_0=\mathbb{E}[g_0(t(X))-g_0(X)]$.
    Denote the density of $t(X)$ by $f_t(x)$. Assume $\frac{f_t(x)}{f(x)}\leq M$ for $M\in[0,\infty)$. Then
    \begin{align*}
        \mathbb{E}[g(t(X))-g(X)]^2 
        &\leq 2\mathbb{E}[g(t(X))^2+g(X)^2] 
        = 2\mathbb{E}\left[g(X)^2\left\{\frac{f_t(X)}{f(X)}-1\right\}\right] \\
        &\leq 2 (M+1) \mathbb{E}[g(X)]^2.
    \end{align*}
    \item Cross effect: $\theta_0=\mathbb{E}[Dg_0(0,W)]$.
    Assume $\pi_0(w)<1-\frac{1}{M}$ for some $M\in(1,\infty)$. Then
    \begin{align*}
        \mathbb{E}[Dg(0,W)]^2&\leq \mathbb{E}[g(0,W)]^2 
        \leq M \mathbb{E}[\{1-\pi_0(W)\}g(0,W)^2] 
        \leq M \mathbb{E}[g(X)]^2.
    \end{align*}
    
    \item Regression decomposition: $\mathbb{E}[Y|D=1]-\mathbb{E}[Y|D=0]=\theta_0^{response}+\theta_0^{composition}$
    where
    \begin{align*}
        \theta_0^{response}&=\mathbb{E}[g_0(1,W)|D=1]-\mathbb{E}[g_0(0,W)|D=1] \\
        \theta_0^{composition}&=\mathbb{E}[g_0(0,W)|D=1]-\mathbb{E}[g_0(0,W)|D=0].
    \end{align*}
    Assume $\pi_0(w)<1-\frac{1}{M}$ for some $M\in(1,\infty)$. Then re-write the target parameters in terms of the cross effect:
      \begin{align*}
        \theta_0^{response}&=\frac{\mathbb{E}[DY]-\mathbb{E}[Dg_0(0,W)]}{\mathbb{E}[D]} \\
        \theta_0^{composition}&=
        \frac{\mathbb{E}[D\gamma_0(0,W)]}{\mathbb{E}[D]}-\frac{\mathbb{E}[(1-D)Y]}{\mathbb{E}[1-D]}.
    \end{align*}
   We reduce the regression decomposition into cross effects and population means.
    
    \item Average treatment on the treated (ATT): $\theta_0=\mathbb{E}[g_0(1,W)|D=1]-\mathbb{E}[g_0(0,W)|D=1]$.
    Assume $\pi_0(w)<1-\frac{1}{M}$ for some $M\in(1,\infty)$. Then re-write the target parameters in terms of the cross effect and population means:
      \begin{align*}
        \theta_0&=\frac{\mathbb{E}[DY]-\mathbb{E}[Dg_0(0,W)]}{\mathbb{E}[D]}.
    \end{align*}
    
    \item Local average treatment effect (LATE): $\theta_0=\frac{\mathbb{E}[g_0(1,W)-g_0(0,W)]}{\mathbb{E}[h_0(1,W)-h_0(0,W)]}$.
    The result follows from the view of LATE as a ratio of two ATEs.
\end{enumerate}   
\end{proof}

\begin{proof}[Proof of Lemma~\ref{lemma:bounded}]
   The Rademacher complexity is
   \begin{align*}
       {\cal R}(\delta; \mcF^b)
       &= \E\left\{\sup_{f\in \mcF^b: \|f\|_2\leq \delta} \frac{1}{n} \sum_{i=1}^n \epsilon_i f(X_i)\right\} 
       =\E\left\{\sup_{f\in \mcF: \|f\|_2/b\leq \delta} \frac{1}{n} \sum_{i=1}^n \epsilon_i f(X_i)/b\right\} \\
       &=b^{-1}\E\left\{\sup_{f\in \mcF: \|f\|_2\leq b\delta} \frac{1}{n} \sum_{i=1}^n \epsilon_i f(X_i)\right\} 
       = \frac{1}{b}{\cal R}(b\delta; \mcF).
   \end{align*}
   By hypothesis, ${\cal R}(\delta_n; \mcF^b)\leq \delta_n^2$. Combining these results,
   $$
   \frac{1}{b}{\cal R}(b\delta_n; \mcF)= {\cal R}(\delta_n; \mcF^b) \leq \delta_n^2
   \iff 
{\cal R}(b\delta_n; \mcF)\leq b\delta_n^2=\frac{(b\delta_n)^2}{b}.
   $$
   Hence, taking $\delta_n'=b\delta_n$,
   $
   {\cal R}(\delta'_n; \mcF)\leq b\delta_n^2=\frac{(\delta'_n)^2}{b},
   $
   so the critical radius of $\mcF$ is upper bounded by $\delta_n'$.
   \end{proof}
\section{Semiparametric inference details}\label{sec:inference_proof}

\subsection{Comparison to the Donsker condition}

The critical radius condition, in place of the Donsker condition, implies semiparametric inference in Theorem~\ref{thm:debias-nocross}. We now clarify the similarities and differences between these concepts. A standard decomposition (restated in Appendix~\ref{sec:main}) shows that
$$
\sqrt{n}(\hat{\theta}-\theta_0)=\nu_n(g_0,a_0)+\{\nu_n(\hat{g},\hat{a})-\nu_n(g_0,a_0)\}+\sqrt{n}\mathbb{E}[\{a_0(X)-\hat{a}(X)\}\{\hat{g}(X)-g_0(X)\}]
$$
where $\nu_n(g,a)=\sqrt{n}(\mathbb{E}_n-\mathbb{E})[m(Z;g)+a(X)\{Y-g(X)\}]$ is an empirical process indexed by nuisance functions. A central limit theorem gives $\nu_n(g_0,a_0) \overset{d}{\rightarrow}N(0,\sigma^2)$, and a product rate condition gives $\sqrt{n}\mathbb{E}[\{a_0(X)-\hat{a}(X)\}\{\hat{g}(X)-g_0(X)\}] \overset{p}{\rightarrow}0$. What remains to show is that $\{\nu_n(\hat{g},\hat{a})-\nu_n(g_0,a_0)\} \overset{p}{\rightarrow}0$, for which we refine previous arguments.

The Donsker condition involves stochastic equicontinuity, which means that $\|g_1-g_2\|_2\leq r_n$ and $\|a_1-a_2\|_2\leq r_n$, where $r_n\downarrow 0$, imply $\{\nu_n(g_1,a_1)-\nu_n(g_2,a_2)\} \overset{p}{\rightarrow}0$, so that the desired result holds \cite{andrews1994empirical}. A sufficient condition is Pollard's condition \cite{pollard1982central}, which we now state for the function class $\mcF$, uniformly bounded in $[-b,b]$:
\begin{equation}\label{eqn:pollard}
     \int_{0}^{\infty} \sup_{Q\in \mathcal{Q}}  \sqrt{\log\left[N\{\varepsilon b; \mcF;L_2(Q)\}\right]} d\varepsilon <\infty
\end{equation}
where $\mathcal{Q}$ is the the set of all finite discrete probability measures. A common analytic approach is to verify~\eqref{eqn:pollard} for simple function spaces \cite{pakes1989simulation,andrews1994asymptotics,pakes1995limit,ai2003efficient,chen2003estimation}. 
While smooth Sobolev spaces satisfy this condition, even rearranged Sobolev spaces do not \cite[Section 4.5]{chernozhukov2018global}, nor do $L_q$ balls for $q\in(0,1]$ \cite[Lemma 2]{raskutti2011minimax}.

By contrast, the less strict condition~\eqref{eqn:metric-entropy-critical} gives the critical radius $\delta_n$. When $\|\hat{g}-g_0\|_2\leq r_n$ and $\|\hat{a}-a_0\|_2\leq r_n$, Theorem~\ref{thm:debias-nocross} clarifies that $\{\nu_n(\hat{g},\hat{a})-\nu_n(g_0,a_0)\} =O_p\{\sqrt{n}(\delta_n r_n+\delta_n^2)\}$, simplifying and generalizing \cite[eq. 19]{hirshberg2019augmented}.\footnote{We also allow limited mis-specification.} 
We call this condition complexity-rate robustness
since higher complexity can be compensated by better estimation rates. The Donsker approach does not have such a condition. In summary, the critical radius approach gives a weaker sufficient condition than the Donsker approach for $\{\nu_n(\hat{g},\hat{a})-\nu_n(g_0,a_0)\} \overset{p}{\rightarrow}0$. It recovers known results in the sparse linear special case; see Appendix~\ref{sec:sparse}.

\subsection{Additional results: Local Riesz, mis-specification}

To lighten notation, when stating the local Riesz estimator results, we suppress indexing by the fold $k$, similar to Section~\ref{sec:algo}.

\begin{assumption}[Critical radius for local Riesz representation]\label{ass:critical3}
 Define the set
$
    \mcF(r_n) = \{f \in \text{star}\left(\partial(\mcG - \hat{g})\right): \|f\|_2\leq r_n \}.
$
  Assume $\delta_n$ upper bounds the critical radius of 
  $
\{Z \to m(Z; f) - a(X)\, f(X): f\in \mcF(r_n), a\in \mcA\}.
$
To lighten notation, let $\bar{\delta}:=\delta_n + c_0\sqrt{\frac{\log(c_1/\zeta)}{n}}$. 
\end{assumption}

\begin{proposition}[Local Riesz representation]\label{prop:local}
   Suppose Assumptions~\ref{ass:strong-smooth} and~\ref{ass:critical3} hold, and that $\|\hat{g}-g_0\|_2\leq r_n$ with probability $1-\zeta$. Consider the estimator
$$
    \hat{a}'=\arg\inf_{a \in \mcA}\,\, \sup_{f \in \mcF(r_n)} \frac{1}{n}\sum_{i=1}^{n} \left(m(Z_i; f) - a(X_i)\cdot f(X_i)\right).
$$
   Then with probability $1-\zeta$,
   $$
   \E[(\hat{a}'(X)-a_0(X))\, \left(\hat{g}(X) - g_0(X)\right)] \leq O\left(M\, r_n \bar{\delta} + \bar{\delta}^2 + r_n\, \inf_{a\in \mcA}\|a_0 - a\|_2\right).
   $$
   If in addition $r_n\, \bar{\delta} = o(n^{-1/2})$, $\bar{\delta} = o(n^{-1/4})$, and $r_n\inf_{a\in \mcA}\|a_0 - a\|_2 = o(n^{-1/2})$, then Assumption~\ref{ass:main-cond} holds.
\end{proposition}

    If $a_0 \in \mcA$ and both $\mcA$ and $\mcG$ are VC-subgraph classes with constant VC dimension, then  $\bar{\delta}=O\left(\sqrt{\frac{\log(n/\zeta)}{n}}\right)$. For the conclusions of Proposition~\ref{prop:local} to hold, it suffices that $r_n = o(1)$, i.e. that $\hat{g}$ is consistent in $L_2$.

   For the local Riesz results above, we require $\mcA$ to have a small approximation error to $a_0$ with respect to the weaker norm $
    \|a_0 - a\|_{\mcF} = \sup_{f\in \mcF(1)}  \ldot{a_0 - a}{f}.
$
Thus $a$ does not need to match the component of $a_0$ that is orthogonal to the subspace $\mcF$. 

For example, assume that $\mcF$ lies in the space spanned by top $K$ eigenfunctions of a reproducing kernel Hilbert space. Then it suffices to consider as $\mcA$ the space spanned by those functions too, and $\inf_{a\in \mcA} \|a_0 - a\|_{\mcF}=0$. 

As another example, if $\mcG$ is a finite dimensional linear function space and $g_0\in \mcG$, then it suffices to consider $\mcA$ that is also finite dimensional linear, even if the true $a_0$ does not lie in that sub-space. The conditions of Corollary~\ref{cor:debias} will be satisfied, even if $\hat{a}$ will never be consistent with respect to $a_0$.

\begin{assumption}[Mixed bias condition: Inconsistent nuisance]\label{ass:main-cond_inconsistent}
    Suppose that $\forall k \in [K]$: $\sqrt{n}\, \E[\{\hat{a}_k(X)-a_*(X)\}\, \{\hat{g}_k(X) - g_*(X)\}] \rightarrow_p 0$, where $g_*$ or $a_*$ may not necessarily equal $g_0$ or $a_0$.
\end{assumption}

\begin{proposition}[Normality with inconsistent nuisance]\label{prop:inconsistent}
Suppose Assumptions~\ref{ass:strong-smooth} and~\ref{ass:main-cond_inconsistent} hold. Further assume
  (i) boundedness: $Y$, $g(X)$, and $a(X)$ are bounded almost surely, for all $g\in \mcG$ and $a\in \mcA$; 
        (ii) individual rates:  $\|\hat{a}_k-a_*\|_2\stackrel{L^2}{\to} 0$  and $\|\hat{g}-g_*\|_2 \stackrel{L^2}{\to} 0$, where $g_*$ or $a_*$ may not necessarily equal $g_0$ or $a_0$.
Finally assume that $\hat{g}_k$ admits an asymptotically linear representation around the truth $g_0$, i.e.
        \begin{equation*}
    \sqrt{|P_k|}\left(\hat{g}_k(X) - g_0(X)\right) = \frac{1}{\sqrt{|P_k|}} \sum_{i\in P_k} \psi(X, Z_i; g_0) + o_p(1),\quad \E[\psi(X, Z_i; g_0)\mid X]=0.
\end{equation*}
    Then $ \sqrt{n}\sigma_*^{-1}\left(\check{\theta} - \theta_0\right) \to_d N\left(0, 1\right)$ where $
    \sigma_*^2 :=\Var_{Z_i}(m_{a_*}(Z_i; g_*) + \E_X\left[\{a_0(X) - a_*(X)\}\,\psi(X, Z_i; g_0)\right]).
$
Similarly, if $\hat{a}_k$ has an asymptotically linear representation around the truth, then the statement above holds with
$
    \sigma_*^2 :=\Var_{Z_i}(m_{a_*}(Z_i; g_*) + \E_X\left[\psi(X, Z_i; a_0)\, \{g_0(X) - g_*(X)\}\right]).
$
\end{proposition}

\subsection{Proof of corollaries}

\begin{proof}[Proof of Corollary~\ref{cor:debias}]
The result is well known. For completeness, we state the proof for $\check{\theta}$. The proof for $\tilde{\theta}$ is similar; see e.g. \cite[Corollary 4]{chernozhukov2018learning}.

    Observe that $\theta_0 = \E[m_{a}(Z; g_0)]$ for all $a$. Moreover,
\begin{align*}
    \check{\theta}-\theta_0 =~& \frac{1}{n} \sum_{k=1}^K \sum_{i\in P_k} \left(m_{\hat{a}_k}(Z_i; \hat{g}) - \E_Z[m_{\hat{a}_k}(Z; \hat{g}_k)]\right) + \frac{1}{K} \sum_{k=1}^K \left(\E_Z[m_{\hat{a}_k}(Z; \hat{g}_k)] - \E_Z[m_{\hat{a}_k}(Z; g_0)]\right)\\
    =~& \frac{1}{n} \sum_{k=1}^K \sum_{i\in P_k} \left(m_{\hat{a}_k}(Z_i; \hat{g}_k) - \E_Z[m_{\hat{a}_k}(Z; \hat{g}_k)]\right) + \frac{1}{K} \sum_{k=1}^K \E_X[(a_0(X) - \hat{a}_k(X))\, (\hat{g}_k(X) - g_0(X))].
\end{align*}
Hence by Assumption~\ref{ass:main-cond},
\begin{align*}
    \sqrt{n}\left(\check{\theta}-\theta_0\right) =~& \sqrt{n} \underbrace{\frac{1}{n} \sum_{k=1}^K \sum_{i\in P_k} \left(m_{\hat{a}_k}(Z_i; \hat{g}_k) - \E_Z[m_{\hat{a}_k}(Z; \hat{g}_k)]\right)}_{A} + o_p(1).
\end{align*}
If $\|\hat{a}_k-a_*\|_2 \to_p 0$ and $\|\hat{g}_k-g_*\|_2\to_p 0$ then we can further decompose $A$ as
\begin{align*}
    A =~& \E_n[m_{a_*}(Z; g_*)] - \E_Z[m_{a_*}(Z; g_*)] \\
    &+ \frac{1}{n} \sum_{k=1}^K \sum_{i\in P_k} \underbrace{m_{\hat{a}_k}(Z_i; \hat{g}_k) - m_{a_*}(Z_i; g_*) - \E_Z[m_{\hat{a}_k}(Z; \hat{g}_k) - m_{a_*}(Z; g_*)]}_{V_i}.
\end{align*}

To lighten notation, let $
    B := \frac{1}{n}\sum_{k=1}^K \sum_{i\in P_k} V_i =: \frac{1}{n} \sum_{k=1}^{K} B_k
$.
If $n\,\E[B^2]\to 0$, then $\sqrt{n} B \to_p 0$. The second moment of each $B_k$ is:
\begin{equation*}
    \E\left[B_k^2\right] = \sum_{i, j\in P_k} \E[V_i V_j] = \sum_{i, j\in P_k} \E[\E[V_i V_j \mid \hat{g}_k]] = \sum_{i\in P_k} \E\left[V_i^2\right] 
\end{equation*}
where in the last equality appeal to cross fitting: for any $i\neq j$, $V_i$ is independent of $V_j$ and mean zero, conditional on the nuisance $\hat{g}_k$ estimated on samples outside of fold $k$. 
Moreover, by Jensen's inequality with respect to $\frac{1}{K}\sum_{k=1}^K B_k$,
\begin{align*}
    \E[B^2] &= \E\left[\left(\frac{1}{n} \sum_{k=1}^K B_k\right)^2\right] = \frac{K^2}{n^2} \E\left[\left(\frac{1}{K} \sum_{k=1}^K B_k\right)^2\right] 
    \leq \frac{K}{n^2} \sum_{k=1}^K \E[B_k^2] = \frac{K}{n^2} \sum_{k=1}^K \sum_{i\in P_k} \E[V_i^2] = \frac{K}{n^2} \sum_{i=1}^n \E[V_i^2].
\end{align*}
Finally, observe that $\E[V_i^2] \to_p 0$, by Assumption~\ref{ass:strong-smooth} and boundedness. More elaborately,
\begin{align*}
    \E[V_i^2] \leq~& \E\left[\left(m_{\hat{a}}(Z_i; \hat{g}_k) - m_{a_*}(Z_i; g_*)\right)^2\right] \\
    \leq~& 2\,\E\left[\left(m(Z_i; \hat{g}_k) - m(Z_i; g_*)\right)^2\right] + 2\, \E[\left(\hat{a}_k(X)\,(Y - \hat{g}_k(X)) - a_*(X)\,(Y-g_*(X))\right)^2].
\end{align*}
The latter can further be bounded as
\begin{align*}
    4\E[\left(a_k(X) - a_*(X)\right)^2 (Y-g_k(X))^2] + 4\E[a_*(X)^2 (g_*(X) - g_k(X))^2] \leq 4 C\, \left(\E\left[\|\hat{a}_k - a_*\|_2^2 + \|\hat{g} - g_*\|_2^2\right]\right)
\end{align*}
if $(Y-\hat{g}_k(X))^2 \leq C$ and $a_*(X)^2 \leq C$ almost surely.
Finally, by linearity and Assumption~\ref{ass:strong-smooth},
\begin{equation*}
    \E[\left(m(Z_i; \hat{g}_k) - m(Z_i; g_*)\right)^2] = \E[\left(m(Z_i; \hat{g}_k - g_*)\right)^2] \leq M\, \E\left[\|\hat{g}_k - g_*\|_2^2\right].
\end{equation*}

In summary,
\begin{equation*}
    \E[V_i^2] \leq (2M + 4C)\, \left(\E\left[\|\hat{a}_k - a_*\|_2^2 + \|\hat{g} - g_*\|_2^2\right]\right) \to 0.
\end{equation*}
Thus as long as $K = \Theta(1)$, we have that
\begin{equation*}\label{eqn:crucial-normality}
    n\, \E[B^2] = \frac{K}{n} \sum_{i=1}^n \E[V_i^2] \leq (2M + 4C)\, K\, \E\left[\|\hat{g}-g_*\|_2^2 + \|\hat{a}-a_*\|_2^2\right] \to 0
\end{equation*}
and we can conclude
\begin{align*}
    \sqrt{n}\left(\check{\theta} - \theta_0\right) = \sqrt{n} \left(\E_n[m_{a_*}(Z; g_*)] - \E_Z[m_{a_*}(Z; g_*)]\right) + o_p(1).
\end{align*}
By the central limit theorem, the final expression is asymptotically normal with asymptotic variance $\sigma_*^2 =\Var(m_{a_*}(Z; g_*))$.
\end{proof}

\subsection{Proof of additional results}

\begin{proof}[Proof of Proposition~\ref{prop:local}]
By a localized concentration bound, $\forall a\in \mcA, f\in \mcF(r_n)$,
\begin{align*}
    \left| \Psi_n(a, f) - \Psi(a, f) \right| &= O\left(\bar{\delta} \|m(\cdot; f) - a\, f\|_2 + \bar{\delta}^2 \right)
    = O\left((M+1)\bar{\delta} \|f\|_2 + \bar{\delta}^2\right) \\
    &= O\left((M+1)\bar{\delta} r_n + \bar{\delta}^2\right) =: \epsilon_{n}.
\end{align*}
Hence
\begin{align*}
   \sup_{f \in \mcF(r_n)} \Psi(\hat{a}, f) - \epsilon_n &\leq \sup_{f \in\mcF(r_n)} \Psi_n(\hat{a}, f) 
   \leq \sup_{f \in \mcF(r_n)} \Psi(a_*, f) + \epsilon_n 
    = \inf_{a \in \mcA} \sup_{f \in \mcF(r_n)} \Psi(a, f) + \epsilon_n.
\end{align*}
We conclude that
\begin{equation*}
    \sup_{f \in \mcF(r_n)} \Psi(\hat{a}, f) \leq \inf_{a \in \mcA} \sup_{f \in \mcF(r_n)} \Psi(a, f) + 2\,\epsilon_n.
\end{equation*}
Moreover, if $a_0$ is a local Riesz representer, i.e. if $a_0$ satisfies Riesz representation for any $\hat{g}-g$, where $g\in\mcG$ is within a ball of size $r_n$ around $\hat{g}$, then
\begin{equation*}
    \inf_{a \in \mcA} \sup_{f \in \mcF(r_n)} \Psi(a, f)  = \inf_{a \in \mcA} \sup_{f \in \mcF(r_n)} \ldot{a_0 - a}{f} \leq r_n \inf_{a \in \mcA} \sup_{f\in \mcF(1)}  \ldot{a_0 - a}{f} \leq r_n \inf_{a\in \mcA} \|a_0 - a\|_2.
\end{equation*}
\end{proof}

\begin{proof}[Proof of Proposition~\ref{prop:inconsistent}]
    Observe that $\theta_0 = \E[m_{a}(Z; g_0)]$ for all $a$. Moreover,
\begin{align*}
    \check{\theta}-\theta_0 =~& \frac{1}{n} \sum_{k=1}^K \sum_{i\in P_k} \left(m_{\hat{a}_k}(Z_i; \hat{g}) - \E[m_{\hat{a}_k}(Z; \hat{g}_k)]\right) + \frac{1}{K} \sum_{k=1}^K \left(\E[m_{\hat{a}_k}(Z; \hat{g}_k)] - \E[m_{\hat{a}_k}(Z; g_0)]\right)\\
    =~& \underbrace{\frac{1}{n} \sum_{k=1}^K \sum_{i\in P_k} \left(m_{\hat{a}_k}(Z_i; \hat{g}_k) - \E[m_{\hat{a}_k}(Z; \hat{g}_k)]\right)}_{A} + \underbrace{\frac{1}{K} \sum_{k=1}^K \E[(a_0(X) - \hat{a}_k(X))\, (\hat{g}_k(X) - g_0(X))]}_{C}.
\end{align*}
By the proof of Corollary~\ref{cor:debias},
\begin{align*}
    \sqrt{n}\, A = \sqrt{n} \left(\E_n[m_{a_*}(Z; g_*)] - \E[m_{a_*}(Z; g_*)]\right) + o_p(1).
\end{align*}
Now we analyze term $C$. We will prove one of the two conditions in the ``or'' statement, when $\hat{g}_k$ has an asymptotically linear representation. The case when $\hat{a}_k$ is asymptotically linear can be proved analogously.

Let
$
    C_k := \E[(a_0(X) - \hat{a}_k(X))\, (\hat{g}_k(X) - g_0(X))] 
$.
We can then write:
\begin{align*}
    C_k =  \E[(a_*(X) - \hat{a}_k(X))\, (\hat{g}_k(X) - g_0(X))] + \E[(a_0(X) - a_*(X))\, (\hat{g}_k(X) - g_0(X))].
\end{align*}
Since
\begin{equation*}
\sqrt{|P_k|}\E[(a_*(X) - \hat{a}_k(X))\, (\hat{g}_k(X) - g_0(X))]\leq \sqrt{|P_k|} \|a_* - \hat{a}_k\|_2 \, \|\hat{g}_k - g_0\|_2 = \|a_* - \hat{a}_k\|_2\, O_p(1) = o_p(1),
\end{equation*}
we have that
\begin{align*}
    \sqrt{|P_k|} C_k =~& \sqrt{|P_k|}\E[(a_0(X) - a_*(X))\, (\hat{g}_k(X) - g_0(X))] + o_p(1)\\
    =~& \frac{1}{\sqrt{|P_k|}} \sum_{i\in P_k} \E_X[(a_0(X) - a_*(X))\,\psi(X, Z_i; g_0)] + o_p(1).
\end{align*}
Since $K=\Theta(1)$ and $n/|P_k| \to K$, we can therefore show
\begin{align*}
    \sqrt{n} C =~& \frac{\sqrt{n}}{K} \sum_{k=1}^K C_k =  \frac{\sqrt{K}}{K} \sum_{k=1}^K \sqrt{|P_k|} C_k + o(1)\\
    =~& \frac{1}{\sqrt{K}} \sum_{k=1}^K \frac{1}{\sqrt{|P_k|}} \sum_{i\in P_k} \E_X[(a_0(X) - a_*(X))\,\psi(X, Z_i; g_0)] + o_p(1)\\
    =~& \frac{1}{\sqrt{n}} \sum_{k=1}^K \sum_{i\in P_k} \E_X[(a_0(X) - a_*(X))\,\psi(X, Z_i; g_0)] + o_p(1)\\
    =~& \frac{1}{\sqrt{n}} \sum_{i\in [n]} \E_X[(a_0(X) - a_*(X))\,\psi(X, Z_i; g_0)] + o_p(1)\\
    =~& \sqrt{n} \E_n\left[\E_X[(a_0(X) - a_*(X))\,\psi(X,Z_i; g_0)]\right] + o_p(1).
\end{align*}

In summary,
\begin{align*}
    \sqrt{n}\left(\check{\theta}-\theta_0\right) =~& \sqrt{n} \left(\E_n\left[m_{a_*}(Z; g_*) + \E_X\left[(a_0(X) - a_*(X))\,\psi(X, Z_i; g_0)\right]\right] - \E[m_{a_*}(Z; g_*)]\right)  + o_p(1).
\end{align*}
By the central limit theorem, the final expression is asymptotically normal with asymptotic variance $\sigma_*^2 =\Var_{Z_i}(m_{a_*}(Z_i; g_*) + \E_X\left[(a_0(X) - a_*(X))\,\psi(X, Z_i; g_0)\right])$.
\end{proof}
\section{Computational analysis details}\label{sec:compute_proof}

\subsection{Stochastic gradient descent for neural network}

Consider the setting of Corollary~\ref{cor:nn}, which uses neural network function spaces. When $\mcF$ and $\mcA$ are represented by deep neural networks, then the optimization problem in Estimator~\ref{algo:reg-estimator} is highly non-convex. Beyond the challenge of using a non-convex function space, which also appears in problems with square losses, we face the challenge of a non-convex and non-smooth min-max loss. We describe off-the-shelf optimization methods for generative adversarial networks (GANs) that apply to our problem, and describe a closely related approximate guarantee for stochastic gradient descent. 

Similar to the optimization problem of GANs, our our estimator solves a non-convex, non-concave zero sum game, where the strategy of each player is a set of neural network parameters. A variety of recent iterative optimization algorithms for GANs inspired by zero sum game theory apply to our problem. See e.g. the ``optimistic Adam'' procedure \cite{Daskalakis2017}, which has been adapted to conditional moment models \cite{bennett2019deep,dikkala2020minimax}. The ``extra gradient'' procedures \cite{Hsieh2019,Mishchenko2019} provide further options. 

In practice, we find that a simple optimization procedure converges to the solution of Estimator~\ref{algo:reg-estimator} when using overparametrized neural networks. The procedure is to implement simultaneous gradient descent-ascent, then to obtain the average path by averaging over several iterations. We directly extend the main procedure of \cite{liao2020provably}, who prove convergence for min-max losses that are similar to our own, building on principles used to study square losses  \cite{AllenZhu2018,du2018gradient,Soltanolkotabi2019}.

    Neural networks that are sufficiently wide and randomly initialized behave like linear functions in an RKHS called the neural tangent kernel space. As long as the error of this approximation is carefully accounted for, one can invoke the analysis of sparse linear function spaces given in Appendix~\ref{sec:sparse}.
    Future work may formalize this intuition, generalizing the main result of \cite{liao2020provably}.

To facilitate optimization,
computational analysis increases the width of the neural network. Doing so deteriorates the statistical guarantee in Corollary~\ref{cor:nn}, since the critical radius grows as a function of the width; see Section~\ref{sec:intro}. 
Future work may improve the dependence on width, sharpening the results of \cite{liao2020provably},
to alleviate
this trade-off.

\subsection{Tuning regularization hyperparameters}

We provide details on tuning the regularization hyperparameters $(\lambda,\mu)$ in a theoretically justified manner. This section amounts to a summary of various computational results provided in the paper, as well as heuristics for additional hyperparameters that arise. Each approach is implemented in our publicly available replication package: 
\if1\blind
{
\url{https://colab.research.google.com/github/vsyrgkanis/adversarial_reisz/blob/master/Results.ipynb}. 
} \fi

\textbf{Neural network.} Consider the setting of Corollary~\ref{cor:nn}. Its justification appeals to Theorem~\ref{thm:reg-main-error}, which requires $\mu\geq 6\lambda\geq 12\bar{\delta}^2/B$. We take $\mu=6\lambda$ and $\lambda=10^{-4}$. In addition, we use early stopping as a form of adaptive regularization similar to \cite{bennett2019deep,dikkala2020minimax}. Specifically, we train with simultaneous optimistic ADAM, as described above, for 200 epochs, storing test functions after every few training iterations. Then we re-initiate training and use the maximum of moment violations over these finitely many stored test functions on an out-of-sample set, as a proxy for the maximum out-of-sample moment violation. The analyst does not need to train a neural network for every evaluation, and instead can use a representative set of test functions.

\textbf{Random forest.} Consider the setting of Corollary~\ref{cor:rf}. Its justification appeals to Proposition~\ref{prop:oracle}, which requires $\lambda=\mu=0$. We use similar settings to \cite{dikkala2020minimax}: each tree is linear in original variables, with a maximimum depth of two and minimum of 20 samples per leaf node. The estimator $\hat{a}$ has five trees in a forest, while the estimator $\hat{f}$ has 100 trees in a forest.

\textbf{RKHS}. Consider the setting of Corollary~\ref{cor:rkhs}. Its justification appeals to Theorem~\ref{thm:reg-main-error}, which requires $\mu\geq 6\lambda\geq 12\bar{\delta}^2/B$ and also $\lambda=\frac{C}{n}$. We take $\mu=6\lambda$ and $C=0.001$. We use the product kernel $k(x,x')=k(d,d')k(w,w')$ where $k(d,d')$ is a binary kernel and $k(w,w')$ is a radial basis function kernel with lengthscale set as the inverse of $dim(w) \times \text{var}(w)$. For the Nystr\"om approximation, we consider $S=100$ landmarks.

\textbf{Sparse linear function}. Consider the setting of Corollary~\ref{cor:a_cr}. There is one regularization parameter $\lambda=\gamma/8s$, where $s$ is the sparsity and $\gamma$ is the restricted eigenvalue. We take $\lambda=0.01$. We use the optimization routine of Proposition~\ref{prop:sparse-optimization-ell1} with $B=10$, similar to \cite{dikkala2020minimax}.

\textbf{Cross validation.} As an extension, we consider a general purpose and automated tuning procedure for $(\lambda,\mu)$. Take $\mu=6\lambda$.
Consider a grid $\Lambda$ of possible $\lambda$ values. We explicitly write regularization hyperparameters as arguments: $\hat{f}_a(\lambda)$ and $\hat{a}(\lambda,\mu)$.

\begin{algo}[Nested cross validation]
    Fix $\lambda \in \Lambda$. On a training set, fit a candidate $\hat{a}(\lambda,6\lambda)$. On a test set, partition observations into part one and part two. For each $\lambda'\in \Lambda$, calculate $\hat{f}_{\hat{a}(\lambda,6\lambda)}(\lambda')$, the optimal test function over observations in part one using regularization $\lambda'$ and the candidate $\hat{a}(\lambda,6\lambda)$. Evaluate the moment violation for $\hat{f}_{\hat{a}(\lambda,6\lambda)}(\lambda')$ on part two. Save it as $\textsc{score}(\lambda,\lambda')$. Reverse the roles of part one and part two, and save the average of the scores. 
 Repeat this exercise across $\lambda\in \Lambda$, then set $\lambda^*=\argmin_{\lambda} \max_{\lambda'} \textsc{score}(\lambda,\lambda')$, i.e. the value with the smallest out-of-sample maximum moment violation.
\end{algo}

See our earlier draft for simulations demonstrating the efficacy of this general and automated tuning procedure.

\subsection{Proof for random forest}

\begin{proof}[Proof of Proposition~\ref{prop:oracle}]
The loss function $-\ell(a, \cdot)$ is strongly convex in $f$ with respect to the $\|\cdot\|_{2,n}$ norm:
$
    -\frac{1}{2} D_{ff} \ell(a, f)[\nu, \nu] \geq \E_n[\nu(X)^2]
$.
The difference
$
    \ell(a, f) - \ell(a', f) = \E_n[(a(X)-a'(X))\cdot f(X)]
$
is an $\|a-a'\|_{2,n}$-Lipschitz function with respect to the $\ell_{2,n}$ norm by Cauchy-Schwarz inequality. Thus by \cite[Lemma 1]{syrgkanis},
$
    \|f_t - f_{t+1}\|_{2,n} \leq \|\bar{a}_{t-1} - \bar{a}_{t}\|_{2, n}.
$

By \cite[Proof of Theorem 1]{syrgkanis}, the cumulative regret of follow-the-leader is at most
$
    R(T) \leq \sum_{t=1}^T \left|\ell(a_t, f_t) - \ell(a_t, f_{t+1})\right|
$.
Since $\|a_t\|_{\infty}, \|f_t\|_{\infty} \leq 1$, each summand of the latter is upper bounded by
$
    \left|\E_n[m(Z; f_t - f_{t+1})]\right| + 3\|f_t - f_{t+1}\|_{1,n}.
$

We assume that the empirical operator $E_n[m(Z; f)]$ is bounded by $M_n$. Therefore
$
    \left|\E_n[m(Z; f_t - f_{t+1})]\right| \leq M_n \|f_t - f_{t+1}\|_{2,n}
$.
Overall,
\begin{equation*}
     \left|\ell(a_t, f_t) - \ell(a_t, f_{t+1})\right| \leq (M_n+3) \|f_t - f_{t+1}\|_{2,n} \leq (M_n+3) \|\bar{a}_{t-1} - \bar{a}_{t}\|_{2,n} \leq \frac{2\,(M_n+3)}{t}
\end{equation*}
where we use $\left|\bar{a}_{t-1}(X) - \bar{a}_{t}(X)\right|\leq \frac{2}{t}$, since $\|a\|_{\infty}\leq 1$. We conclude that
$
    R(T) \leq 2\,(M_n+3)\sum_{t=1}^T \frac{1}{t} = O(M_n\, \log(T))
$.

After $T=\Theta\left(\frac{M_n\, \log(1/\epsilon)}{\epsilon}\right)$ iterations, $f$ has regret of at most $\epsilon$. By standard results for convex-concave zero sum games, the average solutions $\bar{f}_T=\frac{1}{T}\sum_{t=1}^T f_t$ and $\bar{a}_T = \frac{1}{T} \sum_{t=1}^T a_t$ are an $\epsilon$-equilibrium.  Therefore $\bar{a}_T$ is an $\epsilon$-approximate solution to the minimax problem.    
\end{proof}

\subsection{Proof for RKHS}

\textbf{Kernel matrices and vectors.}
To begin, we formally define the kernel matrices and vectors that appear in the closed form. Their entries are given by
\begin{align*}
[K^{(1)}]_{ij}=~& k(X_i, X_j), & 
[K^{(2)}]_{ij}=~& k_m(X_i, X_j), &
[K^{(3)}]_{ij}=~& k_m(X_j, X_i), &
[K^{(4)}]_{ij}=~& k_{mm}(X_i, X_j) \\
[K_{xX}^{(1)}]_{j}=~& k(x, X_j), & 
[K_{xX}^{(2)}]_{j}=~& k_m(x, X_j), &
[K_{xX}^{(3)}]_{j}=~& k_m(X_j, x), &
[K_{xX}^{(4)}]_{j}=~& k_{mm}(x, X_j).
\end{align*}
All that remains is to define the kernels $k_m$ and $k_{mm}$, which combine the kernel $k$ with the functional $\theta:g\mapsto \mathbb{E}[m(g;Z)]$. For simplicity, let $Z=X$.

We define these additional kernels via the feature map representation. Given a kernel $k:\mathcal{X}\times\mathcal{X}\rightarrow\mathbb{R}$, its feature map is $\phi:x\mapsto k(x,\cdot)$. One can conceptualize $\phi(x)=\{\sqrt{\lambda_j}\varphi_j(x)\}$ where $\{\lambda_j\}$ and $\{\varphi_j(x)\}$ are the eigenvalues and eigenfunctions of the kernel, and $\{\varphi_j(x)\}$ is an orthonormal basis in $L_2$. Given a kernel $k$ and a functional $\theta:g\mapsto  \mathbb{E}[m(g;X)]$, we define the modified feature map $
\phi^{(m)}(x)=\{\sqrt{\lambda_j}m(x,\varphi_j)\}.
$ With these definitions, for $f\in\mcH$, $f(x)=\langle f,\phi(x) \rangle_{\mcH}$ and $m(x;f)= \langle f, \phi^{(m)}(x) \rangle_{\mathcal{H}}$. Finally,
\begin{align*}
k(x,x'):=~& \ldot{\phi(x)}{\phi(x')}_{\mcH}, &
k_{m}(x, x'):=~& \ldot{\phi(x)}{\phi^{(m)}(x')}, &
k_{mm}(x, x'):=~&\ldot{\phi^{(m)}(x)}{\phi^{(m)}(x')}.
\end{align*}
In summary, each kernel matrix and vector can be computed as linear combinations of kernel evaluations at pairs of points.

\begin{proposition}[Computing kernel matrices]\label{prop:kernel_matrices}
Let $X=(D,W)$ where $D$ is the treatment and $W$ is the covariate. Consider the average treatment effect functional $\theta:g\mapsto \mathbb{E}[(1,W)-g(0,W)]$. The induced kernels are
\begin{align*}
k(x, x') &=k((d,w),(d',w')),\quad 
k_m(x, x') =k((d,w),(1,w'))-k((d,w),(0,w')) \\
k_{mm}(x, x') &=k((1,w),(1,w'))-k((1,w),(0,w'))-k((0,w),(1,w'))+k((0,w),(0,w')).
\end{align*}
\end{proposition}
\begin{proof}
Observe that
\begin{align*}
    k_m(x, x') &=\langle \phi(x),\phi^{(m)}(x')\rangle_{\mathcal{H}}
    = \langle \phi(d,w), \phi(1,w')-\phi(0, w')\rangle_{\mathcal{H}} 
    =k((d,w),(1,w'))-k((d,w),(0,w')); \\
    k_{mm}(x, x') &=\langle \phi^{(m)}(x),\phi^{(m)}(x')\rangle_{\mathcal{H}}
    =\langle \phi(1,w)-\phi(0,w),\phi(1,w')-\phi(0,w') \rangle_{\mathcal{H}} \\
    &=k((1,w),(1,w'))-k((1,w),(0,w'))-k((0,w),(1,w'))+k((0,w),(0,w')).
\end{align*}
\end{proof}

\textbf{Maximizer closed form.}
We proceed in steps. We prove the result via a sequence of lemmas. To simplify the proofs, we introduce some additional operator notation. Given a kernel $k$, define the feature operator $\Phi$ with $i$th row $\phi(X_i)^{\top}$. Hence $K^{(1)}=\Phi\Phi^{\top}$, where $\Phi^{\top}$ is the adjoint of $\Phi$. Given a kernel and a functional, similarly define the operator $\Phi^{(m)}$ with $i$th row $\phi^{(m)}(X_i)^{\top}$. Finally define $\Psi$ as the operator with $2n$ rows that is constructed by concatenating $\Phi$ and $\Phi^{(m)}$:
$$
\Psi:=\begin{bmatrix} \Phi\\ \Phi^{(m)}\end{bmatrix},\quad K=\Psi\Psi^{\top}=\begin{bmatrix} \Phi \Phi^{\top} & \Phi (\Phi^{(m)})^{\top} \\ \Phi^{(m)}\Phi^{\top} & \Phi^{(m)}  (\Phi^{(m)})^{\top}  \end{bmatrix} =\begin{bmatrix} K^{(1)} & K^{(2)} \\ K^{(3)}  & K^{(4)} \end{bmatrix}.
$$

\begin{lemma}[Existence]\label{lemma:rep1}
There exists a coefficient $\hat{\gamma}_a\in\mathbb{R}^{2n}$ such that the maximizer $\hat{f}_a$ takes the form $\hat{f}_a=\Psi'\hat{\gamma}_a$.
\end{lemma}

\begin{proof}
Write the objective as 
$$
\mathcal{E}_{1}(f):=\frac{1}{n}\sum_{i=1}^n \langle f,\phi^{(m)}(X_i)\rangle_{\mathcal{H}}-a(X_i)\langle f,\phi(X_i)\rangle_{\mathcal{H}}-\langle f,\phi(X_i)\rangle_{\mathcal{H}}^2-\lambda\|f\|^2_{\mathcal{H}}.
$$
For an RKHS, evaluation is a continuous functional represented as the inner product with the feature map. Due to the ridge penalty, the stated objective is coercive and strongly convex with respect to $f$. Hence it has a unique maximizer $\hat{f}_a$ that obtains the maximum.

To lighten notation, we suppress the indexing of $\hat{f}_a$ by $a$ for the rest of this argument. Write $\hat{f}=\hat{f}_n+\hat{f}^{\perp}_n$ where $\hat{f}_n\in row(\Psi)$ and $\hat{f}_n^{\perp}\in null(\Psi)$. Substituting this decomposition of $\hat{f}$ into the objective, we see that
$
\mathcal{E}_{1}(\hat{f})=\mathcal{E}_{1}(\hat{f}_n)-\lambda \|\hat{f}_n^{\perp}\|^2_{\mathcal{H}}.
$
Therefore
$
\mathcal{E}_{1}(\hat{f})\leq \mathcal{E}_{1}(\hat{f}_n).
$
Since $\hat{f}$ is the unique maximizer, $\hat{f}=\hat{f}_n$.
\end{proof}

\begin{lemma}[Formula]\label{lemma:closed1}
The explicit formula for the coefficient is given by $\hat{\gamma}_a=\frac{1}{2}\Delta^{-}\left[V -U \mathbf{a}\right]$,
where
\begin{align*}
U :=~& \begin{bmatrix}K^{(1)} \\ K^{(3)} \end{bmatrix} \in \mathbb{R}^{2n \times n} &
\Delta:=~& U U' + n\lambda K\in\mathbb{R}^{2n\times 2n} &
V := \begin{bmatrix}K^{(2)} \\ K^{(4)} \end{bmatrix} \mathbf{1}_{n} \in \mathbb{R}^{2n}.
\end{align*}
\end{lemma}

\begin{proof}
Write the objective as
\begin{align*}
    \mathcal{E}_1(f)&= \frac{1}{n}\sum_{i=1}^n \langle f,\phi^{(m)}(X_i) \rangle_{\mathcal{H}} -\langle a,\phi(X_i)\rangle_{\mathcal{H}} \langle f,\phi(X_i)\rangle_{\mathcal{H}} -\langle f,\phi(X_i)\rangle_{\mathcal{H}}^2-\lambda \langle f,f \rangle_{\mathcal{H}}   \\
    &= \frac{1}{n} f' (\Phi^{(m)})'\mathbf{1}_{n} -f'\hat{T} a- f' \hat{T} f-\lambda f' f
\end{align*}
where $\hat{T}:=\frac{1}{n}\sum_{i=1}^n \phi(X_i)\otimes \phi(X_i)$. Appealing to Lemma~\ref{lemma:rep1},
\begin{align*}
    &\mathcal{E}_1(\gamma)= \frac{1}{n}\gamma'\Psi (\Phi^{(m)})'\mathbf{1}_{n} -\gamma'\Psi\hat{T} a- \gamma'\Psi \hat{T} \Psi'\gamma-\lambda \gamma'\Psi \Psi'\gamma \\
    &= \frac{1}{n} \gamma' \begin{bmatrix}K^{(2)} \\ K^{(4)} \end{bmatrix} \mathbf{1}_{n} -\frac{1}{n}\gamma'U\Phi a- \frac{1}{n}\gamma'UU'\gamma-\lambda \gamma'K\gamma
    = \frac{1}{n} \gamma' V -\frac{1}{n}\gamma'U\Phi a- \frac{1}{n}\gamma'UU'\gamma-\lambda \gamma'K\gamma.
\end{align*}
The first order condition yields
$
\frac{1}{n} V -\frac{1}{n}U\Phi a- \frac{2}{n}UU'\hat{\gamma}_a-2\lambda K\hat{\gamma}_a=0.
$
Hence
$    \hat{\gamma}_a
    =
\frac{1}{2}\left[UU'+n\lambda K\right]^{-}\left[V -U\Phi a\right]
$.
Finally, note that $\mathbf{a}=\Phi a \in\mathbb{R}^n$ with $[\mathbf{a}]_i=a(X_i)$.
\end{proof}

\begin{lemma}[Evaluation]\label{lemma:f}
To evaluate the adversarial maximizer, set $\hat{f}_a(x)=\begin{bmatrix}K^{(1)}_{xX} & K^{(2)}_{xX} \end{bmatrix}\hat{\gamma}_a$ where $[K_{xX}^{(1)}]_{j}= k(x, X_j)$ and $[K_{xX}^{(2)}]_{j}=k_m(x, X_j)$.
\end{lemma}

\begin{proof}
By Lemma~\ref{lemma:rep1}
$
    \hat{f}_a(x)=\langle \hat{f}_a, \phi(x)\rangle_{\mathcal{H}}  
    =\phi(x)'\Psi'\hat{\gamma}_a 
    =\begin{bmatrix}K^{(1)}_{xX} & K^{(2)}_{xX} \end{bmatrix}\hat{\gamma}_a.
$
\end{proof}

\begin{lemma}[Evaluation]\label{lemma:m}
To evaluate the functional applied to the adversarial maximizer, set $m(x,\hat{f}_a)=\begin{bmatrix}K^{(3)}_{xX} & K^{(4)}_{xX} \end{bmatrix} \hat{\gamma}_a$ where $[K_{xX}^{(3)}]_{j}=k_m(X_j, x)$ and $[K_{xX}^{(4)}]_{j}= k_{mm}(x, X_j)$.
\end{lemma}

\begin{proof}
By Lemma~\ref{lemma:rep1},
$
    m(x,\hat{f}_a)=\langle \hat{f}_a, \phi^{(m)}(x)\rangle_{\mathcal{H}}  
    =\phi^{(m)}(x)'\Psi'\hat{\gamma}_a 
    =\begin{bmatrix}K^{(3)}_{xX} & K^{(4)}_{xX} \end{bmatrix}\hat{\gamma}_a.
$
\end{proof}

\begin{proof}[Proof of Proposition~\ref{prop:closed1}]
    The result is immediate from the lemmas above.
\end{proof}

\textbf{Minimizer closed form.}
Once again, we prove the result via a sequence of lemmas.

\begin{lemma}[Existence]\label{lemma:rep2}
There exists a coefficient $\hat{\beta}\in\mathbb{R}^n$ such that the minimizer $\hat{a}$ takes the form $\hat{a}=\Phi'\hat{\beta}$.
\end{lemma}

\begin{proof}
Observe that by Lemmas~\ref{lemma:closed1},~\ref{lemma:f}, and~\ref{lemma:m},
\begin{align*}
    \hat{f}_a(x)&= \frac{1}{2}\begin{bmatrix}K^{(1)}_{xX} & K^{(2)}_{xX}\end{bmatrix}\Delta^{-}\left[V -U\Phi a\right],\quad 
    m(x;\hat{f}_a)=\frac{1}{2} \begin{bmatrix}K^{(3)}_{xX} & K^{(4)}_{xX} \end{bmatrix} \Delta^{-}\left[V -U\Phi a\right] \\
   \|\hat{f}_a\|_{\mathcal{H}}^2 &=\hat{\gamma}_a'\Psi \Psi'\hat{\gamma}_a
   = \frac{1}{4} \left[V -U\Phi a\right]'
   \Delta^{-}
   K  
   \Delta^{-}\left[V -U\Phi a\right].
\end{align*}
Write the objective as 
\begin{align*}
    \mathcal{E}_{2}(a)&=\frac{1}{n}\sum_{i=1}^n m(X_i;\hat{f}_a)-\langle a,\phi(X_i)\rangle_{\mathcal{H}} \hat{f}_a(X_i)- \hat{f}_a(X_i)^2-\lambda\|\hat{f}_a\|^2_{\mathcal{H}}+\mu\|a\|^2_{\mathcal{H}}
\end{align*}
where the various terms involving $\hat{f}_a$ only depend on $a$ in the form $\Phi a$. Due to the ridge penalty, the stated objective is coercive and strongly convex with respect to $a$. Hence it has a unique maximizer $\hat{a}$ that obtains the maximum.

Write $\hat{a}=\hat{a}_n+\hat{a}^{\perp}_n$ where $\hat{a}_n\in row(\Phi)$ and $\hat{a}_n^{\perp}\in null(\Phi)$. Substituting this decomposition of $\hat{a}$ into the objective, we see that
$
\mathcal{E}_{2}(\hat{a})=\mathcal{E}_{2}(\hat{a}_n)+\mu \|\hat{a}_n^{\perp}\|^2_{\mathcal{H}}.
$
Therefore
$
\mathcal{E}_{2}(\hat{a})\geq \mathcal{E}_{2}(\hat{a}_n).
$
Since $\hat{a}$ is the unique minimizer, $\hat{a}=\hat{a}_n$.
\end{proof}

\begin{lemma}[Formula]\label{lemma:closed2}
The explicit formula for the coefficient is given by $\hat{\beta}=\big(A'  \Delta^{-} A +4n\mu\cdot K^{(1)}\big)^-A'\Delta^{-}V$,
    where $A:= U K^{(1)}$.
\end{lemma}

\begin{proof}
By Lemma~\ref{lemma:rep1},
\begin{align*}
    \mathcal{E}_1(f)&=\frac{1}{n} f' (\Phi^{(m)})'\mathbf{1}_{n} -f'\hat{T} a- f' \hat{T} f-\lambda f' f 
    =\frac{1}{n} \gamma'\Psi (\Phi^{(m)})'\mathbf{1}_{n} -\gamma'\Psi\hat{T} a- \gamma'\Psi \hat{T} \Psi'\gamma-\lambda\gamma'\Psi \Psi'\gamma
\end{align*}
with first order condition
$$
0=\frac{1}{n}\Psi (\Phi^{(m)})'\mathbf{1}_{n} -\Psi\hat{T} a- 2\Psi \hat{T} \Psi'\hat{\gamma}_a-2\lambda\Psi \Psi'\hat{\gamma}_a=\frac{1}{n}\Psi (\Phi^{(m)})'\mathbf{1}_{n} -\Psi\hat{T} a- 2\Psi \hat{T} \hat{f}_a-2\lambda\Psi \hat{f}_a.
$$
Hence
$
2\Psi(\hat{T}+\lambda I )\hat{f}_a=\Psi \left[ \frac{1}{n}(\Phi^{(m)})'\mathbf{1}_{n} -\hat{T} a\right]
$
and therefore multiplying both sides by $\hat{\gamma}_a'$ gives
$
2\hat{f}_a'(\hat{T}+\lambda I )\hat{f}_a=\hat{f}_a' \left[ \frac{1}{n}(\Phi^{(m)})'\mathbf{1}_{n} -\hat{T} a\right]
$.
Thus we have
\begin{align*}
    \mathcal{E}_2(a)&=\frac{1}{n}\hat{f}_a' (\Phi^{(m)})' \mathbf{1}_{n} -\hat{f}_a'\hat{T} a- \hat{f}_a' \hat{T} \hat{f}_a-\lambda \hat{f}_a' \hat{f}_a+\mu a'a 
    =2 \hat{f}_a' (\hat{T}+\lambda I) \hat{f}_a - \hat{f}_a' \hat{T} \hat{f}_a-\lambda \hat{f}_a' \hat{f}_a+\mu a'a \\
    &= \hat{f}_a' (\hat{T}+\lambda I) \hat{f}_a +\mu a'a  
    = \hat{f}_a' \hat{T} \hat{f}_a + \lambda \hat{f}_a'\hat{f}_a +\mu a'a.
\end{align*}
By Lemma~\ref{lemma:rep1},
\begin{align*}
    {\cal E}_2(a)&= \frac{1}{n}\hat{\gamma}_a' \Psi \Phi' \Phi \Psi' \hat{\gamma}_a + \lambda \hat{\gamma}_a' \Psi \Psi' \hat{\gamma}_a + \mu a'a
    = \frac{1}{n}\hat{\gamma}_a' U U' \hat{\gamma}_a + \lambda \hat{\gamma}_a' K \hat{\gamma}_a + \mu a'a
    = \frac{1}{n}\hat{\gamma}_a' \Delta \hat{\gamma}_a + \mu a'a.
\end{align*}
By Lemma~\ref{lemma:closed1},
\begin{align*}
    {\cal E}_2(a)&= \frac{1}{4n}(V - U\Phi a)' \Delta^- \Delta \Delta^-(V - U\Phi a) + \mu a'a
    = \frac{1}{4n}(V - U\Phi a)' \Delta^-(V - U\Phi a) + \mu a'a.
\end{align*}
By Lemma~\ref{lemma:rep2},
\begin{align*}
    &\mathcal{E}_2(\beta)= \frac{1}{4n}(V - U\Phi \Phi' \beta)' \Delta^-(V - U\Phi \Phi'\beta) + \mu \beta'\Phi \Phi' \beta \\
    &= \frac{1}{4n}(V - UK^{(1)} \beta)' \Delta^-(V - UK^{(1)}\beta) + \mu \beta' K^{(1)} \beta
    = \frac{1}{4n}(V - A \beta)' \Delta^-(V - A\beta) + \mu \beta' K^{(1)} \beta.
\end{align*}
The first order condition then becomes
\begin{align*}
    - \frac{1}{2n} A' \Delta^- (V - A\beta) + 2\mu K^{(1)} \beta = 0 \Leftrightarrow 
     (A'\Delta^- A  + 4n\mu K^{1}) \beta = A'\Delta^- V.
\end{align*}
Solving for $\beta$ yields the desired solution.
\end{proof}

\begin{lemma}[Evaluation]
To evaluate the minimizer, set $\hat{a}(x) = K_{xX}^{(1)} \hat{\beta}$.
\end{lemma}
\begin{proof}
By Lemma~\ref{lemma:rep2}
$
    \hat{a}(x)=\langle \hat{a}, \phi(x)\rangle_{\mathcal{H}}  
    =\phi(x)'\Phi'\hat{\beta} 
    =K^{(1)}_{xX}\hat{\beta}
$.
\end{proof}

\begin{proof}[Proof of Proposition~\ref{prop:closed2}]
    The result is immediate from the lemmas above.
\end{proof}

\textbf{Nystr\"om approximation.}
Computation of kernel methods may be demanding due to the inversions of matrices that scale with $n$ such as $\Delta\in\mathbb{R}^{2n\times 2n}$. One solution is Nystr\"om approximation. We now provide alternative expressions for $(\hat{f}_a,\hat{a})$ that lend themselves to Nystr\"om approximation, then describe the procedure.

\begin{lemma}[Maximizer sufficient statistics]\label{lemma:n1}
The adversarial maximizer may be expressed as $\hat{f}_a=\frac{1}{2}(\hat{T}+\lambda I)^{-1}(\hat{\mu}^{(m)}-\hat{T}a)$ where $\hat{\mu}^{(m)}:=\frac{1}{n}\sum_{i=1}^n \phi^{(m)}(X_i)$ and $\hat{T}:=\frac{1}{n}\sum_{i=1}^n \phi(X_i)\otimes \phi(X_i)$.
\end{lemma}

\begin{proof}
Recall the loss
\begin{align*}
    \mathcal{E}_1(f)=\frac{1}{n} f' (\Phi^{(m)})'\mathbf{1}_{n} -f'\hat{T} a- f' \hat{T} f-\lambda f' f 
    = f'(\hat{\mu}^{(m)}-\hat{T} a)- f'( \hat{T}+\lambda I) f.
\end{align*}
Informally, we see that the first order condition must be
$
2(\hat{T}+\lambda I)\hat{f}_a=\hat{\mu}^{(m)}  -\hat{T} a.
$
See \cite[Proof of Proposition 2]{de2005risk} for the formal derivation, which incurs additional notation. Rearranging yields the desired result.
\end{proof}

\begin{lemma}[Minimizer sufficient statistics]\label{lemma:n2}
The minimizer may be expressed as $\hat{a}=[\hat{T}(\hat{T}+\lambda I)^{-1}\hat{T}+4\mu I]^{-1}\hat{T}(\hat{T}+\lambda I)^{-1}\hat{\mu}^{(m)}$.
\end{lemma}

\begin{proof}
By Lemma~\ref{lemma:n1},
\begin{align*}
     \mathcal{E}_2(a)&=\frac{1}{n}\hat{f}_a' (\Phi^{(m)})' \mathbf{1}_{n} -\hat{f}_a'\hat{T} a- \hat{f}_a' \hat{T} \hat{f}_a-\lambda \hat{f}_a' \hat{f}_a+\mu a'a 
     =\hat{f}_a' (\hat{\mu}^{(m)}-\hat{T} a)- \hat{f}_a' (\hat{T}+\lambda I) \hat{f}_a+\mu a'a \\
     &=\hat{f}_a' [2(\hat{T}+\lambda I)\hat{f}_a]- \hat{f}_a' (\hat{T}+\lambda I) \hat{f}_a+\mu a'a  
     =\hat{f}_a' (\hat{T}+\lambda I) \hat{f}_a+\mu a'a \\
     &=\frac{1}{4}[(\hat{T}+\lambda I)^{-1}(\hat{\mu}^{(m)}-\hat{T}a)]' (\hat{T}+\lambda I) [(\hat{T}+\lambda I)^{-1}(\hat{\mu}^{(m)}-\hat{T}a)]+\mu a'a.
\end{align*}
Informally, we see that the first order condition must be
$$
0=\frac{1}{4}[-2\hat{T}(\hat{T}+\lambda I)^{-1}](\hat{T}+\lambda I)[(\hat{T}+\lambda I)^{-1}(\hat{\mu}^{(m)}-\hat{T}a)]+2\mu a.
$$
See \cite[Proof of Proposition 2]{de2005risk} for the formal way of deriving the first order condition, which incurs additional notation. Simplifying,
$
0=\hat{T}(\hat{T}+\lambda I)^{-1}(\hat{T}a-\hat{\mu}^{(m)})+4\mu a.
$
Rearranging yields the desired result.
\end{proof}

The closed form solution in Lemma~\ref{lemma:n2} is in terms of the sufficient statistics
$
\hat{T}=\frac{1}{n}\sum_{i=1}^n \phi(X_i)\otimes \phi(X_i)$ and $\hat{\mu}^{(m)}=\frac{1}{n}\sum_{i=1}^n \phi^{(m)}(X_i)
$.
Nystr\"om approximation is a way to approximate these sufficient statistics. In particular, the Nystr\"om approach uses the substitution
$
\phi(x)\mapsto \tilde{\phi}(x)= (K_{\mathcal{S}\mathcal{S}}^{(1)})^{-\frac{1}{2}}K^{(1)}_{\mathcal{S}x}
$, where
$\mathcal{S}$ is a subset of $S=|\mathcal{S}|\ll n$ observations called landmarks. $K^{(1)}_{\mathcal{S}\mathcal{S}}\in\mathbb{R}^{S\times S}$ is defined such that $[K^{(1)}_{\mathcal{S}\mathcal{S}}]_{ij}=k(X_i,X_j)$ for $i,j\in\mathcal{S}$. Similarly, $K^{(1)}_{\mathcal{S}x}\in\mathbb{R}^S$ is defined such that $[K^{(1)}_{\mathcal{S}x}]_i=k(X_i,x)$ for $i\in\mathcal{S}$. We conduct a similar substitution for $\phi^{(m)}$. For example, for ATE,
$
\phi^{(m)}(x)=\phi(1,w)-\phi(0,w) \mapsto \tilde{\phi}(1,w)-\tilde{\phi}(0,w)=\tilde{\phi}^{(m)}(x)
$
where $\tilde{\phi}(x)$ is defined above. In summary, the approximate sufficient statistics are 
$
\tilde{T}=\frac{1}{n}\sum_{i=1}^n \tilde{\phi}(X_i)\otimes \tilde{\phi}(X_i) \in\mathbb{R}^{S\times S}$ and $\tilde{\mu}^{(m)}=\frac{1}{n}\sum_{i=1}^n \tilde{\phi}^{(m)}(X_i) \in\mathbb{R}^S.
$

\section{Simulated and real data details}\label{sec:sim_detail}

\subsection{Additional results: Baseline design}

A baseline design showcases how eliminating sample splitting may improve precision---a simple point with practical consequences for applied statistics, that underscores the importance of Theorems~\ref{thm:debias-nocross-stability} and~\ref{thm:debias-nocross}.

In a baseline setting with $dim(W)=10$, every variation of our estimator achieves nominal coverage when the sample size is sufficiently large. We document performance in 100 simulations for sample sizes $n\in\{100,200,500,1000,2000\}$, which are representative for empirical social scientific research.

\begin{table}[H]
    \centering
    \begin{tabular}{|c|c|cc|ccccc|}
\hline 
 \multirow{2}{*}{$n$}  & Estimator & \multicolumn{2}{|c|}{Prop. score} & \multicolumn{5}{|c|}{Adversarial} \\
 & Function space   & logistic & R.F. &  sparse & RKHS & Nystrom & R.F. & N.N. \\  
    \hline 
       &  coverage & 95 & 95 & 97 & 96 & 91 & 93 & 53  \\
 $100$     &  bias & -1 & 5 & 1 & 2 & 3 & 8 & -11  \\
         &  length & 138 & 99 & 118 & 113 & 93 & 106 & 34  \\
        \hline 
       &  coverage & 95 & 92 &  95 & 96 & 85 & 92 & 57  \\
 $200$     &  bias & 2 & 2 &  0 & 4 & 0 & 4 & -6  \\
         &  length & 80 & 65 &  77 & 98 & 63 & 68 & 33  \\
        \hline 
       &  coverage & 95 & 95 &  97 & 98 & 93 & 97 & 82  \\
 $500$     &  bias & -1 & 0 & -3 & 1 & -2 & 0 & -5  \\
         &  length & 48 & 42 &  46 & 64 & 41 & 44 & 34  \\
        \hline 
        &  coverage & 95 & 87 &  91 & 99 & 88 & 86 & 90  \\
 $1000$     &  bias & 0 & 3 &  1 & 1 & 3 & 3 & 0  \\
          &  length & 34 & 30 &  32 & 42 & 29 & 30 & 32  \\
        \hline 
        &  coverage & 96 & 92 &  95 & 97 & 91 & 89 & 94  \\
 $2000$     &  bias & 0 & 2 &  1 & 1 & 2 & 2 & 1  \\
          &  length & 23 & 21 &  22 & 28 & 21 & 21 & 23  \\
        \hline 
    \end{tabular}
    \caption{Simple design, sample splitting $\{dim(W)=10\}$. Values are multiplied by $10^2$.}
    \label{tab:base}
\end{table}

\begin{table}[H]
    \centering
    \begin{tabular}{|c|c|cc|ccccc|}
    \hline 
 \multirow{2}{*}{$n$}  & Estimator & \multicolumn{2}{|c|}{Prop. score} &  \multicolumn{5}{|c|}{Adversarial} \\
 & Function space   & logistic & R.F. &  sparse & RKHS & Nystrom & R.F. & N.N. \\  
    \hline 
       &  coverage & 92 & - &  94 & 95 & 91 & 92 & 50  \\
 $100$     &  bias & 0 & - &  1 & 0 & 1 & 7 & -8  \\
         &  length & 94 & - &  94  & 96 & 86 & 88 & 32 \\
        \hline 
       &  coverage & 92 & - &  95 & 94 & 88 & 93 & 58  \\
 $200$     &  bias & 1 & - & -1 & 1 & 0 & 2 & -6  \\
         &  length & 68 & -  & 68 & 73 & 58 & 65 & 34 \\
        \hline 
       &  coverage & 92 & -  & 91 & 92 & 87 & 90 & 77  \\
 $500$     &  bias & -2 & -  & -3 & -3 & -2 & -2 & -6  \\
         &  length & 41 & - & 41 & 43 & 37 & 40 & 33 \\
        \hline 
        &  coverage & 86 & - & 85 & 97 & 83 & 85 & 84  \\
 $1000$     &  bias & 0 & -  & 0 & 0 & 1 & 1 & 0  \\
          &  length & 27 & -  & 26 & 32 & 25 & 26 & 28 \\
        \hline 
        &  coverage & 94 & - & 92 & 93 & 90 & 92 & 93  \\
 $2000$     &  bias & -1 & -  & 0 & 0 & 0 & 1 & 0  \\
          &  length & 20 & -  & 19 & 20 & 19 & 19 & 20 \\
        \hline 
    \end{tabular}
    \caption{Simple design, no sample splitting $\{dim(W)=10\}$. Values are multiplied by $10^2$.}
    \label{tab:base_no}
\end{table}

Tables~\ref{tab:base} and~\ref{tab:base_no} present results. We find that every version of our adversarial estimator (i-v) achieves nominal coverage with and without sample splitting, as long as the sample size is large enough. In particular, (i-iv) achieve nominal coverage with $n=100$, but (v) requires $n=2000$. The propensity score estimators (vi-vii) also achieve nominal coverage. Comparing entries across tables, we see that eliminating sample splitting always reduces the confidence interval length. For example, with $n=100$, (i) has length 1.18 with sample splitting and length 0.94 without sample splitting; eliminating sample splitting reduces the confidence interval length by 20\%. Another general trend is that our estimators have the shortest confidence intervals among those implemented---tied with the parametric estimator (vi), and shorter than (vii). For example, with $n=100$, (i) has length 0.94 without sample splitting while the lengths of (vi-vii) are at best 0.94 and 0.99, respectively.

\subsection{Simulation designs}

We generate one draw from the high dimensional setting as follows. Define the vector $\beta\in\R^p$ such that $\beta_j=j^{-2}$ and the matrix $\Sigma\in \R^{p\times p}$ such that $\Sigma_{ii}=1$ and $\Sigma_{ij}=0.5\cdot 1_{|i-j|=1}$ for $i\neq j$. We draw the covariates $W\sim N(0,\Sigma)$, the treatment $D\sim \text{Bernoulli}(0.05+0.5\Lambda(W^\top\beta))$ where $\Lambda(\cdot)$ is the logistic function, and then calculate the outcome $Y=2.2D+1.2W^\top\beta +DW_1+\epsilon$ where $\epsilon\sim N(0,1)$. To convey a high dimensional setting, we take $n=100$ and $p=100$.

We generate one draw from the highly nonlinear setting as follows. We draw the covariates $W\sim N(0,\Sigma)$, the treatment $D\sim \text{Bernoulli}(0.1+0.8\cdot 1_{W_1>0})$, and then calculate $Y=2.2D+1.2\cdot 1_{W_1>0} +DW_1+\epsilon$ where $\epsilon\sim N(0,1)$. We take $n=1000$ and $p=10$. 

To generate one draw from the baseline design, we modify the high dimensional design to have $p=10$. We vary $n\in\{100,200,500,1000,2000\}$.

\subsection{Heterogeneous effects by political environment}

We follow variable definitions of \cite{karlan2007does}. The outcome $Y$ is dollars donated in Figure~\ref{fig:dollars}, and an indicator of whether the household donated in Figure~\ref{fig:whether}. The treatment $D$ indicates receiving a 1:1 matching grant as part of a direct mail solicitation. For simplicity, we exclude other treatment arms, namely 2:1 and 3:1 matching grants. A ``red'' unit of geography is one in George W. Bush won more than half of the votes in the 2004 presidential election. The raw covariates $W \in\mathbb{R}^{15}$ are political environment, previous contributions, race, age, household size, income, home-ownership, education, and urban status. Specifically, the initial covariate $W_1$ indicates whether the political environment is a red state. 

We exclude 4147 observations with extreme propensity scores based on these covariates, yielding $n=21712$ observations for analysis. Specifically, we dropped observations with propensity scores outside of $[0.1,0.9]$ based on simple logistic models for the treatment propensity score and red state propensity score. For tractability with this large sample size, we approximate our adversarial RKHS estimator using our adversarial Nystr\"om estimator with 1000 landmarks.

\begin{proposition}[Riesz representation and mean square continuity]
    To lighten notation, define $g_0(D,A,V)=\mathbb{E}(Y|D,A,V)$ and
$$
\mathbb{E}\{m(Z;g)\}=\mathbb{E}[\{g(1,1,V)-g(0,1,V)\}-\{g(1,0,V)-g(0,0,V)\}].
$$
Then $
\mathbb{E}\{m(Z;g)\}=\mathbb{E}\{a_0(D,A,V)g(D,A,V)\}
$
where 
\begin{align*}
a_0(D,A,V)
&=\frac{1_{A=1}}{\mathbb{P}(A=1|V)}\left\{\frac{1_{D=1}}{\mathbb{P}(D=1|A=1,V)}-\frac{1_{D=0}}{\mathbb{P}(D=0|A=1,V)}\right\}\\
&\quad -\frac{1_{A=0}}{\mathbb{P}(A=0|V)}\left\{\frac{1_{D=1}}{\mathbb{P}(D=1|A=0,V)}-\frac{1_{D=0}}{\mathbb{P}(D=0|A=0,V)}\right\}\\
&=\left\{\frac{1_{A=1}}{\mathbb{P}(A=1|V)}-\frac{1_{A=0}}{\mathbb{P}(A=0|V)}\right\}\left\{\frac{1_{D=1}}{\mathbb{P}(D=1|A,V)}-\frac{1_{D=0}}{\mathbb{P}(D=0|A,V)}\right\}
\end{align*}
and the functional is mean square continuous when $\mathbb{P}(D=1|A,V)$ and $\mathbb{P}(A=1|V)$ are bounded away from zero and one almost surely.
\end{proposition}

    \begin{proof}
        We directly generalize the argument for average treatment effect in Proposition~\ref{prop:continuity}.
    \end{proof}

\section{Sparse linear case details}

\subsection{Fast rate}

\begin{proof}[Proof of Corollary~\ref{cor:sparse-linear}]
    The critical radius $\delta_n$ is of order $O\left(\sqrt{\frac{s\log(p\,n)}{n}}\right)$. The $\epsilon$-covering number of such a function class is of order $N_n(\epsilon; \mcF)=O\left(\binom{p}{s} \left(\frac{b}{\epsilon}\right)^{s}\right)\leq O\left(\left(\frac{p\,b}{\epsilon}\right)^s\right)$, since it suffices to choose the support of the coefficients and then place a uniform $\epsilon$-grid on the support. Thus~\eqref{eqn:metric-entropy-critical} is satisfied for $\delta=O\left(\sqrt{\frac{s\log(p\, b)\,\log(n)}{n}}\right)$. Moreover, observe that if $m(Z;f)$ is $L$-Lipschitz in $f$ with respect to the $\ell_{\infty}$ norm, then the covering number of $m\circ \mcF$ is also of the same order. Finally, we appeal to Corollary~\ref{cor:main-error}.
\end{proof}

To simplify notation, we let $\hat{a}=\check{a}$ in this subsection.

\begin{proof}[Proof of Proposition~\ref{prop:reg-main-error-2}]
By the definition of $\hat{a}$:
$
    0\leq \sup_{f} \Psi_n(\hat{a}, f) \leq \sup_{f} \Psi_n(a_0, f) + \lambda \left(\|a_0\|_{\mcA} - \|\hat{a}\|_{\mcA}\right).
$
Let
$
\delta_{n, \zeta}=\max_{i}\left\{\mcR(\mcF^i) + \mcR(m\circ \mcF^i)\right\} + c_0 \sqrt{\frac{\log(c_1/\zeta)}{n}}
$
for some universal constants $c_0,c_1$. By \cite[Theorems~26.5 and 26.9]{shalev2014understanding}, since $\mcF^i$ is a symmetric class and since $\|a_0\|_{\infty} \leq 1$, with probability $1-\zeta$,
$
    \forall f\in \mcF^i: \left|\Psi_n(a_0, f) - \Psi(a_0, f)\right| \leq \delta_{n,\zeta}.
$
Since $\Psi(a_0, f)=0$ for all $f\in \mcF$, with probability $1-\zeta$,
$$
0\leq \sup_{f} \Psi_n(a_0, f) + \lambda \left(\|a_0\|_{\mcA} - \|\hat{a}\|_{\mcA}\right)  \leq \delta_{n,\zeta} + \lambda \left(\|a_0\|_{\mcA} - \|\hat{a}\|_{\mcA}\right)
$$
which implies
$
    \|\hat{a}\|_{\mcA} \leq \|a_0\|_{\mcA} + \delta_{n,\zeta}/\lambda.
$

Let $B_{n,\lambda,\zeta} = (\|a_0\|_{\mcH} + \delta_{n,\zeta}/\lambda)^2$, $\mcA_B\cdot \mcF^i := \{a\cdot f: a\in \mcA_B, f\in \mcF^i\}$ and
$$
\epsilon_{n,\lambda, \zeta}=\max_{i}\left\{\mcR(\mcA_{B_{n,\lambda,\zeta}}\cdot \mcF^i) + \mcR(m\circ \mcF^i)\right\} + c_0 \sqrt{\frac{\log(c_1/\zeta)}{n}}
$$
for some universal constants $c_0,c_1$. Then again by \cite[Theorems~26.5 and 26.9]{shalev2014understanding},
$
    \forall a\in \mcA_{B_{n,\lambda,\zeta}}, f\in \mcF_U^i: \;\left|\Psi_n(a, f) - \Psi(a, f)\right| \leq \epsilon_{n,\lambda, \zeta}.
$
By a union bound over the $d$ function classes composing $\mcF$, we have that with probability $1-2\zeta$
$
    \sup_{f\in \mcF} \Psi_n(a_0, f) \leq \sup_{f\in \mcF} \Psi(a_0, f) + \delta_{n,\zeta/d} = \delta_{n,\zeta/d}
$
and
$
    \sup_{f\in \mcF} \Psi_n(\hat{a}, f) \geq \sup_{f\in \mcF} \Psi(\hat{a}, f) - \epsilon_{n,\lambda, \zeta/d}.
$

If $\|\hat{a}-a_0\|_2\leq \delta_{n,\zeta}$, the result follows immediately. Thus we consider the case when $\|\hat{a}-a_0\|_2\geq \delta_{n,\zeta}$. Since, by assumption, for any $a\in \mcA_{B}$ with $\|a-a_0\|\geq \delta_{n,\zeta}$ it holds that $\frac{a_0-a}{\|a_0-a\|_2}\in \spanF_{\kappa}(\mcF)$, we have $\frac{a_0 -\hat{a}}{\|a_0 - \hat{a}\|_2}=\sum_{i=1}^p w_i f_i$, with $p<\infty$, $\|w\|_1\leq \kappa$ and $f_i\in \mcF$. Thus
\begin{align*}
    \sup_{f\in \mcF} \Psi(\hat{a}, f) \geq~& \frac{1}{\kappa} \sum_{i=1}^p w_i \Psi(\hat{a}, f_i) = \frac{1}{\kappa} \Psi\left(\hat{a}, \sum_i w_i f_i\right)
    = \frac{1}{\kappa} \frac{1}{\|\hat{a}-a_0\|_2}\Psi(\hat{a}, a_0 - \hat{a})\\
    =~& \frac{1}{\kappa} \frac{1}{\|\hat{a}-a_0\|_2}\E[(a_0(X) - \hat{a}(X))^2]
    = \frac{1}{\kappa} \|\hat{a}-a_0\|_2.
\end{align*}
Combining results, with probability $1-2\zeta$, 
$
    \|\hat{a}-a_0\|_2 \leq \kappa\, \left(\{\epsilon_{n, \lambda, \zeta/d} + \delta_{n,\zeta/d} + \lambda \left(\|a_0\|_{\mcA} - \|\hat{a}\|_{\mcA}\right)\right\}.
$
Moreover, since functions in $\mcA$ and $\mcF$ are bounded in $[-b,b]$, we have that the function $a\cdot f$ is $b$-Lipschitz with respect to the vector of functions $(a, f)$. Thus we can apply a vector version of the contraction inequality \cite{maurer2016vector} to get that
$
\mcR(\mcA_{B_{n,\lambda, z}}\cdot \mcF^i) \leq 2\, \left\{\mcR(\mcA_{B_{n,\lambda, z}}) + \mcR(\mcF^i)\right\}.
$
Finally, since $\mcA$ is star-convex, we have that
$
    \mcR(\mcA_{B_{n,\lambda, z}}) \leq \sqrt{B_{n,\lambda, z}}\,\mcR(\mcA_1),
$
leading to the final bound of
\begin{multline*}
    \|\hat{a}-a_0\|_2 \leq \kappa \left[ 2\left\{\|a_0\|_{\mcA} + \delta_{n,\zeta}/\lambda\right\} \mcR(\mcA_1) + 2\, \max_{i} \left\{\mcR(\mcF^i) + \mcR(m\circ \mcF^i)\right\} \right]\\ + \kappa \left(c_0\sqrt{\frac{\log(c_1\, d/\zeta)}{n}} + \lambda \left(\|a_0\|_{\mcA}-\|\hat{a}\|_{\mcA}\right)\right).
\end{multline*}
Since $\lambda \geq \delta_{n,\zeta}$, we arrive at the desired result.
\end{proof}

\begin{proof}[Proof of Corollary~\ref{cor:sparse-linear-reg-ell1}] We will apply Proposition~\ref{prop:reg-main-error-2} with $\mcA=\{\ldot{\theta}{\cdot}: \|\theta\|\leq B\}$ and norm $\|\ldot{\theta}{\cdot}\|_{\mcA}=\|\theta\|_1$. By standard results on the Rademacher complexity of linear function classes \cite[Lemma~26.11]{shalev2014understanding}, 
$\mcR(\mcA_B)\leq B\sqrt{\frac{2\log(2\, p)}{n}}\max_{x\in \mcX} \|x\|_{\infty}$ and $\mcR(\mcF^i), \mcR(m\circ \mcF^i)\leq \sqrt{\frac{2\log(2)}{n}}\max_{x\in \mcX} \|x\|_{\infty}$ . The latter holds since each $\mcF^i$, and therefore also $m\circ \mcF^i$, contains only two elements; then invoke Masart's lemma.
Thus we can apply Proposition~\ref{prop:reg-main-error-2} with 
$$
\delta_{n,\zeta}:=2\max_{i} \left\{\mcR(\mcF^i) + \mcR(m\circ \mcF^i)\right\} + c_0 \sqrt{\frac{\log(c_1\, p/\zeta)}{n}} = c_2 \sqrt{\frac{\log(c_1\, p/\zeta)}{n}}
$$
for some universal constant $c_2$. Hence, 
we derive from Proposition~\ref{prop:reg-main-error-2} that with probability $1-\zeta$,
$\|\hat{a}\|_{\mcA} \leq \|a_0\|_{\mcA} + \delta_{n,\zeta}/\lambda$ which implies 
$
\|\hat{\theta}\|_1 \leq \|\theta_0\|_1 + \delta_{n,\zeta}/\lambda.
$

We will now verify the span condition of Proposition~\ref{prop:reg-main-error-2} and derive the constant $\kappa$. Consider any $\hat{a}=\ldot{\hat{\theta}}{\cdot}\in \mcA_{B_{n,\lambda,\zeta}}$ and let $\nu=\hat{\theta} - \theta_0$. Then
\begin{align*}
    \delta_{n,\zeta}/\lambda + \|\theta_0\|_1 \geq \|\hat{\theta}\|_1 =\|\theta_0 + \nu\|_1 = \|\theta_0 + \nu_S\|_1+\|\nu_{S^c}\|_1 \geq \|\theta_0\|_1 - \|\nu_S\|_1 + \|\nu_{S^c}\|_1.
\end{align*}
Hence
$
    \|\nu_{S^c}\|_1\leq \|\nu_S\|_1 + \delta_{n,\zeta}/\lambda
$
and $\nu$ lies in the restricted cone for which the restricted eigenvalue of $V$ holds. 
Since $|S|=s$,
\begin{equation*}
    \|\nu\|_1 \leq 2 \|\nu_S\|_1 + \delta_{n,\zeta}/\lambda \leq 2\sqrt{s} \|\nu_S\|_2  + \delta_{n,\zeta}/\lambda \leq 2\sqrt{s}\|\nu\|_2  + \delta_{n,\zeta}/\lambda \leq 2 \sqrt{\frac{s}{\gamma} \nu^\top  V \nu}  + \delta_{n,\zeta}/\lambda.
\end{equation*}
Moreover, observe that
$
    \|\hat{a}-a_0\|_2 = \sqrt{\E[ \ldot{\nu}{x}^2 ]} = \sqrt{\nu^\top  V \nu} 
$. Thus we have
$
    \frac{\hat{a}(x)-a_0(x)}{\|\hat{a}-a_0\|_2} = \sum_{i=1}^p \frac{\nu_i}{\sqrt{\nu^\top V\nu}} x_i.
$
As a consequence, for any $\hat{a}\in \mcA_{B_{n,\lambda,\zeta}}$, we can write $\frac{\hat{a}-a_0}{\|\hat{a}-a_0\|_2}$ as $\sum_{i=1}^p w_i f_i$, with $f_i\in \mcF$ and
\begin{equation*}
    \|w\|_1 = \frac{\|\nu\|_1}{\sqrt{\nu^\top V \nu}} \leq 2\sqrt{\frac{s}{\gamma}} + \frac{\delta_{n,\zeta}}{\lambda} \frac{1}{\|\hat{a}-a_0\|_2}.
\end{equation*}
In summary, $\frac{\hat{a}-a_0}{\|\hat{a}-a_0\|_2} \in \spanF_{\kappa}(\mcF)$ for $\kappa=2\sqrt{\frac{s}{\gamma}} + \frac{\delta_{n,\zeta}}{\lambda} \frac{1}{\|\hat{a}-a_0\|_2}$. 

Next, by the triangle inequality,
\begin{equation*}
    \|a_0\|_{\mcA} - \|\hat{a}\|_{\mcA} = \|\theta_0\|_1 - \|\hat{\theta}\|_1 \leq \|\theta_0-\hat{\theta}\|_1 = \|\nu\|_1 \leq 2 \sqrt{\frac{s}{\gamma} \nu^\top  V \nu}  + \delta_{n,\zeta}/\lambda.
\end{equation*}

Therefore by Proposition~\ref{prop:reg-main-error-2}, as long as $\lambda \geq \delta_{n,\zeta}$
\begin{align*}
    \|\hat{a}-a_0\|_2 \leq~& \left(2\sqrt{\frac{s}{\gamma}} + \frac{\delta_{n,\zeta}}{\lambda} \frac{1}{\|\hat{a}-a_0\|_2}\right)\cdot \left(2 (\|\theta_0\|_{1}+1) \sqrt{\frac{\log(2p)}{n}} + \delta_{n,\zeta} + \lambda \sqrt{\frac{s}{\gamma}} \|\hat{a}-a_0\|_2\right).
\end{align*} 
The right hand side is upper bounded by the sum of the following four terms:
\begin{align*}
    Q_1 :=~& 2\sqrt{\frac{s}{\gamma}} \left(2(\|\theta_0\|_1+1) \sqrt{\frac{\log(2p)}{n}} + \delta_{n,\zeta}\right)\\
    Q_2 :=~& \left(\frac{\delta_{n,\zeta}}{\lambda} \frac{1}{\|\hat{a}-a_0\|_2}\right)\left(2 (\|\theta_0\|_1+1) \sqrt{\frac{\log(2p)}{n}} + \delta_{n,\zeta} \right)\\
    Q_3 :=~& 2 \lambda \frac{s}{\gamma} \|\hat{a}-a_0\|_2,\quad 
    Q_4 := \delta_{n,\zeta} \sqrt{\frac{s}{\gamma}}.
\end{align*}
If $\|\hat{a}-a_0\|_2 \geq \sqrt{\frac{s}{\gamma}} \delta_{n,\zeta}$ and $\lambda \leq \frac{\gamma}{8s}$, then
\begin{align*}
    Q_2 \leq~& 8 \frac{1}{\lambda}\sqrt{\frac{\gamma}{s}}\left(2 (\|\theta_0\|_1+1) \sqrt{\frac{\log(2p)}{n}} + \delta_{n,\zeta} \right),\quad 
    Q_3 \leq \frac{1}{4} \|\hat{a}-a_0\|_2.
\end{align*}
Thus bringing $Q_3$ on the LHS and dividing by $3/4$, we have
\begin{equation*}
    \|\hat{a}-a_0\|_2 \leq \frac{4}{3} (Q_1 + Q_2 + Q_4) = \frac{4}{3}\max\left\{\sqrt{\frac{s}{\gamma}}, \frac{1}{\lambda} \sqrt{\frac{\gamma}{s}}\right\} \left(20\, (\|\theta_0\|_1+1) \sqrt{\frac{\log(2p)}{n}} + 11 \delta_{n,\zeta}\right).
\end{equation*}
On the other hand if $\|\hat{a}-a_0\|_2\leq \sqrt{\frac{s}{\gamma}} \delta_{n,\zeta}$, then the latter inequality trivially holds. Thus it holds in both cases.
\end{proof}

\subsection{Semiparametric inference}

\begin{proof}[Proof of Corollary~\ref{cor:cone}]
    Note that
    $$
   \sqrt{\tilde{B}}\geq \|\theta\|_1=\|\theta_* + \nu\|_1 = \|(\theta_* + \nu)_T\|_1 + \|\nu_{T^c}\|_1 \geq \|\theta_*\|_1 - \|\nu_T\|_1 + \|\nu_{T^c}\|_1.
    $$
    So if 
    $
\sqrt{\tilde{B}}=\kappa+\|\theta_*\|_1
    $
    then $\|\nu_{T^c}\|_1 \leq \|\nu_T\|_1 + \kappa$ as desired. What remains to characterize is 
        $$
    \|\nu\|_1\leq \|\theta\|_1+\|\theta_*\|_1\leq \sqrt{\tilde{B}}+\|\theta_*\|_1=\kappa+2\|\theta_*\|_1 =\sqrt{B}.
    $$
\end{proof}

\begin{proof}[Proof of Proposition~\ref{prop:cone}]
    Let $\mcF$ be the subset of $(\mcA-a_*)_B$ with $\nu'\E[\phi(X)\phi(X)']\nu\leq \delta^2$. We first show that $\mcF \subset \mcG$ where
    $
    \mcG=\left\{\nu'\phi(X): \nu\in \R^p, \|\nu\|^2_1\leq B, \|\nu\|_1 \leq \frac{(\mu+1)\sqrt{s}}{\sqrt{\gamma}} \delta + \kappa\right\}.
    $
    In particular, since $ \nu' (\gamma I )\nu\leq \nu'\E[\phi(X)\phi(X)']\nu$
\begin{align*}
    \|\nu\|_1 &= \|\nu_{T}\|_1 + \|\nu_{T^c}\|_1 
    \leq (\mu + 1) \|\nu_T\|_1 + \kappa 
    \leq (\mu+1)\sqrt{s}\|\nu_T\|_2 + \kappa \\
    &\leq (\mu + 1)\sqrt{s} \|\nu\|_2 + \kappa
    \leq \frac{(\mu+1)\sqrt{s}}{\sqrt{\gamma}} \sqrt{\nu'\E[\phi(X)\phi(X)']\nu} + \kappa.
\end{align*}
Next we characterize the empirical metric entropy $H_n(\mcG,\epsilon)=\log\left[N\{\epsilon; \mcG;\ell_2\}\right]$. By \cite[Theorem 3]{zhang2002covering}, 
$$
H_n( \{\nu'\phi(X): \nu\in \R^p, \|\nu\|_1\leq a_1, \|\phi(X)\|_{\infty}\leq a_2 \} ,\epsilon) =O\left\{\frac{a_1^2a_2^2}{\epsilon^2}\log(p)\right\}
$$
which in our case implies 
$
H_n(\mcG,\epsilon)
=O\left[\left\{\frac{(\mu+1)\sqrt{s}}{\sqrt{\gamma}} \delta + \kappa\right\}^2\frac{1}{\epsilon^2}\log(p)\right]
.$
Note that the same holds replacing $\left\{\frac{(\mu+1)\sqrt{s}}{\sqrt{\gamma}} \delta + \kappa\right\}^2$ with $B$. 

Finally, we turn to~\eqref{eqn:metric-entropy-critical}:
$
 \int_{\frac{\delta^2}{2b}}^{\delta} \sqrt{\frac{\log\left[N\{\epsilon; \mcG;\ell_2\}\right]}{n}} d\epsilon \leq  \frac{\delta^2}{64b}.
$
The left hand side simplifies, for some universal constant $c_0$, as
\begin{align*}
     &\int_{\frac{\delta^2}{2b}}^{\delta} \sqrt{\frac{\log\left[N\{\epsilon; \mcG;\ell_2\}\right]}{n}} d\epsilon 
     \leq c_0 \left\{\frac{\mu \sqrt{s}}{\sqrt{\gamma}} \delta + \kappa\right\} \sqrt{\frac{\log(p)}{n}}  \int_{\frac{\delta^2}{2b}}^{\delta} \frac{1}{\epsilon} d\epsilon \\
     &= c_0 \left\{\frac{\mu \sqrt{s}}{\sqrt{\gamma}} \delta + \kappa\right\} \sqrt{\frac{\log(p)}{n}}  \log(2b/\delta) 
     =c_0 \frac{\mu \sqrt{s}}{\sqrt{\gamma}} \delta \sqrt{\frac{\log(p)}{n}}  \log(2b/\delta) + c_0  \kappa \sqrt{\frac{\log(p)}{n}}  \log(2b/\delta).
\end{align*}
The former term is bounded by $\delta^2/64b$ when
$
c_0 b\frac{\mu \sqrt{s}}{\sqrt{\gamma}}\sqrt{\frac{\log(p)}{n}}  \log(2b) \leq \frac{\delta}{\log(1/\delta)}.
$
The latter term is bounded by $\delta^2/64b$ when
$
\left\{c_0 b \kappa \sqrt{\frac{\log(p)}{n}}  \log(2b)\right\}^{1/2} \leq \frac{\delta}{\sqrt{ \log(1/\delta)}}.
$
An analogous argument gives the condition
$
\left\{c_0 b \sqrt{B} \sqrt{\frac{\log(p)}{n}}  \log(2b)\right\}^{1/2} \leq \frac{\delta}{\sqrt{ \log(1/\delta)}}.
$
Hence it suffices to take
$$
\delta_n=c_1  \cdot \log(n) \log(B) B^{1/2} \cdot \max\left\{ \mu \sqrt{\frac{s\log(p)}{n\gamma}}, \kappa^{1/2} \left(\frac{\log(p)}{n}\right)^{1/4}, \left(\frac{\log(p)}{n}\right)^{1/4} \right\}.
$$ 
The middle term dominates the final term.
\end{proof}

\begin{proof}[Proof of Corollary~\ref{cor:g_cr}]
    Since $\kappa=0$, we ignore the second term in Proposition~\ref{prop:cone}.
\end{proof}

\begin{proof}[Proof of Corollary~\ref{cor:m_cr}]
    The proof is similar to the proof of Corollary~\ref{cor:g_cr}. The key observation is that, for this function space, $m(z;g-g_*)=(\theta-\theta_*)(m\circ \phi)(x)=\nu(x)'\psi(x).$ 
\end{proof}

\begin{proof}[Proof of Corollary~\ref{cor:a_cr}]
 By Corollary~\ref{cor:sparse-linear-reg-ell1} and $\lambda=\gamma/8s$, with probability $1-\zeta$, 
$$
    \|\hat{\theta}\|_1 \leq \|\theta_*\|_1 + c_2 \frac{s}{\gamma} \sqrt{\frac{\log(p/\zeta)}{n}}=\sqrt{\tilde{B}}=\|\theta_*\|_1+\kappa.
$$
Hence by Corollary~\ref{cor:cone} we consider the restricted cone with $\mu=1$, $\kappa=c_2 \frac{s}{\gamma} \sqrt{\frac{\log(p/\zeta)}{n}}$, and $\sqrt{B}=\kappa+2\|\theta_*\|_1$. By Proposition~\ref{prop:cone},
 \begin{align*}
     \delta_n&=c_3  \cdot \log(n) \log(B) B^{1/2} \cdot \max\left\{ \sqrt{\frac{s\log(p)}{n\gamma}}, \sqrt{\frac{s}{\gamma}}\left(\frac{\log(p/\zeta)}{n}\right)^{1/4} \left(\frac{\log(p)}{n}\right)^{1/4}\right\} \\
     &\leq c_3  \cdot \log(n) \log(B) B^{1/2} \cdot \sqrt{\frac{s\log(p/\zeta)}{n\gamma}}.
 \end{align*}
 Finally, note that if $s=o\left\{\sqrt{\frac{n}{\log(p)}}\right\}$ then for $n$ sufficiently large,
 $
\sqrt{B}=\kappa+2\|\theta_*\|_1=c_2 \frac{s}{\gamma} \sqrt{\frac{\log(p/\zeta)}{n}}+2\|\theta_*\|_1 \leq 3\|\theta_*\|_1.
 $
\end{proof}

\subsection{Computational analysis}

\begin{lemma}[OFTRL; c.f. Proposition 7 of \cite{syrgkanis2015fast}]
Consider an online linear optimization algorithm over a convex strategy space $S$ and consider the OFTRL algorithm with a $1$-strongly convex regularizer with respect to some norm $\|\cdot\|$ on space $S$:
$
    f_t = \argmin_{f \in S} f^\top \left(\sum_{\tau\leq t} \ell_{\tau} + \ell_t\right) + \frac{1}{\eta} R(f).
$
Let $\|\cdot\|_*$ denote the dual norm of $\|\cdot\|$ and $R=\sup_{f\in S} R(f) - \inf_{f\in S} R(f)$. Then for any $f^*\in S$:
\begin{equation*}
    \sum_{t=1}^T (f_t-f^*)^\top \ell_t \leq \frac{R}{\eta} + \eta \sum_{t=1}^T \|\ell_t - \ell_{t-1}\|_* - \frac{1}{4\eta} \sum_{t=1}^T \|f_t - f_{t-1}\|^2.
\end{equation*}
\end{lemma}

\begin{proof}
\cite[Proposition 7]{syrgkanis2015fast} holds verbatim for any convex strategy space $S$ and not necessarily the simplex.
\end{proof}

\begin{lemma}[Approximate equilibrium; c.f. Lemma 4 of \cite{Rakhlin2013} and Theorem~25 of \cite{syrgkanis2015fast}]\label{lemma:appendix-minimax}
Consider a minimax objective $\min_{\theta\in \Theta} \max_{w\in W} \ell(\theta, w)$. Assume that $\Theta, W$ are convex sets, $\ell(\theta, w)$ is convex in $\theta$ for every $w$, and $\ell(\theta, w)$ is concave in $\theta$ for any $w$. Let $\|\cdot\|_\Theta$ and $\|\cdot\|_W$ be arbitrary norms in the corresponding spaces. Moreover, suppose that the following Lipschitzness properties are satisfied:
\begin{align*}
    \forall \theta\in \Theta, w, w'\in W: \left\|\nabla_{\theta}\ell(\theta, w)  - \nabla_{\theta}\ell(\theta, w')\right\|_{\Theta, *} \leq L \|w-w'\|_W\\
    \forall w\in W, \theta, \theta'\in \Theta: \left\|\nabla_{w}\ell(\theta, w)  - \nabla_{w}\ell(\theta', w)\right\|_{W, *} \leq L \|\theta-\theta'\|_\Theta
\end{align*}
where $\|\cdot\|_{\Theta, *}$ and $\|\cdot\|_{W, *}$ correspond to the dual norms of $\|\cdot\|_{\Theta}, \|\cdot\|_W$. Consider the algorithm where at each iteration each player updates their strategy based on:
\begin{align*}
    \theta_{t+1} =~& \argmin_{\theta\in \Theta} \theta^\top \left(\sum_{\tau\leq t} \nabla_{\theta}\ell(\theta_\tau, w_\tau) + \nabla_{\theta} \ell(\theta_t, w_t)\right) + \frac{1}{\eta} R_{\min}(\theta)\\
    w_{t+1} =~& \argmax_{w\in W} w^T \left(\sum_{\tau \leq t} \nabla_{w} \ell(\theta_\tau, w_\tau) + \nabla_w \ell(\theta_t, w_t)\right) - \frac{1}{\eta} R_{\max}(w)
\end{align*}
such that $R_{\min}$ is $1$-strongly convex in the set $\Theta$ with respect to norm $\|\cdot\|_\Theta$ and $R_{\max}$ is $1$-strongly convex in the set $W$ with respect to norm $\|\cdot\|_W$ and with any step-size $\eta \leq \frac{1}{4L}$. Then the parameters $\bar{\theta} = \frac{1}{T} \sum_{t=1}^T \theta_t$ and $\bar{w}=\frac{1}{T}\sum_{t=1}^T w_t$ correspond to an $\frac{2 R_*}{\eta \cdot T}$-approximate equilibrium and hence $\bar{\theta}$ is a $\frac{4 R_*}{\eta T}$-approximate solution to the minimax objective, where
\begin{equation*}
    R_* := \max\left\{ \sup_{\theta\in \Theta} R_{\min}(\theta) - \inf_{\theta\in \Theta} R_{\min}(\theta), \sup_{w\in W} R_{\max}(w)-\inf_{w\in W} R_{\max}(w)\right\}.
\end{equation*}
\end{lemma}
\begin{proof}
The lemma is essentially a re-statement of \cite[Theorem~25]{syrgkanis2015fast}, specialized to the case of the OFTRL algorithm and to the case of a two-player convex-concave zero-sum game. That result, in turn, adapts \cite[Lemma~4]{Rakhlin2013}. If the sum of regrets of players is at most $\epsilon$, then the pair of average solutions corresponds to an $\epsilon$-equilibrium \cite{FREUND199979}.
\end{proof}

\begin{proof}[Proof of Proposition~\ref{prop:sparse-optimization-ell1}]
    Let $R_E(x)=\sum_{i=1}^{2p} x_i \log(x_i)$. For the space $\Theta:=\{\rho\in \R^{2p}: \rho \geq 0, \|\rho\|_1\leq B\}$, the entropic regularizer is $\frac{1}{B}$-strongly convex with respect to the $\ell_1$ norm and hence we can set $R_{\min}(\rho)=B\, R_{E}(\rho)$. Similarly, for the space $W:=\{w\in \R^{2p}: w\geq 0, \|w\|_1=1\}$, the entropic regularizer is $1$-strongly convex with respect to the $\ell_1$ norm and thus we can set $R_{\max}(w)=R_E(w)$. For these regularizers, the update rules can be verified to have the closed form solution provided in Proposition~\ref{prop:sparse-optimization-ell1} by writing the Lagrangian of each OFTRL optimization problem and invoking strong duality. 
  Next, we verify the Lipschitz conditions. Since the dual of the $\ell_1$ norm is the $\ell_{\infty}$ norm,  $\nabla_{\rho}\ell(\rho, w) = \E_n[VV^\top] w + \lambda$ so
\begin{align*}
    \left\|\nabla_{\rho}\ell(\rho, w) - \nabla_{\rho}\ell(\rho, w')\right\|_{\infty} =\|\E_n[VV^\top] (w-w')\|_{\infty} \leq \|\E_n[VV^\top]\|_{\infty} \|w-w'\|_1\\
    \left\|\nabla_{w}\ell(\rho, w) - \nabla_{w}\ell(\rho', w)\right\|_{\infty} =\|\E_n[VV^\top] (\rho-\rho')\|_{\infty} \leq \|\E_n[VV^\top]\|_{\infty} \|\rho-\rho'\|_1.
\end{align*}
Therefore $L=\|\E_n[VV^\top]\|_{\infty}$. 
Since
\begin{align*}
    \sup_{\rho\in \Theta} B\, R_{E}(\rho) - \inf_{\rho\in \Theta} B\, R_E(\rho) =~& B^2 \log(B\vee 1) + B \log(2p),\quad 
    \sup_{w\in W} R_{E}(w) - \inf_{w\in W} R_E(w) = \log(2p)
\end{align*}
we can take $R_*=B^2 \log(B\vee 1) + (B+1) \log(2p)$. If $\eta = \frac{1}{4\|\E_n[VV^\top]\|_{\infty}}$, then after $T$ iterations, $\bar{\theta}=\bar{\rho}^+-\bar{\rho}^-$ is an $\epsilon(T)$-approximate solution to the minimax problem, with \begin{equation*}
\epsilon(T)=16\|\E_n[VV^\top]\|_{\infty} \frac{4B^2 \log(B\vee 1) + (B+1) \log(2p)}{T}.
\end{equation*}
Finally appeal to Lemma~\ref{lemma:appendix-minimax}.
\end{proof}

\newpage

\end{document}